\def\BibTeX{{\rm B\kern-.05em{\sc i\kern-.025em b}\kern-.08em
		T\kern-.1667em\lower.7ex\hbox{E}\kern-.125emX}}
\newacronym{mimo}{mMIMO}{massive multiple-input multiple-output}
\newacronym{mse}{MSE}{mean-squared-error}
\newacronym{nmse}{NMSE}{normalized MSE}
\newacronym{mmse}{MMSE}{minimum mean squared error}
\newacronym{csi}{CSI}{channel state information}
\newacronym{awgn}{AWGN}{additive white Gaussian noise}
\newacronym{cf}{CF}{cell-free}
\newacronym{tdd}{TDD}{time-division duplex}
\newacronym{dtdd}{DTDD}{dynamic time-division duplex}
\newacronym{fd}{FD}{full-duplex}
\newacronym{cpu}{CPU}{central processing unit}
\newacronym{roc}{ROC}{receiver operating characteristic}
\newacronym{snr}{SNR}{signal-to-noise ratio}
\newacronym{sinr}{SINR}{signal-to-inteference-plus-noise ratio}
\newacronym{scnr}{SCNR}{signal-to-clutter-plus-noise ratio}
\newacronym{cn}{CSCN}{circularly symmetric complex normal}
\newacronym{se}{SE}{spectral efficiency}
\newacronym{ap}{AP}{access-point}
\newacronym[longplural={user equipment}]{ue}{UE}{user equipment}
\newacronym{hd}{HD}{half-duplex}
\newacronym{ul}{UL}{uplink}
\newacronym{dl}{DL}{downlink}
\newacronym{iid}{i.i.d}{independent and identically distributed}
\newacronym{inai}{InAI}{inter-AP interference}
\newacronym{inui}{InUI}{inter-user interference}
\newacronym{isac}{ISAC}{integrated sensing and communication}
\newacronym{cli}{CLI}{cross-link interference}
\newacronym{los}{LoS}{line-of-sight}
\newacronym{nlos}{NLoS}{non line-of-sight}
\newacronym{mui}{MUI}{multi-user interference}
\newacronym{ula}{ULA}{uniform linear array}
\newacronym{rcs}{RCS}{radar cross-section}
\newacronym{ci}{CI}{clutter-interference}
\newacronym{mrc}{MRC}{maximal ratio combining}
\newacronym{mrt}{MRT}{maximal ratio transmission}
\newacronym{uaf}{UatF}{use-and-then-forget}
\newacronym{glrt}{GLRT}{generalized likelihood ratio test}
\newacronym{ml}{ML}{ maximum likelihood}
\newacronym{pdf}{PDF}{probability density function}
\newacronym{aoa}{AoA}{angle of arrival}
\newacronym{aod}{AoD}{angle of departure}
\newacronym{rzf}{RZF}{regularized zero-forcing}
\newacronym{si}{SI}{self-interference}
\newacronym{bcrlb}{BCRB}{Bayesian CRLB}
\newacronym{crlb}{CRLB}{Cram\'er-Rao lower bound}
\newacronym{bim}{BIM}{Bayesian information matrix}
\newacronym{fim}{FIM}{Fisher information matrix}
\newacronym{wrt}{w.r.t.}{with respect to}
\newacronym{pod}{PoD}{probability of detection}
\newacronym{pfa}{PoFA}{probability of false alarm}
\newacronym{pomd}{PoMD}{probability of miss detection}
\newacronym{cdf}{CDF}{cumulative distribution function}
\newacronym{llr}{LLR}{log-likelihood ratio}
\newacronym{mle}{MLE}{maximum likelihood estimate}
\newacronym{qos}{QoS}{quality-of-service}
\newacronym{tui}{TUI}{treating uplink signal as interference}
\newacronym{ser}{SER}{symbol error rate}
\newacronym{tri}{TrI}{target reflected interference}
\crefname{figure}{Fig.}{Figs.}
\crefname{section}{Sec.}{Secs.}
\newenvironment{myfigure*}
{\begin{figure*}[!t]\centering}
	{\hrule\end{figure*}}
\newcommand{\E}[1]{{ \mathbb{E} \big[#1  \big] }}
\newcommand{\Elr}[1]{{ \mathbb{E} \left[#1  \right] }}
\newcommand{\Elrc}[2]{{ \mathbb{E}_{{#2}} \left[#1  \right] }}
\newcommand{\bpr}[1]{{\left( #1 \right)}}
\newcommand{\bsr}[1]{{\left[ #1 \right]}}
\newcommand{\blkd}[1]{\mathtt{Blkd}{\left[ #1 \right]}}
\newcommand{\blkdm}[1]{\mathtt{Blkd}{\left[ #1 \right]_{m=1}^{M_{\mathsf{u}}}}}
\newcommand{\blkdT}[1]{\mathtt{Blkd}{\left[ #1 \right]_{\tau=1}^{T}}}
\newcommand{\bbr}[1]{{\left\{ #1 \right\}}}
\newcommand{\tr}[1]{{\mathtt{Tr} \left\{ #1 \right\}}}
\newcommand{\diag}[1]{{\mathtt{Diag} \left( #1 \right)}}
\newcommand{\norm}[1]{{ \left\Vert #1 \right\Vert_{2} }}
\newcommand{\snorm}[1]{{ \left\Vert #1 \right\Vert_{2}^2 }}
\newcommand{\vect}[1]{{\mathbf{ #1}}}
\newcommand{\abslr}[1]{{\left\lvert {#1}\right\rvert}}
\newcommand{\abs}[1]{{\big\lvert {#1}\big\rvert}}
\newcommand{\cn}[1]{{\mathcal{CN}\left({#1}\right)}}
\newcommand{\mbbC}[1]{{\in\mathbb{C}^{{#1}}}}
\newcommand{\argmax}[1]{{\underset{{#1}}{\mathtt{arg\,max}}}}
\newcommand{\maximize}[1]{{\underset{{#1}}{\mathtt{max}}}}
\newcommand{\minimize}[1]{{\underset{{#1}}{\mathtt{min}}}}
\newcommand{\mt}[1]{\mathtt{#1}}
\newcommand{\p}[1]{p\left({#1}\right)}
\newcommand{\Cov}[1]{\mathtt{Cov}\left\{{#1}\right\}}
\newcommand{\invCov}[1]{\mathtt{Cov}^{-1}\left\{{#1}\right\}}
\newcommand{\Rank}[1]{\mathtt{Rank}\left\{{#1}\right\}}
\newcommand\gldec[2]{
	\underset{#1}{\overset{#2}{\gtrless}}
}
\newcommand{\bigO}[1]{{\mathcal{O}\left( #1 \right)}}
\def\Au{{\mathcal{A}_{\mathtt{u}}}}
\def\Ad{{\mathcal{A}_{\mathtt{d}}}}
\def\Uu{{\mathcal{U}_{\mathtt{u}}}}
\def\Ud{{\mathcal{U}_{\mathtt{d}}}}
\def\A{{\mathcal{A}}}
\def\Hzero{{\mathcal{H}_{0}}}
\def\Hone{{\mathcal{H}_{1}}}
\def\Lm{\mathcal{L}_{m}}
\def\L{\mathcal{L}}
\def\Nvar{N_{0}} 
\def\vf{{\mathbf{f}}}
\def\vp{\mathbf{p}}
\def\vP{\mathbf{P}}
\def\vs{\mathbf{s}}
\def\vh{{\mathbf{h}}}
\def\vH{{\mathbf{H}}}
\def\vx{{\mathbf{x}}}
\def\vy{{\mathbf{y}}}
\def\vI{{\mathbf{I}}}
\def\vZ{{\mathbf{0}}}
\def\vV{{\mathbf{V}}}
\def\ar{{\mathbf{a}_{\mathtt{r}}}}
\def\at{{\mathbf{a}_{\mathtt{t}}}}
\def\mB{{\mathbf{B}}}
\def\mG{{\mathbf{G}}}
\def\mR{{\mathbf{R}}}
\def\bgamma{\boldsymbol{\gamma}}
\def\bSigma{\boldsymbol{\Sigma}}
\def\bPi{\boldsymbol{\Pi}}
\def\bXi{\boldsymbol{\Xi}}
\def\btheta{\boldsymbol{\theta}}
\def\bsymbol{\mathbf{s}_{\mt{u}}}
\def\bA{\mathbf{A}}
\def\bUpsilon{\boldsymbol{\Upsilon}}
\def\bOmega{\boldsymbol{\Omega}}
\def\bOmegam{\boldsymbol{\Omega}_{m}}
\def\mtu{{\mt{u}}}
\def\mtd{{\mt{d}}}
\def\mts{{\mt{s}}}
\def\mtc{{\mt{c}}}
\def\mto{{\mt{o}}}
\def\mtT{{\mt{T}}}
\def\mtD{{\mt{D}}}
\def\mtCPU{{\mt{CPU}}}
\def\SINRuk{{\mt{SINR}_{\mtu, k}}}
\def\SINRdn{{\mt{SINR}_{\mtd, n}}}
\def\SCNRs{{\mt{SCNR}_{\mts}}}
\def\SCNRs{{\mt{SCNR}_{\mts}}}
\def\Rcomm{{\mt{R}_{\mt{com.}}}}
\def\Ed{{\mathtt{p}_{\mt{d}}}}
\def\bEu{\boldsymbol{\mathtt{P}}_{\mt{u}}}
\newcommand{\Eu}[1]{\mathtt{p}_{\mt{u}, {#1}}}
\def\Ku{K_{\mtu}}
\def\Kd{K_{\mtd}}
\def\Md{M_{\mtd}}
\def\Mu{M_{\mtu}}
\def\ttui{\tt{TUI}}
\def\tmle{\tt{MLE}}
\newtheorem{thm}{Theorem}
\newtheorem{lem}[thm]{Lemma}
\newtheorem{prop}[thm]{Proposition}
\newtheorem{cor}[thm]{Corollary}
\newtheorem{rem}{Remark}
\newtheorem{obs}{Observation}
\newcommand{\edit}[1]{{\color{black}{#1}}}
\let\mybibitem\bibitem
\renewcommand{\bibitem}[1]
{\ifstrequal{#1}{Jeong_TVT}{\color{black}\mybibitem{#1}}
{\ifstrequal{#1}{Com_Sense_ISAC}{\color{black}\mybibitem{#1}}
{\ifstrequal{#1}{3gpp2010further}{\color{black}\mybibitem{#1}}
{\ifstrequal{#1}{richards2010principles}{\color{black}\mybibitem{#1}}
{\ifstrequal{#1}{Marzetta_Larsson_Yang_Ngo_2016}{\color{black}\mybibitem{#1}}
{\color{black}\mybibitem{#1}}
}}}}}
\def\BibTeX{{\rm B\kern-.05em{\sc i\kern-.025em b}\kern-.08em
		T\kern-.1667em\lower.7ex\hbox{E}\kern-.125emX}}
\begin{document}
\title{Joint Sensing and Bi-Directional Communication with Dynamic TDD Enabled Cell-Free MIMO}
\author{Anubhab Chowdhury, Sai Subramanyam Thoota, and Erik G. Larsson,~\textit{Fellow,~IEEE}
\thanks{A part of this work has been presented at the International Conference on Communications (ICC), 2025~\cite{Anubhab_Sai_Erik_ICC_2025}.}
\thanks{Anubhab Chowdhury and Erik G. Larsson are with the Dept. of Electrical Engineering (ISY), Linköping University, 58183 Linköping, Sweden. Emails:\{anubhab.chowdhury, erik.g.larsson\}@liu.se}
\thanks{ Sai Subramanyam Thoota was with Linköping University, Dept. of Electrical Engineering (ISY), 58183 Linköping, Sweden, when this work was performed. He is now with Nokia Standards, Bengaluru, India, 560045. Emails: sai.subramanyam.thoota@liu.se / sai.subramanyam\_thoota@nokia.com} 
\thanks{This work was supported in part by the KAW foundation, ELLIIT, and the Swedish Research Council (VR).}
}
\maketitle

\begin{abstract}
This paper studies integrated sensing and communication~(ISAC) with dynamic time division duplex~(DTDD) cell-free~(CF) massive multiple-input multiple-output~(mMIMO) systems. DTDD enables the CF mMIMO system to concurrently serve both uplink~(UL) and downlink~(DL) users with spatially separated \emph{half-duplex~(HD)} access points~(APs) using the same time-frequency resources. Further, to facilitate ISAC, the UL APs are utilized for both UL data and target echo reception, while the DL APs jointly transmit the precoded DL data streams and target signal. In this context, we present centralized and distributed generalized likelihood-ratio tests~(GLRTs) for target detection treating UL users' signals as sensing interference.  We then quantify the optimality and complexity trade-off between distributed and centralized GLRTs and benchmark the respective estimators with the Bayesian Cram\'er-Rao lower bound for target radar-cross section~(RCS). Then, we present a unified framework for joint UL users' data detection and RCS estimation. Next, for communication, we derive the signal-to-noise-plus-interference~(SINR) optimal combiner accounting for the cross-link and radar interference for UL data processing. In DL, we use regularized zero-forcing for the users and propose two types of precoders for the target: one ``user-centric" that nullifies the interference caused by the target signal to the DL users and one ``target-centric" based on the dominant eigenvector of the composite channel between the target and the APs. Finally, numerical studies corroborate with our theoretical findings and reveal that the \emph{GLRT is robust to inter-AP interference, and DTDD doubles the $90\%$-likely sum UL-DL SE compared to traditional TDD-based CF-mMIMO ISAC systems}; while using HD hardware.  
\end{abstract}

\begin{IEEEkeywords}
Bayesian Cram\'er-Rao lower bound, Cell-Free, Dynamic TDD, GLRT, ISAC.
\end{IEEEkeywords}

\section{Introduction}
\IEEEPARstart{P}{hysical-layer} models of the next generation wireless systems are required to support uniformly high \gls{se} for communication users and also provide robust sensing information for several smart applications, thereby offering efficient reuse of limited spectral resources. Thus, \gls{isac} has been incorporated as one of the key features of $6$G physical layer~\cite{ISAC_6G_TWC, Christos_TCOM_2020}, and coupling this with \gls{cf}
\gls{mimo}~\cite{cell_free_small_cells} can offer improvement in communication as well as sensing performance compared to conventional cellular \gls{isac} systems~\cite{Alkhateeb_Asilomar, Demir_Globecom, Huang_TVT}. On a parallel avenue, it has been shown that \gls{dtdd} with \gls{hd} \glspl{ap} offers superior performance compared to \gls{tdd} and even \gls{fd} \glspl{ap} in the presence of asymmetric and heterogeneous
\gls{ul}-\gls{dl} data traffic~\cite{Martin_ICASSP, Martin_Asilomar_2023, Martin_Erik_SPAWC_2024, Anubhab_Chandra_TCOM_22, Anubhab_Chandra_TCOM_2024}. This paper \textit{explores the potential of \gls{isac} for \gls{dtdd} \gls{cf} \gls{mimo}      
systems, addresses several of its signal processing challenges, and underpins the trade-offs between achievable sum \gls{ul}-\gls{dl} \gls{se} and the target detection performance in the presence of \gls{cli}: \gls{inai} and \gls{inui}.\footnote{In \gls{dtdd}, a subset of \glspl{ap} receives \gls{ul} users' data, while the complementary subset transmits to the \gls{dl} users; thus incurring additional interference between the \gls{ul} and \gls{dl} \glspl{ap}~(\gls{inai}) and \gls{ul} and \gls{dl} users~(\gls{inui}). We refer to such interference as \gls{cli}.}}

\subsection{Literature Review} 
Several studies have been conducted in recent years separately on~\gls{isac},~\gls{dtdd}, and \gls{cf}~\gls{mimo}; however, we only pertain to the literature that connects these physical layer methods. Firstly, the key research focus for \gls{isac} has been on waveform design in a single cell mono-static~(i.e., co-located transceivers) environment with the baseline duplexing scheme being \gls{fd}~\cite{Eldar_FD_ISAC, BOttersten_JSTSP_2021, Buzzi_Asilomar_2019, Deepak_Mishra_SPAWC}. In such settings, the authors typically analyze the effects of sensing signal on the achievable \gls{dl} rate and design suitable precoders for joint \gls{dl} communication and sensing. Contrarily, bi-static sensing, with a pair of spatially separated transmit and receive antenna arrays, is more lucrative than mono-static sensing, as it dispenses the need for \gls{fd} capability of the transceivers; whose performance has been analyzed in~\cite{Bi_Static_ISAC_ICC}. This, in turn, obviates the additional signal processing overhead required for self-interference cancellation. Further, multi-static sensing, which consists of multiple non-colocated transmit and receive \glspl{ap}, inherently offers superior macro-diversity and high average \gls{snr} across the geographical area~\cite{Wei_WCL_CRAN_Multi_Static, Seamless_ISAC_TWC}.

Next, considering \gls{cf}-\gls{isac}, the authors in~\cite{Alkhateeb_Asilomar} proposed max-min fair beamforming optimization for \gls{dl} communication and \gls{ul} sensing with distributed \glspl{ap} operating in \gls{tdd}. In~\cite{Demir_Globecom}, the authors proposed a power allocation algorithm to maximize the sensing \gls{snr} with the constraint that a minimal \gls{dl} \gls{sinr} for each user is guaranteed. Parallelly,~\cite{Huang_TVT} aims to minimize the total transmit power while guaranteeing a minimum per-user \gls{sinr} and \gls{crlb} to estimate the target location. Further,\cite{Sofie_Pollin_ICC_2022} provides a bi-static target detection framework for \gls{cf} systems considering a single transmit \gls{ap}. Recently, the authors in~\cite{Ozlem_Lett} present an expository study on distributed target detection considering multi-static sensing. \edit{The work in~\cite{Com_Sense_ISAC} investigates the transmit beamforming optimization for distributed \gls{isac}. Recently,~\cite{Jeong_TVT} considers a perceptive mobile network implementation of \gls{tdd} \gls{cf} \gls{isac}.}

However, we note that all the previous studies considered \edit{\gls{tdd}~\cite{cell_free_small_cells}} as the baseline duplexing scheme for bi/multi-static sensing in \gls{cf}, implying that the \glspl{ap} are utilized either for \gls{dl} data transmission or \gls{ul} sensing; restricting the full resource utilization that can be effectuated if concurrent target detection and \gls{ul} communication can be enabled. In other words, joint \gls{ul} signal estimation and target detection, and the effect of \gls{tri} on \gls{ul} \gls{se}, within the \gls{cf} paradigm has not yet been investigated.

Now, for concurrent \gls{ul}-\gls{dl} communication, one can consider either of the two duplexing schemes: \gls{fd} or \gls{dtdd}. As alluded to earlier, self-interference is a fundamental bottleneck for \gls{fd}; and it incurs additional power-hungry hardware to cancel the self-interference below the noise floor at the receive antenna arrays.
Contrariwise, \gls{dtdd} has been actively considered since LTE Rel.~$12$, therein referred to as enhanced interference mitigation and traffic adaptation, as a candidate duplexing scheme for $5$G and beyond~\cite{Kim_DTDD_Survey}. The main reasons are: (i) \gls{dtdd} obviates the need for self-interference cancellation and attains bi-directional communication with \gls{hd} hardware, making it a low-overhead, feasible alternative to traditional \gls{fd}; (ii) \gls{dtdd} enjoys the benefits of static \gls{tdd} such as reciprocity based beamforming which is critical for scalability; and (iii) offers efficient time-frequency resource management for instantaneous \gls{ul}-\gls{dl} traffic variations~\cite{Mukherjee_Globecom}. 

To this end, recent studies in \gls{dtdd}-enabled \gls{cf} systems show that \gls{hd} \glspl{ap} with appropriate \gls{ul}-\gls{dl} scheduling and power control offer superior performance compared to cellular and even \gls{cf} \edit{\gls{fd} systems}~\cite{Anubhab_Chandra_TCOM_22, Anubhab_Chandra_TCOM_2024, Martin_Asilomar_2023, Hien_JSAC}. 
Here, we note that \gls{dtdd} entails two \glspl{cli}, viz. \gls{inai} and \gls{inui}; nonetheless, these \glspl{cli} can be mitigated via \gls{ap} scheduling~\cite{Zhu_Duplex_Mode_Lett}, power control~\cite{Anubhab_Chandra_TCOM_22, Anubhab_Chandra_TCOM_2024, Hien_JSAC}, and dedicated inter-\gls{ap} and inter-user channel estimation techniques~\cite{Martin_Asilomar_2023, Martin_ICASSP, Martin_Erik_SPAWC_2024}. \edit{However, above studies primarily focused solely on the \gls{ul}-\gls{dl} communication aspect, and thus, integrating sensing capability with \gls{dtdd} \gls{cf} is yet unexplored. Due to the presence of \glspl{cli}, interference from \gls{ul} users' signals for target detection, and \gls{tri} in \gls{ul} \gls{se}; it is not apparent how much benefit can be gleaned using \gls{dtdd} for \gls{isac}. Hence, it is interesting and pertinent to \emph{investigate the mutual trade-off between target detection with/without these \glspl{cli} and quantify the gain in sum \gls{ul}-\gls{dl} \gls{se} because of efficient spectral resource utilization.}}

Motivated by the relevance of \gls{dtdd} and \gls{isac} in \gls{cf} \gls{mimo}, this paper characterizes the dependence of various \gls{cli} on the overall sensing and bi-directional communication performance. To the best of our knowledge, this is the first paper to report the benefits of \gls{dtdd} for \gls{cf} \gls{isac}.
 
\subsection{Contributions} 
We propose a novel framework of \gls{cf} \gls{isac} with \gls{dtdd} for sensing and concurrent \gls{ul}-\gls{dl} communication. We examine: 
$(i)$ target detection in the presence of \gls{ul} users' signals and \gls{inai}; and $(ii)$ bi-directional communication; incorporating the effects of \gls{inai} and \gls{tri} in \gls{ul} \gls{se}; and \gls{inui} in \gls{dl} \gls{se}. 
Our key contributions are\footnote{\edit{In contrast to~\cite{Anubhab_Sai_Erik_ICC_2025}, the results herein cover a  general framework for, including several additional developments, such as \gls{bcrlb}, radar \gls{scnr}, and distributed \gls{glrt};   \cite{Anubhab_Sai_Erik_ICC_2025} can be subsumed as a special case.}}:
\begin{enumerate}
    \item We first derive centralized and distributed estimators for target \gls{rcs} and develop respective \glspl{glrt} treating \gls{ul} users' signals as part of the equivalent sensing noise; which also includes \gls{inai} and \gls{awgn}. 
    We call this scheme \gls{tui}~(see~\cref{sec: TUI}).  
    \begin{itemize}    
    \item We analytically quantify the loss in detection performance due to distributed processing and highlight the conditions under which the distributed scheme attains centralized performance~(see~\cref{thm: centralize_vs_distributed}).  
    \item We then derive~\gls{bcrlb} to benchmark the performance of proposed centralized and distributed \gls{rcs} estimators~(see~\cref{lemm: BCRLB} and~\cref{lem: dist_est}).
    \end{itemize}
    \item We next develop a framework for joint \gls{ul} data and~\gls{rcs} estimation, validate its robustness to \gls{inai}, and compare it with the baseline \gls{tdd}~(see~\cref{sec: Joint} and~\cref{fig: ser_set1}).
    \item Next, we present sum \gls{ul}-\gls{dl} \gls{se}, considering \gls{sinr} optimal combiner in \gls{ul} and \gls{rzf} precoder for \gls{dl} users. For the target, we propose two 
     precoders: one ``{user-centric}" that nullifies the interference caused by the target signal to the \gls{dl} users and one ``{target-centric}" based on the dominant eigenvector of the composite channel between the target and the \glspl{ap}. The proposed precoders are agnostic of the underlying channel models between the target and the \gls{ul}/\gls{dl} \glspl{ap}; thus, generalized compared to the model-dependent precoding presented in~\cite{Hien_ISAC_Globcom_2023, Ozlem_Lett}~(see~\cref{sec: SE and SINR}). 
 \item We then derive average \edit{sensing \gls{scnr}} that captures the effects of target \gls{rcs}, choice of \gls{dl} precoder, \gls{ap} scheduling, and \gls{inai} on the sensing \gls{se}~(see~\cref{lemma: sensing_SINR}). 
    \item Finally, we develop a low complexity \gls{ap} scheduling algorithm based on the local \gls{ul}/\gls{dl} traffic load and target-location, which solves an exponentially complex \gls{ap} mode selection problem in polynomial time~(see~\cref{sec: Scheduling}). 
\end{enumerate}

Finally, we present extensive numerical results to validate our theoretical findings. 
The key takeaway is that \gls{dtdd}-enabled \gls{cf} is a promising solution for \gls{isac} for \emph{two} fundamental reasons: (a) target detection performance is similar to the case when the received signal is not corrupted by the \gls{ul} signals, because of the robustness of the \gls{glrt} to residual \gls{inai}~(see~\cref{fig: TDD_DTDD_RoC}); (b) communication performance is far more superior~(see~\cref{fig: SE1}) compared to the baseline \gls{tdd} based multi-static sensing system~\cite{Demir_Globecom, Hien_ISAC_Globcom_2023, Bi_Static_ISAC_ICC}  due to bidirectional transmission/reception capability of the overall system. 

\subsubsection*{Notation} 
 For $\vect{A} \mbbC{M\times N}$ and $\vect{a} \mbbC{N}$; $\bsr{\vect{A}}_{:,n}$, $\bsr{\vect{A}}_{m,n}$, and $\bsr{\vect{a}}_{n}$ denote their $n$th column, $\bpr{m,n}$th entry, and $n$th element, respectively.
 $\diag{x_{1},x_{2},\ldots, x_{N}}$ denotes a diagonal matrix with $n$th diagonal entry being $x_{n}$. $\blkdm{\vect{A}_m}\mbbC{N\Mu\times N\Mu}$ is a block-diagonal matrix with $m$th block being $ \vect{A}_{m}\mbbC{N\times N}, m=1,2, \ldots, \Mu$. $\Rank{.}$, $\tr{.}$, and $``\otimes"$ indicate rank, trace, and Kronecker product, respectively.  Operators $'\cup'$, $'\cap'$, $'\backslash'$, and $'\abs{.}'$ over sets~(denoted by calligraphic letters) indicate union, intersection, set difference, and cardinality, respectively. \edit{$\E{\vx}$ denotes expectation \gls{wrt} $\vx$. 
 $\Elrc{\cdot \vert\vx}{\vy}$ denotes expectation \gls{wrt}  $\vy$, conditioned on $\vx$.}
$\vx\sim\cn{\vZ_N,\Cov{\vx}}$ indicates that 
 $\vx$ is \gls{cn} with zero mean and covariance matrix $\Cov{\vx}\mbbC{N\times N}$. Other recurrent notations used in the paper are summarized in~\cref{tables: Notation}.

 \begin{table}[!]
 	\bgroup
 	\def\arraystretch{1.35} \setlength{\arrayrulewidth}{0.01pt}
 \begin{center}
 \begin{tabular}{| c | c |}
 \hline
 \rowcolor[HTML]{9AFF99} 
 Notation  & Definition\\
 \hline
 $\vh_{mk}$ & Channel between $m$th AP and $k$th user\\ 
 \hline
 $\vx_{\mtd, j}$ & \Gls{dl} precoded data and sensing signal at $j$th \gls{ap}\\
 \hline
 $\mtT_{m}^{ \ttui}$ & \Gls{llr} at the $m$th AP\\
 \hline 
 $\mtT^{ \ttui}$ & Global \gls{llr} at the \gls{cpu}\\
 \hline
 $\mtT_{\mtCPU}^{ \ttui}$ &  Fused \Gls{llr} at the \gls{cpu}~($\sum\nolimits_{m\in\Au}\mtT_{m}^{ \ttui}$)\\
 \hline
  $\hat{\bgamma}_{m}^{ \ttui}$ &  Estimate of \gls{rcs} at $m$th \gls{ap} \\
 \hline
 $\hat{\bgamma}^{ \ttui}$ & Estimate of \gls{rcs} at the \gls{cpu} \\ 
 \hline
 $\dot{\mR}_{mj}$ & Channel between $j$th \gls{dl} \gls{ap}-target-$m$th \gls{ul} \gls{ap} \\
 \hline
 $\gamma_{mj}$ & \gls{rcs} of target ($j$th \gls{dl} \gls{ap}-target-$m$th \gls{ul} \gls{ap})\\
 \hline
 $\ddot{{\mR}}_{m}[\tau]$ & Measurement matrix at $m$th \gls{ap} for target detection \\
 \hline
 $\ddot{{\mR}}[\tau]$ & Measurement matrix at the \gls{cpu} for target detection \\
 \hline
 $\tilde{\mG}_{mj}$ & Residual \gls{inai} channel $j$th \gls{dl} \gls{ap}-$m$th \gls{ul} \gls{ap}\\
 \hline
 $\bSigma_{\mts}[\tau]$ & Equivalent sensing noise covariance at the \gls{cpu}\\
 \hline
 $\bSigma_{\mts, m}[\tau]$ & Equivalent sensing noise covariance at the $m$th \gls{ap}\\
 \hline
 $\widetilde{\btheta}_{m}$ & Concatenated \gls{ul} signals and target \gls{rcs} at $m$th \gls{ap}\\
 \hline
 $ \btheta_{\tt CPU}$ & Concatenated \gls{ul} signals and target \gls{rcs} at the \gls{cpu}\\
 \hline
 \end{tabular}
 \caption{Definitions of relevant symbols used throughout the paper.}
 \label{tables: Notation}
 \end{center}
\egroup
 \end{table}

\section{System Model}\label{sec: system_model}
 We consider a \gls{dtdd} \gls{cf} system where \gls{hd} \gls{ul} and \gls{dl} \glspl{ap}, each equipped with $N$ antennas, jointly serve single antenna \gls{ul} and \gls{dl} communication users. Further, the \gls{dl} \glspl{ap} transmit additional sensing signals whose reflected echoes are used by the \gls{ul} \glspl{ap} to detect the presence of a target. 
 Further, the \glspl{ap} are fully synchronized\footnote{Although phase synchronization of distributed \glspl{ap} is a critical issue for multi-static sensing, recent advancements for phase calibration of distributed antennas made \gls{cf} practically viable technology~\cite{Erik_TSP_Massive_Synch}.} and connected to the \gls{cpu} via error-free front-haul links. Next, let $\Au$ and $\Ad$ be the index sets of \glspl{ap} scheduled in \gls{ul} and \gls{dl}, respectively, with $\Au\cup\Ad=\A$, where $\A$ is the set of all \gls{hd} \glspl{ap}; $\abs{\A}=M$. Also, let $\abs{\Au}=M_{\mtu}$, and $\abs{\Ad}=M_{\mtd}$.\footnote{The analysis presented for the target detection assumes a scheduled set of \glspl{ap} for easy understanding. How these \glspl{ap} are selected will be discussed later in the manuscript.} 
 Similarly, $\Uu$ and $\Ud$ are the index sets of \gls{ul} and \gls{dl} users, respectively; with $\abs{\Uu\cup\Ud}=K$, $\abs{\Uu}=K_{\mtu}$, and $\abs{\Ud}=K_{\mtd}$.  The users' locations and corresponding modes/data demands~(i.e., \gls{ul}/\gls{dl}) can be made available at the \gls{cpu} via initial channel sounding reference signals. For an illustration of the overall system see~\cref{fig: system_model}, where we also indicate the additional \glspl{cli}. 
	\begin{figure}[t!]
		\centering
\includegraphics[width=0.67\linewidth]{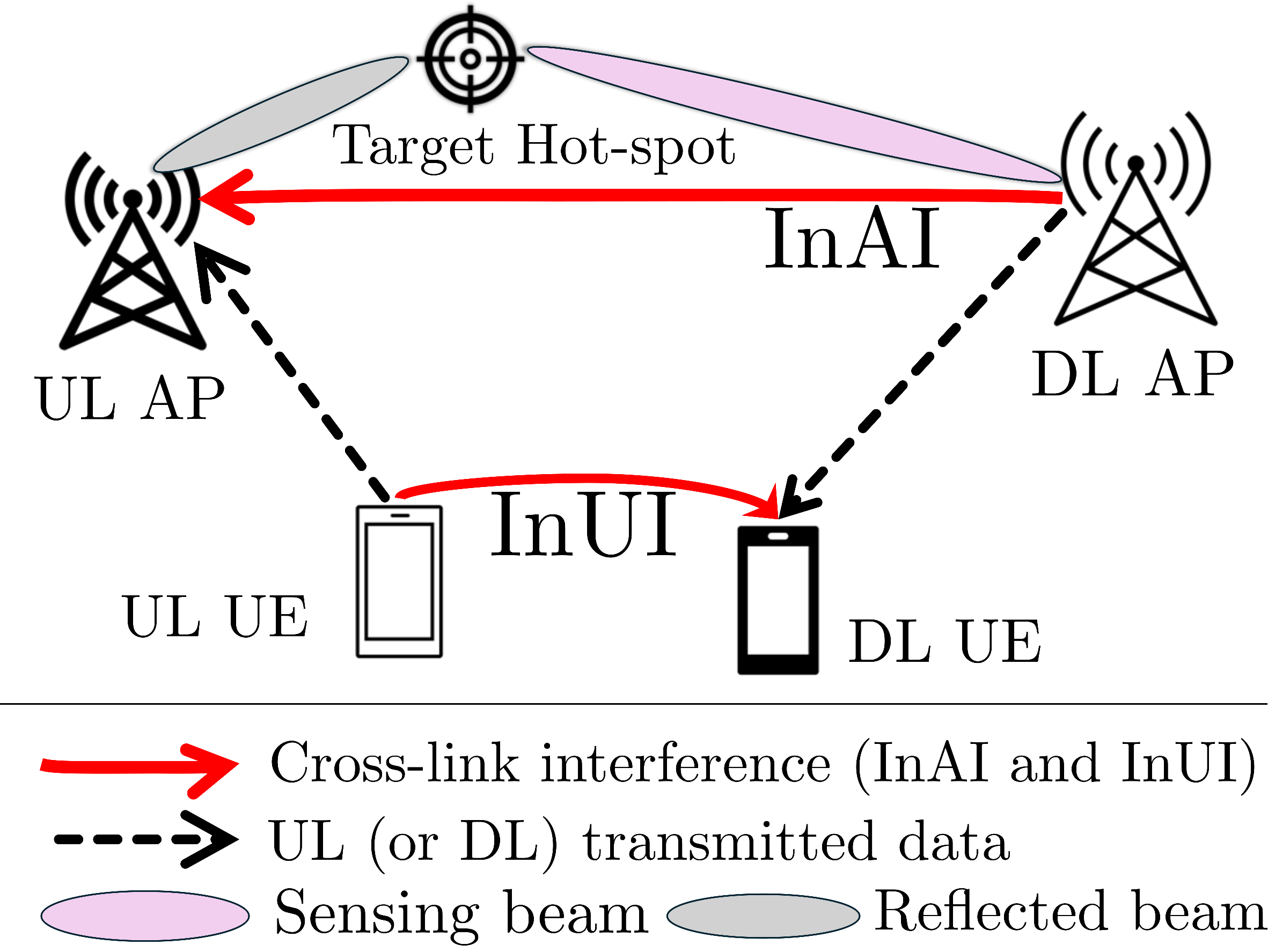}
		\caption{\edit{System model for \gls{dtdd} \gls{cf} \gls{mimo} with \gls{isac} capability. The \gls{cli} 
 (viz. \gls{inai} and \gls{inui}) are denoted by red arrows.}}\label{fig: system_model}
	\end{figure}

\subsubsection{Communication Channel Model}\label{sec:channel_models}	

The \gls{ul} channel from the $k$th user to the $m$th \gls{ap} is denoted by $\vh_{mk}\in\mathbb{C}^N$, and \gls{dl} channels are assumed to be reciprocal of the \gls{ul} channels.
Further, $\vh_{mk}=\sqrt{\beta_{mk}}\vf_{mk} \in \mathbb{C}^N$, where $\beta_{mk}>0$ captures the effect of large-scale fading~(i.e., path loss and shadowing) which remains unchanged over several channel coherence intervals and is known to the \glspl{ap} and the \gls{cpu}~\cite{cell_free_small_cells, making_cell_free_TWC}; and $\vf_{mk}$ is the block-fading channel. \footnote{\edit{The subsequent analysis depends on the abstract channel vector $\vf_{mk}$, but not on the specific modeling of it.}}
Also, let $\vh_{k}=\bsr{\vh_{1k}^{T}, \vh_{2k}^{T}, \ldots, \vh_{M_{\mtu}k}^{T}}^T\mbbC{M_{\mtu}N}$ be the concatenated channel vector from the $k$th user to \gls{ul} \glspl{ap} at the \gls{cpu}.

Next, we note that the inter-\gls{ap} channels remain constant for several coherence intervals, whose \gls{csi} can be made available to the \gls{cpu} before the data transmission phase. However, the inter-\gls{ap} \gls{csi} can be erroneous, and we denote the residual \gls{inai} channel from the $j$th \gls{dl} \gls{ap} to the $m$th \gls{ul} \gls{ap} by $\tilde{\mG}_{mj}$.   Finally, let $\mathtt{g}_{nk}$ denote the channel between $k$th \gls{ul} user and the $n$th \gls{dl} user.

\subsubsection{Sensing Channel Model}\label{subsection: sensing_channel_model}	
 Let $\mR_{mj}\triangleq \gamma_{mj}\dot{\mR}_{mj}$ denote the composite channel from the $j$th \gls{dl} \gls{ap} to the $m$th \gls{ul} \gls{ap} via the sensing zone. Here, $\gamma_{mj}\sim\cn{0,\sigma_{mj}}$ denotes the \gls{rcs} associated with the target.\footnote{We note that depending on the physical properties of the target, \gls{rcs} can also be deterministic. However, analysis with a prior generalizes to the case when $\gamma_{mj}$ is a deterministic unknown. Hence, the analysis in~\cite{Anubhab_Sai_Erik_ICC_2025} can be derived as a special case of the results presented in the sequel.}
 Further, $\dot{\mR}_{mj}$ encases the effect of round-trip path loss between the transmit and receive \glspl{ap} via the target and known at the \glspl{ap} and the \gls{cpu}.\footnote{The \gls{dl} \glspl{ap} transmit the intended target signals to a predetermined hot-spot area during a given observation window, and thus, the associated \glspl{aoa} and \glspl{aod} along with the path loss coefficients~(i.e., constituents of $\dot{\mR}_{mj}$) can be pre-calibrated~\cite{Ozlem_Lett}.} \edit{Specifically, $\dot{\mR}_{mj}$ is characterized by the \glspl{aod}~($\phi_{j}$) and \glspl{aoa}~($\phi_m$) as
$\dot{\mR}_{mj}=\sqrt{\beta_{\mts, mj}}\ar(\phi_{m})\at(\phi_{j})^{T},$
where $\at(\phi_{j}) = \bsr{1, e^\bbr{\iota\pi\sin(\phi_{j})}, \ldots, e^\bbr{(N-1)\iota\pi\sin(\phi_{j})}}^{T}$ and $\ar(\phi_{m}) =\bsr{1, e^\bbr{\iota\pi\sin(\phi_{m})}, \ldots, e^\bbr{(N-1)\iota\pi\sin(\phi_{m})}}^{T}$.
Next, the composite two-way path loss $\beta_{\mts, mj}$ is modeled as~\cite{Demir_Globecom, richards2010principles} $\beta_{\mts, mj}={\lambda_{\mathtt c}^2}/{\bpr{4\pi}^3d_{m}^2 d_{j}^2},$ where $d_j$ is the distance between the transmitting AP $j$ and the target location, $d_m$ is the distance between the receiving $m$th AP and the target location, and $\lambda_{\mathtt c}$ is the carrier wavelength. Finally, $\iota=\sqrt{-1}$.}
 
Next, we describe the signal model for \gls{ul}-\gls{dl} communications and the sensing, based on which~\gls{glrt} and corresponding~\glspl{sinr} will be derived.

\subsection{UL and DL Signal Model}
	We start with the \gls{dl} signaling model for convenience. The transmitted signal from the $j$th \gls{dl} \gls{ap} can be expressed as:
 \begin{align}
	&\vx_{\mtd, j}=\sqrt{\Ed}\vP_{j}{\bPi}_{j}^{\frac{1}{2}}\vs_{\mtd}\notag\\&=\sqrt{\Ed}\left\{\sum\nolimits_{n\in\Ud}\sqrt{\pi_{\mtd, jn}}\vp_{\mtd, jn}s_{\mtd, n}+\sqrt{\pi_{\mts, j}}\vp_{\mts, j}s_{\mts}\right\},\label{eq: dl_tx}
\end{align}
 where $\Ed$ is the total radiated power in the \gls{dl} and $\vP_{j}\triangleq\bsr{\vp_{\mtd, j1}, \ldots, \vp_{\mtd, jK_{\mtd}}, \vp_{\mts, j}}$ is the \gls{dl} precoding vectors for users and the target. The coefficients $\pi_{\mtd, jn}$ and $\pi_{\mts, j}$ denote the fraction of $\Ed$ dedicated to the $n$th \gls{dl} user and the sensing zone, respectively, with ${\bPi}_{j}\triangleq\diag{{{\pi_{\mtd, j1}},{\pi_{\mtd, j2}},\ldots, {\pi_{\mtd, jK_{\mtd}}}, {\pi_{\mts, j}}}}$.
Mutually independent scalars $s_{\mtd, n}\sim\cn{0,1}$ and $s_{\mts}\sim\cn{0,1}$ denote the intended signals for the $n$th \gls{dl} user and the sensing zone, respectively, with $\vs_{\mtd}=\bsr{s_{\mtd, 1}, s_{\mtd, 2}, \ldots, s_{\mtd, K_{\mtd}}, s_{\mts}}^{T}$. Thus, the symbols satisfy $\E{\abs{s_{\mtd, n}}^2}=\E{\abs{s_{\mts}}^2}=1$, $\E{s_{\mtd, n}s_{\mts}^*}=0$, and $\E{s_{\mtd, n}s_{\mtd, n'}^*}=0, \forall n\neq n'; n, n'\in\Ud$. The transmitted composite \gls{dl} signal power satisfies $\Elr{\snorm{\vx_{\mtd, j}}}\leq \Ed$. Now, the received signal at the $n$th \gls{dl} user can be expressed as:
	\begin{multline}
		{r}_{\mtd, n}
=\sum\nolimits_{j\in\Ad}\sqrt{\Ed\pi_{\mtd, jn}}\vh_{jn}^{T}\vp_{\mtd, jn}s_{\mtd, n}\\+\sum\nolimits_{j\in\Ad}\sum\nolimits_{n'\in\Ud\backslash\{n\}}\sqrt{\Ed\pi_{\mtd, jn'}}\vh_{jn}^{T}\vp_{\mtd,jn'}s_{\mtd, n'}\\+\sum\limits_{j\in\Ad}\sqrt{\Ed\pi_{\mts, j}}\vh_{ jn}^{T}\vp_{\mts, j}s_{\mts}+\sum\limits_{k\in\Uu}\sqrt{\Eu{k}}\mathtt{g}_{nk}s_{\mtu, k}+w_{\mtd, n}.\label{eq: receive_signal_ue}
	\end{multline}
	The second and third terms in~\eqref{eq: receive_signal_ue} correspond to \gls{dl}-\gls{mui} and the interference due to the sensing signal, respectively. The fourth term corresponds to \gls{inui}, which is an artifact of \gls{dtdd}, where $s_{\mtu, k}\sim\cn{0,1}$ is \gls{ul} signal transmitted by the $k$th user, with $\E{\abs{s_{\mtu, k}}^2}=1$ and $\E{{s_{\mtu, k}s_{\mtu, k'}^*}}=0$ for all $k\neq k'$ and $k, k'\in\Uu$. Here, ${\Eu{k}}$ denotes the \gls{ul} \edit{transmit} power of the $k$th user. Finally, $w_{\mtd, n}\sim\cn{0,\Nvar}$ is the receiver \gls{awgn} at the $n$th \gls{dl} user. 
	
	Next, the signal received at the $m$th \gls{ul} \gls{ap} is
	\begin{align}
		&\edit{\mathbf{r}_{\mtu, m}}=\sum\nolimits_{k\in\Uu}\sqrt{\Eu{k}}\vh_{mk}s_{\mtu, k}\notag\\&+\sum\nolimits_{j\in\Ad}\gamma_{mj}\dot{\mR}_{mj}\vx_{\mtd, j}+\sum\nolimits_{j\in\Ad}\tilde{\mG}_{mj}\vx_{\mtd, j}+\vect{w}_{\mtu, m},\label{eq: uplink_signal_m}
	\end{align}
	where $\vect{w}_{\mtu, m}\sim\cn{\vZ_{N}, \Nvar\vI_{N}}$ is the receiver \gls{awgn} at the $m$th \gls{ul} \gls{ap}. In~\eqref{eq: uplink_signal_m}, the first term corresponds to the communication signals from the users, and the second term is the reflected echo from the target. The third term is \gls{inai}, which contains both sensing and \gls{dl} communication signals. \edit{The effects of scatterers will later be discussed in~\cref{{sec: scatter_model}}.}
 
At this point, we emphasize that \textit{the key difference between the proposed framework and existing \gls{tdd}-based \gls{isac} systems~\cite{Alkhateeb_Asilomar, Demir_Globecom, Christos_TCOM_2020} is that we utilize the same set of \glspl{ap} for both \gls{ul} data processing and target detection; while the later considered exclusive \glspl{ap} for target detection. Further, \gls{dtdd} enables bidirectional communication and sensing with \gls{hd} \glspl{ap}; while in \gls{tdd} \gls{ul} communication and \gls{dl} communication-plus-sensing occur in orthogonal time slots.}

\section{Centralized and Distributed GLRTs with TUI}\label{sec: TUI}
We observe that the received signal in~\eqref{eq: uplink_signal_m} involves two unknowns: \gls{ul} users' data and the \gls{rcs} associated with the target. We first present a target detection scheme that treats the transmitted \gls{ul} users' signals as a part of the equivalent sensing noise; which also includes \gls{inai} and the \gls{awgn}. 

Let $T$ be the length~(in number of transmitted symbols per channel use) of the observation window for target detection. \edit{The target detection scheme can sweep over a large area by beamforming sensing signals to different hot-spots in each interval of length $T$.} For a given slot, indexed by $\tau$, the received signal at the $m$th \gls{ul} \gls{ap} can be expressed as~(see \eqref{eq: uplink_signal_m} for reference):
\begin{align}
		&\mathbf{r}_{\mtu, m}[\tau]=\ddot{{\mR}}_{m}[\tau]\bgamma_{m}+\vect{w}_{\mtu, m}^{\mts}[\tau],\label{eq: uplink_signal_mthAP}
	\end{align}
	where 
 \begin{align}
 \ddot{{\mR}}_{m}[\tau]\triangleq\bsr{\dot{\mR}_{m1}\vx_{\mtd, 1}[\tau],\ldots, \dot{\mR}_{m M_{\mtd}}\vx_{\mtd,  M_{\mtd}}[\tau]}.\label{eq: R_ddt_m}
 \end{align}
 Further, $\bgamma_{m}\triangleq\bsr{\gamma_{m1}, \ldots, \gamma_{m M_{\mtd}}}^{T}$, and the equivalent sensing noise $\vect{w}_{\mtu, m}^{\mts}[\tau]$ is
	\begin{align*}
		\vect{w}_{\mtu, m}^{\mts}[\tau]=\hspace*{-1.7mm}\sum\limits_{k\in\Uu}\sqrt{\Eu{k}}\vh_{mk}s_{\mtu, k}[\tau]+\hspace*{-1.7mm}\sum\limits_{j\in\Ad}\tilde{\mG}_{mj}\vx_{\mtd, j}[\tau]+\vect{w}_{\mtu, m}[\tau].
	\end{align*}
 
Given this framework, we develop two schemes: $(i)$ a distributed scheme wherein \gls{ul} \glspl{ap} evaluate their \glspl{llr} locally and relay them to the \gls{cpu} to obtain a global \gls{llr}; which incurs minimal front-haul overhead. $(ii)$ a centralized scheme, where the \gls{ul} \glspl{ap} relay the $N$-dimensional received vector; and then \gls{cpu} finds the global \gls{llr} based on $NT\Mu$ dimensional received vector. 
 
\subsection{Distributed Scheme}
 Let $\widetilde{\mathbf{r}}_{\mtu, m}\mbbC{NT}$ and $\widetilde{\vect{w}}_{\mtu, m}^{\mts}\mbbC{NT}$ be the concatenated received signal and the effective noise at the end of the observation window. Then, we have the following two hypotheses:
	\begin{subequations}
		\begin{align}
			&\Hzero: \widetilde{\mathbf{r}}_{\mtu, m}=\widetilde{\vect{w}}_{\mtu, m}^{\mts}, \\&
			\Hone: \widetilde{\mathbf{r}}_{\mtu, m}=\widetilde{\bgamma}_{m}+\widetilde{\vect{w}}_{\mtu, m}^{\mts},
		\end{align}
	\end{subequations}
	where 
 \begin{multline}
\widetilde{\bgamma}_{m}\triangleq\bsr{\bpr{\ddot{{\mR}}_{m}[1]\bgamma_{m}}^{T}, \ldots, \bpr{\ddot{{\mR}}_{m}[T]\bgamma_{m}}^{T}}^{T}\\=\bsr{\bpr{\ddot{{\mR}}_{m}[1]}^{T}, \ldots, \bpr{\ddot{{\mR}}_{m}[T]}^{T}}^{T}\bgamma_{m}. 
	\end{multline}
 Here, the null hypothesis $\Hzero$ represents the case where there is no target in the sensing area, and the alternative hypothesis $\Hone$ indicates the presence of a target in the sensing area. Hence, the \gls{rcs} estimation and hypothesis testing problem becomes
	\begin{align}
\bbr{\hat{\bgamma}_{m},\hat{\mathcal{H}}}=\argmax{\bbr{\bgamma_{m},\mathcal{H}}} \quad\p{\bgamma_{m},\mathcal{H}\big\vert  \widetilde{\mathbf{r}}_{\mtu, m}}.
	\end{align}
	Correspondingly the \gls{llr} at the $m$th \gls{ap} can be written as $\ln\Lm$, with 
	\begin{align}
\Lm=\dfrac{\maximize{\bgamma_{m}}~\p{ \widetilde{\mathbf{r}}_{\mtu, m}\vert \bgamma_{m}, \Hone}\p{\bgamma_{m}\vert \Hone}}{\p{ \widetilde{\mathbf{r}}_{\mtu, m}\vert \Hzero}}.\label{eq: GLRT}
	\end{align}
To derive $\Lm$ in closed-form, we need to evaluate the covariance of $\vect{w}_{\mtu, m}^{\mts}[\tau]$, denoted by $\bSigma_{{\mts, m}}[\tau]$.\footnote{\edit{To be specific, $\bSigma_{{\mts, m}}[\tau]$ is a conditional covariance, conditioned on precoded data~($\vx_{\mtd, j}[\tau]$) and availability of \gls{ul} user to \gls{ap} channels.}} We can show that 
\begin{align}
\bSigma_{{\mts, m}}[\tau]=\sum\nolimits_{k\in\Uu}{\Eu{k}}\vh_{mk}\vh_{mk}^{H}+\bA_{m}[\tau],\label{eq: Sigma_m_APs}
\end{align} 
with $\bA_{m}[\tau]$ being $\bpr{\sum\nolimits_{j\in\Ad}\zeta_{mj}\snorm{\vx_{\mtd, j}[\tau]}+\Nvar}\vI_{N}$, where $\zeta_{mj}$ captures the effects of both the large-scale fading and the power of the residual \gls{inai}~\cite{Bai_Sabharwal_TWC_2017}. For proof, refer to~\cref{app: noise covariance}.
We now have the following proposition. 
 \begin{prop}\label{prop: GLRT1}
 The estimate of the target \gls{rcs} at the $m$th \gls{ap}, denoted by $\hat{\bgamma}_{m}^{\ttui}$, can be evaluated as:
	\begin{align}
\hat{\bgamma}_{m}^{ \ttui}=\bUpsilon_{m}^{ \ttui}\bpr{\sum\nolimits_{\tau=1}^{T}\ddot{{\mR}}_{m}^{H}[\tau]\bSigma_{{\mts, m}}^{-1}[\tau]\mathbf{r}_{\mtu, m}[\tau]},\label{eq: gamma_estimate_MMSE} 
	\end{align}
 with $\bUpsilon_{m}^{ \ttui}\triangleq\left(\sum\nolimits_{\tau=1}^{T}\ddot{{\mR}}_{m}^{H}[\tau]\bSigma_{{\mts, m}}^{-1}[\tau]\ddot{{\mR}}_{m}[\tau]+\bSigma_{\bgamma_{m}}^{-1}\right)^{-1}$, $\bSigma_{\bgamma_{m}}\triangleq\Cov{\bgamma_{m}}=\diag{\sigma_{m1}, \sigma_{m2}, \ldots, \sigma_{m\Md}}$. Hence, the test metric at the $m$th \gls{ap}, denoted by  $\mtT_{m}^{ \ttui}$, is
		\begin{align}
		\mtT_{m}^{ \ttui}=\bpr{\sum\nolimits_{\tau=1}^{T}\ddot{{\mR}}_{m}^{H}[\tau]\bSigma_{{\mts, m}}^{-1}[\tau]\mathbf{r}_{\mtu, m}[\tau]}^{H}\hat{\bgamma}_{m}^{ \ttui}.\label{eq: T_m_GLRT_prior}
		\end{align}
  The \gls{cpu} combines the local \glspl{llr} to find a global test for target detection as follows:
	\begin{align}
\mtT_{\mtCPU}^{ \ttui}=\sum\nolimits_{m\in\Au}\mtT_{m}^{ \ttui}\gldec{\Hzero}{\Hone}\ln{\lambda_{m}},\label{eq: GLRT_dist}
	\end{align}
 where $\lambda_{m}$ is the detection threshold for a given \gls{pfa}.
 \end{prop}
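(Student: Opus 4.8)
The plan is to recognize the inner maximization in~\eqref{eq: GLRT} as a Bayesian/MAP estimate of $\bgamma_m$ under a Gaussian prior, and then substitute the maximizer back to collapse the ratio into a closed-form metric. First I would stack the $T$ slots, writing $\widetilde{\bgamma}_m=\bXi_m\bgamma_m$ with $\bXi_m\triangleq\bsr{\ddot{\mR}_m[1]^T,\ldots,\ddot{\mR}_m[T]^T}^T$. Since the \gls{ul} data symbols, the precoded \gls{dl}/sensing symbols, and the receiver \gls{awgn} are i.i.d.\ across slots, the effective noise $\widetilde{\vect{w}}_{\mtu,m}^{\mts}$ is uncorrelated across $\tau$ (conditioned on the channels and the transmitted precoded data), so $\widetilde{\bSigma}_{\mts,m}=\blkdT{\bSigma_{\mts,m}[\tau]}$ with the per-slot blocks given by~\eqref{eq: Sigma_m_APs}. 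Thus, under $\Hone$ and conditioned on $\bgamma_m$, $\widetilde{\mathbf{r}}_{\mtu,m}\sim\cn{\bXi_m\bgamma_m,\widetilde{\bSigma}_{\mts,m}}$; under $\Hzero$, $\widetilde{\mathbf{r}}_{\mtu,m}\sim\cn{\vZ,\widetilde{\bSigma}_{\mts,m}}$; and the prior is $\bgamma_m\sim\cn{\vZ,\bSigma_{\bgamma_m}}$.

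Next I would perform the inner maximization. Dropping additive constants, the log-numerator of~\eqref{eq: GLRT} is the strictly concave quadratic $-\bpr{\widetilde{\mathbf{r}}_{\mtu,m}-\bXi_m\bgamma_m}^H\widetilde{\bSigma}_{\mts,m}^{-1}\bpr{\widetilde{\mathbf{r}}_{\mtu,m}-\bXi_m\bgamma_m}-\bgamma_m^H\bSigma_{\bgamma_m}^{-1}\bgamma_m$ in $\bgamma_m$. Zeroing its Wirtinger gradient with respect to $\bgamma_m^*$ gives the normal equations $\bpr{\bXi_m^H\widetilde{\bSigma}_{\mts,m}^{-1}\bXi_m+\bSigma_{\bgamma_m}^{-1}}\hat{\bgamma}_m=\bXi_m^H\widetilde{\bSigma}_{\mts,m}^{-1}\widetilde{\mathbf{r}}_{\mtu,m}$. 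The block-diagonality of $\widetilde{\bSigma}_{\mts,m}$ together with the block structure of $\bXi_m$ turns both sides into sums over $\tau$, namely $\bXi_m^H\widetilde{\bSigma}_{\mts,m}^{-1}\bXi_m=\sum_{\tau=1}^T\ddot{\mR}_m^H[\tau]\bSigma_{\mts,m}^{-1}[\tau]\ddot{\mR}_m[\tau]$ and $\bXi_m^H\widetilde{\bSigma}_{\mts,m}^{-1}\widetilde{\mathbf{r}}_{\mtu,m}=\sum_{\tau=1}^T\ddot{\mR}_m^H[\tau]\bSigma_{\mts,m}^{-1}[\tau]\mathbf{r}_{\mtu,m}[\tau]$, which reproduces $\hat{\bgamma}_m^{\ttui}$ and $\bUpsilon_m^{\ttui}$ in~\eqref{eq: gamma_estimate_MMSE}.

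I would then recover the test metric by back-substitution. With $\mathbf{b}_m\triangleq\bXi_m^H\widetilde{\bSigma}_{\mts,m}^{-1}\widetilde{\mathbf{r}}_{\mtu,m}=\sum_{\tau=1}^T\ddot{\mR}_m^H[\tau]\bSigma_{\mts,m}^{-1}[\tau]\mathbf{r}_{\mtu,m}[\tau]$, the log-LLR equals the concave objective evaluated at $\hat{\bgamma}_m=\bUpsilon_m^{\ttui}\mathbf{b}_m$ minus the $\Hzero$ log-likelihood $-\widetilde{\mathbf{r}}_{\mtu,m}^H\widetilde{\bSigma}_{\mts,m}^{-1}\widetilde{\mathbf{r}}_{\mtu,m}$. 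Expanding the quadratic, the data-only term $\widetilde{\mathbf{r}}_{\mtu,m}^H\widetilde{\bSigma}_{\mts,m}^{-1}\widetilde{\mathbf{r}}_{\mtu,m}$ cancels against the denominator, and using $\bpr{\bUpsilon_m^{\ttui}}^{-1}\hat{\bgamma}_m=\mathbf{b}_m$ collapses the surviving cross- and quadratic terms to the single bilinear form $\mathbf{b}_m^H\hat{\bgamma}_m$. The prior-normalizer term is data-independent and is absorbed into the detection threshold; hence $\ln\Lm$ equals $\mathbf{b}_m^H\hat{\bgamma}_m^{\ttui}=\mtT_m^{\ttui}$ up to that constant, matching~\eqref{eq: T_m_GLRT_prior}. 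Note that $\mtT_m^{\ttui}=\mathbf{b}_m^H\bUpsilon_m^{\ttui}\mathbf{b}_m$ is real and nonnegative since $\bUpsilon_m^{\ttui}\succ\vZ$.

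Finally, for the fusion rule in~\eqref{eq: GLRT_dist}, the effective noises $\bbr{\widetilde{\vect{w}}_{\mtu,m}^{\mts}}_{m\in\Au}$ are mutually independent across \glspl{ap} (independent \gls{awgn}, residual \gls{inai} channels, and user channels) and the per-pair \glspl{rcs} $\bgamma_m$ are distinct, so the joint likelihood factorizes over $m\in\Au$ and the global log-LLR is $\sum_{m\in\Au}\ln\Lm$; discarding the absorbed constants yields $\mtT_{\mtCPU}^{\ttui}=\sum_{m\in\Au}\mtT_m^{\ttui}$, tested against a threshold calibrated to the prescribed \gls{pfa}. I expect the back-substitution to be the main obstacle: one must complete the square in the conditional density, confirm that the data-quadratic cancels between numerator and denominator, and carefully separate the data-independent constants so they can be moved into the threshold, leaving the compact bilinear metric. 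The across-slot block-diagonal decomposition is the enabling bookkeeping step but is otherwise routine. I would also flag that the additive fusion equals the true global \gls{llr} only under cross-\gls{ap} independence, which anticipates the optimality gap quantified in~\cref{thm: centralize_vs_distributed}.
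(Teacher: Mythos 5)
Your proposal is correct and follows essentially the same route the paper takes: the paper omits the proof of this proposition "for brevity," but its proof of the centralized analogue in Appendix~\ref{app: GLRT_perfect_CSI} is exactly your argument (write the conditional Gaussian densities, maximize the resulting quadratic by zeroing the Wirtinger gradient, exploit the block-diagonal across-slot structure to reduce everything to sums over $\tau$, and back-substitute so the data quadratic cancels and the LLR collapses to $\mathbf{b}_m^H\hat{\bgamma}_m^{\ttui}$ up to a data-independent constant absorbed into the threshold). If anything you are slightly more careful than the paper, since you explicitly carry the prior term $-\bgamma_m^H\bSigma_{\bgamma_m}^{-1}\bgamma_m$ through the maximization (the paper's displayed $f(\bgamma)$ in the appendix omits it even though the final $\bUpsilon$ retains the $\bSigma_{\bgamma}^{-1}$ loading), and you correctly flag that the additive fusion is the exact global LLR only under cross-AP independence, consistent with \cref{thm: centralize_vs_distributed}.
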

 \begin{proof}
     Proof is omitted for brevity.
 \end{proof}
\begin{obs}
   We note that~\cref{prop: GLRT1} can be simplified to the case when $\bgamma$ is a deterministic unknown. In that case, we can show that \gls{mle} of $\bgamma_{m}$ can be computed as $
\hat{\bgamma}_{m}^{\tmle}=\left(\sum\limits_{\tau=1}^{T}\ddot{{\mR}}_{m}^{H}[\tau]\bSigma_{{\mts, m}}^{-1}[\tau]\ddot{{\mR}}_{m}[\tau]\right)^{-1}\bpr{\sum\limits_{\tau=1}^{T}\ddot{{\mR}}_{m}^{H}[\tau]\bSigma_{{\mts, m}}^{-1}[\tau]\mathbf{r}_{\mtu, m}[\tau]},$
substituting which on~\eqref{eq: T_m_GLRT_prior}, we get the \gls{llr}. However, we note that $\Rank{\ddot{{\mR}}_{m}^{H}[\tau]\bSigma_{{\mts, m}}^{-1}[\tau]\ddot{{\mR}}_{m}[\tau]}$ is upper-bounded by $\min\bbr{\Rank{\ddot{{\mR}}[\tau]},\Rank{\bSigma_{{\mts, m}}^{-1}[\tau]}}$. Now, $\bSigma_{{\mts, m}}[\tau]$ is full rank, with the rank being $N$. However, it is easy to see that $\Rank{\ddot{{\mR}}[\tau]}\leq \min\bbr{N, \Md}$. Hence, if $N<\Md$, which is usually the case in \gls{cf} systems, we can argue that
    $\Rank{\ddot{{\mR}}_{m}^{H}[\tau]\bSigma_{{\mts, m}}^{-1}[\tau]\ddot{{\mR}}_{m}[\tau]}\leq N.$ Thus, $T$ should be at least $\left\lceil{\dfrac{\Md}{\Rank{\ddot{{\mR}}_{m}^{H}[\tau]\bSigma_{{\mts, m}}^{-1}[\tau]\ddot{{\mR}}_{m}[\tau]}}}\right\rceil$ for invertibility of the overall matrix for the computation of $\hat{\bgamma}_{m}^{\tmle}$, although it is not guaranteed that inverse would exist. On the contrary, such issues are avoided for $\hat{\bgamma}_{m}^{\ttui}$ because of diagonal loading by the prior covariance matrix. 
\end{obs}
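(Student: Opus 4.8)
The plan is to split the Observation into two claims and settle each by reduction to \cref{prop: GLRT1} together with elementary linear algebra: (i) the weighted--least--squares form of $\hat{\bgamma}_{m}^{\tmle}$ when $\bgamma_{m}$ is a deterministic unknown, and (ii) the rank/invertibility statement that fixes the minimal observation length $T$. For (i), treating $\bgamma_{m}$ as deterministic simply deletes the prior from \eqref{eq: GLRT}. Conditioned on the precoded data and the \gls{ul}-to-\gls{ap} channels, $\vect{w}_{\mtu, m}^{\mts}[\tau]$ is \gls{cn} with covariance $\bSigma_{{\mts, m}}[\tau]$ of \eqref{eq: Sigma_m_APs} and independent across $\tau$, so the conditional likelihood under $\Hone$ factorizes over slots and its negative log equals, up to an additive constant, $\sum_{\tau=1}^{T}(\mathbf{r}_{\mtu, m}[\tau]-\ddot{{\mR}}_{m}[\tau]\bgamma_{m})^{H}\bSigma_{{\mts, m}}^{-1}[\tau](\mathbf{r}_{\mtu, m}[\tau]-\ddot{{\mR}}_{m}[\tau]\bgamma_{m})$. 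Setting the Wirtinger derivative \gls{wrt} $\bgamma_{m}^{*}$ to zero yields the stated generalized least-squares estimate. Equivalently and more quickly, $\hat{\bgamma}_{m}^{\tmle}$ is the limit of $\hat{\bgamma}_{m}^{\ttui}$ in \cref{prop: GLRT1} as $\bSigma_{\bgamma_{m}}^{-1}\to\vZ$ (non-informative prior), because for a linear Gaussian model the \gls{mle} coincides with the \gls{mmse}/maximum-a-posteriori estimate under a flat prior; substituting into \eqref{eq: T_m_GLRT_prior} then gives the corresponding test metric.

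For (ii), I would run a rank count. Since $\bA_{m}[\tau]$ in \eqref{eq: Sigma_m_APs} carries the additive $\Nvar\vI_{N}$ term, $\bSigma_{{\mts, m}}[\tau]\succeq\Nvar\vI_{N}\succ\vZ$, so $\bSigma_{{\mts, m}}[\tau]$ and its inverse are positive definite with full rank $N$. Factoring $\bSigma_{{\mts, m}}^{-1}[\tau]=\bSigma_{{\mts, m}}^{-1/2}[\tau]\bSigma_{{\mts, m}}^{-1/2}[\tau]$ with $\bSigma_{{\mts, m}}^{-1/2}[\tau]$ Hermitian gives $\ddot{{\mR}}_{m}^{H}[\tau]\bSigma_{{\mts, m}}^{-1}[\tau]\ddot{{\mR}}_{m}[\tau]=(\bSigma_{{\mts, m}}^{-1/2}[\tau]\ddot{{\mR}}_{m}[\tau])^{H}(\bSigma_{{\mts, m}}^{-1/2}[\tau]\ddot{{\mR}}_{m}[\tau])$, whose rank equals $\Rank{\ddot{{\mR}}_{m}[\tau]}$ because $\bSigma_{{\mts, m}}^{-1/2}[\tau]$ is invertible. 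As $\ddot{{\mR}}_{m}[\tau]$ is $N\times\Md$ by \eqref{eq: R_ddt_m}, this per-slot rank $r$ is at most $\min\{N,\Md\}$, hence at most $N$ when $N<\Md$. Because rank is subadditive under summation, the $\Md\times\Md$ Gram matrix $\sum_{\tau=1}^{T}\ddot{{\mR}}_{m}^{H}[\tau]\bSigma_{{\mts, m}}^{-1}[\tau]\ddot{{\mR}}_{m}[\tau]$ can reach full rank $\Md$ (a prerequisite for the inverse in $\hat{\bgamma}_{m}^{\tmle}$) only if $T\,r\ge\Md$, i.e. $T\ge\lceil\Md/r\rceil$. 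By contrast, the matrix inverted in $\hat{\bgamma}_{m}^{\ttui}$ is this same Gram matrix plus $\bSigma_{\bgamma_{m}}^{-1}\succ\vZ$, which is positive definite for every $T$, so the strictly positive diagonal loading guarantees invertibility irrespective of $T$.

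The hard part is the passage from the per-slot rank to the condition on $T$: subadditivity bounds the sum rank only from above, so $T\ge\lceil\Md/r\rceil$ is \emph{necessary but not sufficient} --- if the column spaces of $\bSigma_{{\mts, m}}^{-1/2}[\tau]\ddot{{\mR}}_{m}[\tau]$ overlap across $\tau$, the sum remains rank-deficient. I would therefore phrase the bound as necessary only and flag its dependence on sufficient diversity of the per-slot subspaces, which is in turn governed by the transmitted symbols $\vx_{\mtd, j}[\tau]$ and the rank-one geometry of $\dot{\mR}_{mj}$. This is exactly the caveat (\quotes{it is not guaranteed that inverse would exist}) recorded in the statement, and it is what motivates preferring $\hat{\bgamma}_{m}^{\ttui}$.
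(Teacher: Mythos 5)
Your proposal is correct and follows essentially the same route as the paper's own (inline) justification: the \gls{mle} is obtained by dropping the prior term from the \gls{glrt} cost (equivalently letting $\bSigma_{\bgamma_{m}}^{-1}\to\vZ$ in \cref{prop: GLRT1}), the per-slot rank is bounded by $\min\{N,\Md\}$, and the resulting $T\ge\lceil\Md/r\rceil$ condition is only necessary, with the prior's diagonal loading rescuing invertibility in the Bayesian case. Your one refinement --- using the congruence $\ddot{{\mR}}_{m}^{H}[\tau]\bSigma_{{\mts, m}}^{-1}[\tau]\ddot{{\mR}}_{m}[\tau]=(\bSigma_{{\mts, m}}^{-1/2}[\tau]\ddot{{\mR}}_{m}[\tau])^{H}(\bSigma_{{\mts, m}}^{-1/2}[\tau]\ddot{{\mR}}_{m}[\tau])$ to get rank \emph{equality} with $\Rank{\ddot{{\mR}}_{m}[\tau]}$ rather than the paper's product-rank upper bound --- is a slight sharpening but does not change the argument.
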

 
\begin{rem}
    We note that~\cref{prop: GLRT1} assumes perfect \gls{csi} for $\vh_{mk}$ while evaluating the sensing covariance $\bA_{m}[\tau]$. However, we highlight that this is not a limitation of any analysis or result presented in the paper. One can extend ~\cref{prop: GLRT1} and all subsequent results for statistical or trained \gls{csi} for \gls{ap}-user channels by marginalization of conditional covariances, conditioned on \gls{ul} users' signals. Essentially, for statistical \gls{csi}, $\bSigma_{{\mts, m}}[\tau]$ can be evaluated as:
    \begin{align*}
        \bSigma_{{\mts, m}}[\tau]=\Elr{\Cov{\vect{w}_{\mtu, m}^{\mts}[\tau]\vert s_{\mtu, k}[\tau], \forall k}},
    \end{align*}    
    where the expectation is evaluated over the distribution of the \gls{ul} users' transmitted signals. However, this entails further notational book-keeping, adding little to the theme of the paper, which is to understand the impact of \glspl{cli} and imperfect \gls{csi} of such \gls{cli} channels on \gls{isac}, and the trade-off of bi-directional communication with \gls{cli} and \gls{tdd} based \gls{isac} systems without  \gls{inai}, \gls{inui}, and \gls{tri}. 
\end{rem}

\subsection{Centralized Scheme}
Here, \gls{cpu} accumulates $T$ snapshots of the \gls{ul} received signals from all the \gls{ul} \glspl{ap}; which can be expressed as:
 \begin{align}	
 \mathbf{r}_{\mtu}[\tau]=\ddot{{\mR}}[\tau]\bgamma+\vect{w}_{\mtu}^{\mts}[\tau]\mbbC{N M_{\mtu}},\label{eq: central_CPU_UL_snapshot}
	\end{align}
where $\vect{w}_{\mtu}^{\mts}[\tau]$ indicates the equivalent sensing noise, defined as
$\vect{w}_{\mtu}^{\mts}[\tau]=\sum\nolimits_{k\in\Uu}\sqrt{\Eu{k}}\vh_{k}s_{\mtu, k}[\tau]+\sum\nolimits_{j\in\Ad}\tilde{\mG}_{j}\vx_{\mtd, j}[\tau]+\vect{w}_{\mtu}[\tau]$. 
The overall sensing channel matrix at the \gls{cpu} can be written as $\ddot{{\mR}}[\tau]=\blkdm{\ddot{{\mR}}_{m}[\tau]}\mbbC{N M_{\mtu}\times M_{\mtd} M_{\mtu}},$
where
$\ddot{{\mR}}_{m}[\tau]$ is defined in~\eqref{eq: R_ddt_m}.
Also, let $\bgamma=\bsr{\bgamma_{1}^{T},\bgamma_{2}^{T}, \ldots, \bgamma_{M_{\mtu}}^{T}}^{T}\mbbC{M_{\mtd} M_{\mtu}}$, $\tilde{\mG}_{j}\triangleq\bsr{\tilde{\mG}_{1j}^{T}, \ldots, \tilde{\mG}_{M_{\mtu} j}^{T}}^{T}$, and $\vect{w}_{\mtu}[\tau]=\bsr{\vect{w}_{\mtu, 1}^{T}[\tau], \ldots, \vect{w}_{\mtu, M_{\mtu}}^{T}[\tau]}^{T}\mbbC{NM_{\mtu}}$.  The sensing noise at \gls{cpu} is distributed as $\cn{\vZ_{\Mu N}, \bSigma_{\mts}[\tau]}$, with \edit{$\bSigma_{\mts}[\tau]\bpr{\triangleq\Elrc{\vect{w}_{\mtu}^{\mts}[\tau]\vect{w}_{\mtu}^{\mts H}[\tau] \big\vert \bbr{\vh_{k}, \vx_{\mtd, j}[\tau]}}{\vect{w}_{\mtu}^{\mts}[\tau]}}$, being} 
\begin{align}
   \bSigma_{\mts}[\tau]=\sum\nolimits_{k\in\Uu}{\Eu{k}}\vh_{k}\vh_{k}^{H}+\bA[\tau],\label{eq: sensing_noise_cov} 
\end{align}
with 
$\bA[\tau]\triangleq\sum\nolimits_{j\in\Ad}\left(\diag{\zeta_{1j},\ldots,\zeta_{M_{\mtu}j}}\otimes\vI_{N}\right)\snorm{\vx_{\mtd, j}[\tau]}+\Nvar\vI_{N M_{\mtu}}$. For detailed proof of the noise covariance, see~\cref{app: noise covariance}. We next formulate the \gls{glrt}. 

\begin{prop}\label{thm: GLRT_perfect_CSI}
		The  estimate of target \gls{rcs} at the \gls{cpu}, denoted by $\hat{\bgamma}^{ \ttui}$, equals
  \begin{align}
\hat{\bgamma}^{ \ttui}=\bUpsilon^{ \ttui}\edit{\bpr{\sum\nolimits_{\tau=1}^{T}\ddot{{\mR}}^{H}[\tau]\bSigma_{\mts}^{-1}[\tau]\mathbf{r}_{\mtu}[\tau]}},\label{eq: gamma_opt}
     \end{align}
     where $\bUpsilon^{ \ttui}=\left(\sum\nolimits_{\tau=1}^{T}\ddot{{\mR}}^{H}[\tau]\bSigma_{\mts}^{-1}[\tau]\ddot{{\mR}}[\tau]+\bSigma_{\bgamma}^{-1}\right)^{-1}$ and $\bSigma_{\bgamma}\triangleq\Cov{\bgamma}=\diag{\sigma_{11}, \ldots, \sigma_{1\Md}, \ldots, \sigma_{\Mu\Md}}.$
     Consequently, the test statistics for target detection at the \gls{cpu}, denoted by  $\mtT$, can be evaluated as
  \begin{align}
\mtT^{ \ttui}=&\bpr{\sum\nolimits_{\tau=1}^{T}\ddot{{\mR}}^{H}[\tau]\bSigma_{\mts}^{-1}[\tau]\mathbf{r}_{\mtu}[\tau]}^{H}\hat{\bgamma}^{ \ttui}.
		\end{align}
		Hence, the hypothesis test at the \gls{cpu} can be written as $\mtT^{ \ttui}\gldec{\Hzero}{\Hone}\ln\lambda$, for a detection threshold $\lambda$.
	\end{prop}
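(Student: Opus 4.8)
The plan is to cast the centralized test as a conditional Bayesian linear–Gaussian problem, paralleling the distributed case of \cref{prop: GLRT1} but with the \gls{cpu}-level stacked quantities. Stacking the $T$ snapshots of \eqref{eq: central_CPU_UL_snapshot} into $\widetilde{\mathbf{r}}_{\mtu}\triangleq\bsr{\mathbf{r}_{\mtu}^{T}[1],\ldots,\mathbf{r}_{\mtu}^{T}[T]}^{T}$, the effective sensing noise is, conditioned on the \gls{ul} channels and the known \gls{dl} precoded signals, zero-mean \gls{cn} and independent across $\tau$; hence its covariance is block-diagonal with blocks $\bSigma_{\mts}[\tau]$ from \eqref{eq: sensing_noise_cov}. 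Thus $\p{\widetilde{\mathbf{r}}_{\mtu}\vert\bgamma,\Hone}=\prod_{\tau=1}^{T}\p{\mathbf{r}_{\mtu}[\tau]\vert\bgamma}$ with $\mathbf{r}_{\mtu}[\tau]\vert\bgamma\sim\cn{\ddot{{\mR}}[\tau]\bgamma,\bSigma_{\mts}[\tau]}$ and prior $\bgamma\sim\cn{\vZ,\bSigma_{\bgamma}}$, while under $\Hzero$ the mean is zero with the \emph{same} covariance.

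First I would form the generalized likelihood ratio as in \eqref{eq: GLRT}, namely $L=\maximize{\bgamma}\,\p{\widetilde{\mathbf{r}}_{\mtu}\vert\bgamma,\Hone}\p{\bgamma\vert\Hone}\big/\p{\widetilde{\mathbf{r}}_{\mtu}\vert\Hzero}$, and take logarithms. The key structural fact is that the noise covariance is identical under $\Hzero$ and $\Hone$, so all the $\ln\Det{\bSigma_{\mts}[\tau]}$ factors and $\pi$-normalizers cancel in the ratio, leaving only the quadratic exponents plus a constant prior normalizer. Consequently $\ln L$ equals, up to an additive constant, $\maximize{\bgamma}\bbr{-\sum_{\tau}\bpr{\mathbf{r}_{\mtu}[\tau]-\ddot{{\mR}}[\tau]\bgamma}^{H}\bSigma_{\mts}^{-1}[\tau]\bpr{\mathbf{r}_{\mtu}[\tau]-\ddot{{\mR}}[\tau]\bgamma}-\bgamma^{H}\bSigma_{\bgamma}^{-1}\bgamma}+\sum_{\tau}\mathbf{r}_{\mtu}^{H}[\tau]\bSigma_{\mts}^{-1}[\tau]\mathbf{r}_{\mtu}[\tau]$.

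Next I would carry out the inner maximization, which is a regularized weighted least-squares (MAP) problem. Writing $\bm{\rho}\triangleq\sum_{\tau}\ddot{{\mR}}^{H}[\tau]\bSigma_{\mts}^{-1}[\tau]\mathbf{r}_{\mtu}[\tau]$ and setting the Wirtinger derivative with respect to $\bgamma^{*}$ to zero yields the normal equation $\bpr{\sum_{\tau}\ddot{{\mR}}^{H}[\tau]\bSigma_{\mts}^{-1}[\tau]\ddot{{\mR}}[\tau]+\bSigma_{\bgamma}^{-1}}\bgamma=\bm{\rho}$, whose solution is exactly $\hat{\bgamma}^{\ttui}=\bUpsilon^{\ttui}\bm{\rho}$ of \eqref{eq: gamma_opt}; the prior term $\bSigma_{\bgamma}^{-1}$ provides the diagonal loading guaranteeing invertibility, echoing the observation following \cref{prop: GLRT1}. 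Substituting $\hat{\bgamma}^{\ttui}$ back, the $\sum_{\tau}\mathbf{r}_{\mtu}^{H}[\tau]\bSigma_{\mts}^{-1}[\tau]\mathbf{r}_{\mtu}[\tau]$ terms cancel and the residue is $\bm{\rho}^{H}\hat{\bgamma}^{\ttui}+\bpr{\hat{\bgamma}^{\ttui}}^{H}\bm{\rho}-\bpr{\hat{\bgamma}^{\ttui}}^{H}\bpr{\bUpsilon^{\ttui}}^{-1}\hat{\bgamma}^{\ttui}$; invoking $\bpr{\bUpsilon^{\ttui}}^{-1}\hat{\bgamma}^{\ttui}=\bm{\rho}$ makes the last two terms cancel, leaving $\ln L=\bm{\rho}^{H}\hat{\bgamma}^{\ttui}+\text{const}=\mtT^{\ttui}+\text{const}$. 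Absorbing the constant into the decision threshold gives $\mtT^{\ttui}\gldec{\Hzero}{\Hone}\ln\lambda$, as claimed.

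Most of the above is routine once the model is in place; the step needing the most care is the bookkeeping of the block-diagonal covariance (across both \glspl{ap} and snapshots) together with confirming that $\mtT^{\ttui}=\bm{\rho}^{H}\bUpsilon^{\ttui}\bm{\rho}$ is real-valued — which holds because $\bUpsilon^{\ttui}$ is Hermitian positive definite — so that the two conjugate cross-terms combine consistently and the single quadratic form is a legitimate scalar sufficient statistic. A secondary point to justify rigorously is the conditional independence of the effective noise across snapshots, since it underlies the factorized likelihood; this follows from the \gls{iid} nature of the \gls{ul} symbols and \gls{awgn} across $\tau$ and the statistical (power-only) treatment of the residual \gls{inai} captured in $\bA[\tau]$.
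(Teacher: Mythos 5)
Your proof is correct and follows essentially the same route as the paper's \cref{app: GLRT_perfect_CSI}: write the likelihood ratio with the common per-snapshot covariance $\bSigma_{\mts}[\tau]$ cancelling the normalizers, solve the resulting regularized quadratic for the MAP estimate $\hat{\bgamma}^{\ttui}$, and back-substitute to obtain $\mtT^{\ttui}=\bm{\rho}^{H}\hat{\bgamma}^{\ttui}$. If anything, you are slightly more explicit than the paper's displayed derivation in carrying the prior term $-\bgamma^{H}\bSigma_{\bgamma}^{-1}\bgamma$ through the maximization, which is exactly what produces the $\bSigma_{\bgamma}^{-1}$ diagonal loading in $\bUpsilon^{\ttui}$.
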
 
	\begin{proof}
	See~\cref{app: GLRT_perfect_CSI}.
	\end{proof}
\begin{obs}
    \cref{thm: GLRT_perfect_CSI} generalizes to the \gls{mle} solution presented in~\cite[Proposition~$1$]{Anubhab_Sai_Erik_ICC_2025} for centralized \gls{glrt} with \gls{rcs} being deterministic unknown.
\end{obs}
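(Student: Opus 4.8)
The plan is to show that the MAP-based centralized estimator and test statistic of \cref{thm: GLRT_perfect_CSI} collapse onto the deterministic-unknown \gls{mle} construction of \cite[Proposition~1]{Anubhab_Sai_Erik_ICC_2025} once the Gaussian prior on $\bgamma$ is made non-informative. The only structural difference between the two settings sits in the numerator of the \gls{glrt}: the factor $\p{\bgamma\vert\Hone}$. Treating $\bgamma$ as a deterministic unknown amounts to discarding this factor, which is formally the flat-prior limit, i.e. $\sigma_{mj}\to\infty$ for every $(m,j)$ and hence $\bSigma_{\bgamma}^{-1}\to\vZ_{\Md\Mu}$. I would therefore prove the claim as a limiting identity on the closed forms already derived in \cref{thm: GLRT_perfect_CSI}, rather than re-deriving the centralized structure from scratch.

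First I would write the log-posterior maximized by $\hat{\bgamma}^{\ttui}$. Up to terms independent of $\bgamma$, it splits into the data-fit quadratic $\sum\nolimits_{\tau=1}^{T}\bpr{\mathbf{r}_{\mtu}[\tau]-\ddot{\mR}[\tau]\bgamma}^{H}\bSigma_{\mts}^{-1}[\tau]\bpr{\mathbf{r}_{\mtu}[\tau]-\ddot{\mR}[\tau]\bgamma}$ and the prior penalty $\bgamma^{H}\bSigma_{\bgamma}^{-1}\bgamma$; the $\log\det$ normalizers of both Gaussians do not depend on $\bgamma$ and vanish from the stationarity condition. Dropping the prior penalty, equivalently sending $\bSigma_{\bgamma}^{-1}\to\vZ$, leaves exactly the Gaussian negative log-likelihood whose maximizer is the \gls{mle}. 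Carrying this limit through~\eqref{eq: gamma_opt}, the diagonal-loading term disappears, so $\bUpsilon^{\ttui}\to\bpr{\sum\nolimits_{\tau=1}^{T}\ddot{\mR}^{H}[\tau]\bSigma_{\mts}^{-1}[\tau]\ddot{\mR}[\tau]}^{-1}$ and $\hat{\bgamma}^{\ttui}\to\hat{\bgamma}^{\tmle}$ with the same whitened matched-filter structure; the metric $\mtT^{\ttui}$ reduces correspondingly to the MLE-based test of the conference paper. Because the noise model (\gls{ul} signals, residual \gls{inai}, and \gls{awgn}) is unchanged, $\bSigma_{\mts}[\tau]$ is identical in both settings, so nothing else in the derivation moves.

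The one point requiring care, and the main obstacle, is that this limit is only meaningful when the \gls{mle} is itself well posed, i.e. when $\sum\nolimits_{\tau=1}^{T}\ddot{\mR}^{H}[\tau]\bSigma_{\mts}^{-1}[\tau]\ddot{\mR}[\tau]$ is invertible. As observed after \cref{prop: GLRT1}, each per-slot Gram factor has rank at most $\min\bbr{N,\Md}$, so an observation window $T$ satisfying the stated ceiling condition is needed for the accumulated matrix to be full rank; the Bayesian estimator sidesteps this entirely since the diagonal loading $\bSigma_{\bgamma}^{-1}$ keeps $\bUpsilon^{\ttui}$ invertible for every $T$. I would thus make the reduction precise as a continuity statement: on the event that the accumulated Gram matrix is nonsingular, the MAP estimator converges entrywise to the MLE as $\bSigma_{\bgamma}^{-1}\to\vZ$, recovering \cite[Proposition~1]{Anubhab_Sai_Erik_ICC_2025} as the advertised special case, while the same block-diagonal factorization of $\ddot{\mR}[\tau]$ and $\bSigma_{\mts}[\tau]$ used in \cref{thm: GLRT_perfect_CSI} carries over verbatim.
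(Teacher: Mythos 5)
Your proposal is correct and follows essentially the same route as the paper: the paper also treats the deterministic-unknown case by dropping the prior term $\bSigma_{\bgamma}^{-1}$ (your flat-prior limit), so that $\bUpsilon^{\ttui}$ loses its diagonal loading and $\hat{\bgamma}^{\ttui}$ reduces to the whitened least-squares \gls{mle} with the test metric following by substitution, exactly as spelled out for the distributed analogue in the paper's Observation~1. Your invertibility caveat on $\sum_{\tau=1}^{T}\ddot{{\mR}}^{H}[\tau]\bSigma_{\mts}^{-1}[\tau]\ddot{{\mR}}[\tau]$ and the associated condition on $T$ likewise mirrors the rank discussion the paper gives there.
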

    
    Next, we observe that, unlike the case with~\gls{tdd} based multi-static sensing, the sensing noise at each \gls{ul} \gls{ap} as well as at the \gls{cpu} is colored, i.e., the covariance of effective noise is not a diagonal matrix. Consequently, we cannot express $\mtT^{ \ttui}$ in~\cref{thm: GLRT_perfect_CSI} as $\mtT_{\mtCPU}^{ \ttui}(=\sum\nolimits_{m\in\Au}\mtT_{m}^{ \ttui})$ of~\eqref{eq: GLRT_dist}~(see \cref{tables: Notation} for clarity). Hence, the distributed \gls{glrt} for target detection is \emph{not equivalent} to centralized \gls{glrt} even with perfect \gls{ul} users' \gls{csi} at the \glspl{ap} and \gls{cpu}. However, a distributed scheme incurs low front-haul \edit{overhead} compared to the centralized one. We characterize this in the following theorem.
 
\begin{thm}\label{thm: centralize_vs_distributed}
    The following relationship holds between the centralized and the distributed test metrics:
    \begin{align}
\mtT^{ \ttui}=\sum\nolimits_{m\in\Au}\mtT_{m}^{ \ttui}+\delta_{\mtT},
    \end{align}
    where the residual term $\delta_{\mtT}$ can be expressed as
    \begin{align*}
        \delta_{\mtT}=&\mathbf{t}_{2}^{H}\Delta_{1}\mathbf{t}_{2}+\mathbf{t}_{2}^{H}\bpr{\Delta_{1}+\mathbf{T}_{1}}\Delta_{2}+\Delta_{2}^{H}\bpr{\Delta_{1}+\mathbf{T}_{1}}\mathbf{t}_{2}&\notag\\&+\Delta_{2}^{H}\bpr{\Delta_{1}+\mathbf{T}_{1}}\Delta_{2},
    \end{align*}
with each variable defined as follows:
\begin{align}
&\edit{\mathbf{T}_{1}}=\bpr{\sum\nolimits_{\tau=1}^{T}\ddot{{\mR}}^{H}[\tau]\bpr{\bSigma_{\mts}^{\mtD}[\tau]}^{-1}\ddot{{\mR}}[\tau]+\bSigma_{\bgamma}^{-1}}^{-1},\nonumber\\ &\edit{\mathbf{t}_{2}}=\sum\nolimits_{\tau=1}^{T}\ddot{{\mR}}^{H}[\tau]\bpr{\bSigma_{\mts}^{\mtD}[\tau]}^{-1}\mathbf{r}_{\mtu}[\tau],\nonumber\\ &\Delta_{1}=-\mathbf{T}_{1}\bpr{\mathbf{T}_{1}^{-1}\bpr{\sum\nolimits_{\tau=1}^{T}\ddot{{\mR}}^{H}[\tau]\Delta_{\mts}[\tau]\ddot{{\mR}}[\tau]}^{-1}+\vI_{N\Mu}},\nonumber\\&\Delta_{2}=\sum\nolimits_{\tau=1}^{T}\ddot{{\mR}}^{H}[\tau]\Delta_{\mts}[\tau]\mathbf{r}_{\mtu}[\tau],\nonumber\\&\Delta_{\mts}[\tau]=-\bpr{\bSigma_{\mts}^{\mtD}[\tau]}^{-1}\bpr{{\bSigma_{\mts}^{\mtD}[\tau]}\Delta_{\vh}^{-1}+\vI_{N\Mu}}^{-1},\nonumber
\end{align}
and $\bSigma_{\mts}^{\mtD}[\tau]=\sum\nolimits_{k\in\Uu}\Eu{k}\blkdm{\vh_{mk}\vh_{mk}^{H}}+\bA[\tau]$, $\forall\tau$.
\end{thm}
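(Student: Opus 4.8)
The plan is to recognize the sum of distributed metrics as the centralized quadratic form evaluated with the \emph{block-diagonal} surrogate covariance $\bSigma_{\mts}^{\mtD}[\tau]$, and to identify $\delta_{\mtT}$ as the error incurred by replacing $\bSigma_{\mts}^{\mtD}[\tau]$ with the true coupled covariance $\bSigma_{\mts}[\tau]$. Writing $\mathbf{b}\triangleq\sum_{\tau=1}^{T}\ddot{\mR}^{H}[\tau]\bSigma_{\mts}^{-1}[\tau]\mathbf{r}_{\mtu}[\tau]$, \cref{thm: GLRT_perfect_CSI} gives the compact form $\mtT^{\ttui}=\mathbf{b}^{H}\bUpsilon^{\ttui}\mathbf{b}$. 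First I would note that $\ddot{\mR}[\tau]=\blkdm{\ddot{\mR}_{m}[\tau]}$, $\bSigma_{\mts}^{\mtD}[\tau]$, and $\bSigma_{\bgamma}$ are all block-diagonal across the $\Mu$ \glspl{ap}; consequently $\mathbf{T}_{1}$ is block-diagonal with $m$th block $\bUpsilon_{m}^{\ttui}$, and $\mathbf{t}_{2}$ is block-stacked with $m$th block $\sum_{\tau=1}^{T}\ddot{\mR}_{m}^{H}[\tau]\bSigma_{\mts, m}^{-1}[\tau]\mathbf{r}_{\mtu, m}[\tau]$, using that the $m$th diagonal block of $\bSigma_{\mts}^{\mtD}[\tau]$ is exactly $\bSigma_{\mts, m}[\tau]$ of \eqref{eq: Sigma_m_APs}. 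Hence $\mathbf{t}_{2}^{H}\mathbf{T}_{1}\mathbf{t}_{2}=\sum_{m\in\Au}\mtT_{m}^{\ttui}$ by \cref{prop: GLRT1}, which is the first half of the claim.

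Next I would decompose the inverse covariance. Defining the $\tau$-independent inter-\gls{ap} coupling $\Delta_{\vh}\triangleq\bSigma_{\mts}[\tau]-\bSigma_{\mts}^{\mtD}[\tau]=\sum_{k\in\Uu}\Eu{k}(\vh_{k}\vh_{k}^{H}-\blkdm{\vh_{mk}\vh_{mk}^{H}})$ — which is $\tau$-free because the block-diagonal term $\bA[\tau]$ cancels — I would apply the matrix-inversion lemma $(A+B)^{-1}=A^{-1}-A^{-1}(A^{-1}+B^{-1})^{-1}A^{-1}$ with $A=\bSigma_{\mts}^{\mtD}[\tau]$ and $B=\Delta_{\vh}$. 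The one-sided factorization $\Delta_{\vh}^{-1}+(\bSigma_{\mts}^{\mtD}[\tau])^{-1}=(\bSigma_{\mts}^{\mtD}[\tau])^{-1}(\bSigma_{\mts}^{\mtD}[\tau]\Delta_{\vh}^{-1}+\vI_{N\Mu})$ then collapses the correction to exactly $\bSigma_{\mts}^{-1}[\tau]=(\bSigma_{\mts}^{\mtD}[\tau])^{-1}+\Delta_{\mts}[\tau]$, with $\Delta_{\mts}[\tau]$ as stated in the theorem.

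I would then propagate this correction through the two ingredients of $\mtT^{\ttui}$. Linearity in $\mathbf{r}_{\mtu}[\tau]$ gives $\mathbf{b}=\mathbf{t}_{2}+\Delta_{2}$ immediately. For the information matrix, writing $(\bUpsilon^{\ttui})^{-1}=\mathbf{T}_{1}^{-1}+\mathbf{E}$ with $\mathbf{E}\triangleq\sum_{\tau=1}^{T}\ddot{\mR}^{H}[\tau]\Delta_{\mts}[\tau]\ddot{\mR}[\tau]$, I would reuse the same lemma with $A=\mathbf{T}_{1}^{-1}$, $B=\mathbf{E}$; the factorization $(\mathbf{E}^{-1}+\mathbf{T}_{1})^{-1}=(\mathbf{T}_{1}^{-1}\mathbf{E}^{-1}+\vI_{N\Mu})^{-1}\mathbf{T}_{1}^{-1}$ rewrites $-\mathbf{T}_{1}(\mathbf{E}^{-1}+\mathbf{T}_{1})^{-1}\mathbf{T}_{1}$ as the stated $\Delta_{1}$, yielding $\bUpsilon^{\ttui}=\mathbf{T}_{1}+\Delta_{1}$. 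Substituting $\mathbf{b}=\mathbf{t}_{2}+\Delta_{2}$ and $\bUpsilon^{\ttui}=\mathbf{T}_{1}+\Delta_{1}$ into $\mathbf{b}^{H}\bUpsilon^{\ttui}\mathbf{b}$ and expanding produces eight terms; the leading $\mathbf{t}_{2}^{H}\mathbf{T}_{1}\mathbf{t}_{2}$ equals $\sum_{m\in\Au}\mtT_{m}^{\ttui}$ from the first step, and collecting the remaining seven into $\mathbf{t}_{2}^{H}\Delta_{1}\mathbf{t}_{2}+\mathbf{t}_{2}^{H}(\Delta_{1}+\mathbf{T}_{1})\Delta_{2}+\Delta_{2}^{H}(\Delta_{1}+\mathbf{T}_{1})\mathbf{t}_{2}+\Delta_{2}^{H}(\Delta_{1}+\mathbf{T}_{1})\Delta_{2}$ reproduces $\delta_{\mtT}$.

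The main obstacle is the bookkeeping across the two inversion-lemma steps: one must ensure the corrections emerge in the exact asymmetric, inverse-carrying forms of the theorem (with $\Delta_{\vh}^{-1}$ and $\mathbf{E}^{-1}$ explicit) rather than the symmetric sandwich form the lemma first yields, which requires precisely the one-sided factorizations indicated above. A second, genuine subtlety is invertibility: the stated $\Delta_{\mts}[\tau]$ and $\Delta_{1}$ presuppose that $\Delta_{\vh}$ and $\mathbf{E}$ are nonsingular, whereas $\Delta_{\vh}$ — the off-block-diagonal part of a coupling term of rank at most $\Ku$ — can be singular in the \gls{cf} regime $\Ku<N\Mu$. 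Since $\bSigma_{\mts}[\tau]$ is only a rank-at-most-$\Ku$ update of the invertible $\bA[\tau]$, the inverse $\bSigma_{\mts}^{-1}[\tau]$ itself always exists; I would therefore read the displayed $\Delta_{\vh}^{-1}$- and $\mathbf{E}^{-1}$-expressions as the generic-case representatives of these well-defined corrections (adding a full-rank caveat if needed). Everything else is routine block-matrix algebra.
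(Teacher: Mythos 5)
Your proposal is correct and follows essentially the same route as the paper's proof: split $\bSigma_{\mts}[\tau]$ into its block-diagonal part $\bSigma_{\mts}^{\mtD}[\tau]$ plus the inter-AP coupling $\Delta_{\vh}$, apply the matrix-inversion lemma twice to obtain $\bUpsilon^{\ttui}=\mathbf{T}_{1}+\Delta_{1}$ and $\mathbf{b}=\mathbf{t}_{2}+\Delta_{2}$, and expand the quadratic form. You additionally make explicit two points the paper leaves implicit --- that the $m$th diagonal block of $\bSigma_{\mts}^{\mtD}[\tau]$ equals $\bSigma_{\mts,m}[\tau]$ so that $\mathbf{t}_{2}^{H}\mathbf{T}_{1}\mathbf{t}_{2}=\sum\nolimits_{m\in\Au}\mtT_{m}^{\ttui}$, and that the displayed $\Delta_{\vh}^{-1}$ presupposes a nonsingular coupling term --- both of which are accurate and strengthen rather than alter the argument.
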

\begin{proof}
    See~\cref{app: centralize_vs_distributed}.
\end{proof}
\subsubsection{Special cases}\label{sec: special cases} Using~\cref{thm: centralize_vs_distributed}, we can show $\delta_{\mtT}$ is zero for the following scenarios~(\emph{distributed \gls{glrt} attains a performance equal to the centralized}).
\begin{align}
\begin{cases} \text{Case~I}:~\vh_{k}\vh_{k}^{H}\approx\blkdm{\beta_{mk}\vI_{N}}, \forall k\in\Uu\\
       \text{Case~II}:~\text{\gls{tdd} multi-static sensing}
   \end{cases}.\label{eq: special_case}
\end{align}
The claim in~\eqref{eq: special_case} can be verified by observing that the covariance of the equivalent sensing noise at the \gls{cpu} is block-diagonal. Specifically, we can show that
\begin{align*}
    \bSigma_{\mts}[\tau]=\begin{cases}&\sum\nolimits_{k\in\Uu}\Eu{k}\blkdm{\beta_{mk}\vI_{N}}+\bA[\tau]:~\text{Case~I}\\&\bA[\tau]:~\text{Case~II}\end{cases},
\end{align*}
where $\bA[\tau]$~\eqref{eq: sensing_noise_cov} is also block-diagonal.
Consequently, 
$$\bSigma_{\mts}[\tau]=\begin{bmatrix}
\bSigma_{{\mts, 1}}[\tau] & \vZ & \ldots\\
& \ddots \\
\vZ & \ldots & \bSigma_{{\mts, \Mu}}[\tau]
\end{bmatrix},$$
implying that the computation of $\bSigma_{\mts}[\tau]$ can also be done by the $\Mu$ \glspl{ap} (block-wise) instead centrally by the \gls{cpu}; without any loss in the statistical information. Due to this \emph{de-entanglement} of the equivalent sensing noise covariance matrix,
\emph{fusing locally} obtained soft decisions~(\glspl{llr}) of the \gls{ul} \glspl{ap} at the \gls{cpu} is equivalent to finding a global \gls{llr} at the \gls{cpu} based on the full-dimensional received signal vectors. Hence, we can write
$$\mtT^{ \ttui}=\sum\nolimits_{m\in\Au}\mtT_{m}^{ \ttui}~\text{for}~\mathsf{Case~I~\&~II}.$$
This can also be rigorously shown by retracing the steps in~\cref{app: centralize_vs_distributed} with $\Delta_{\vh_{k}}=\vZ$; which yields $\delta_{\mtT}=0$ when~\eqref{eq: special_case} holds.
Finally, we note that the first approximation in \eqref{eq: special_case} is tight for large antenna density due to channel hardening~\cite{Hien_Erik_TWC_2017}.

\subsection{Bayesian Cram\'er-Rao Bound Analysis}
We now present the \gls{bcrlb} (or Van Trees bound) for the \gls{mse} for any Bayesian estimate of $\bgamma$ to benchmark~\eqref{eq: gamma_estimate_MMSE} and~\eqref{eq: gamma_opt}~\cite{Erik_Stoica_CRLB_SPL, Trees_Array_Processing}.
\begin{lem}\label{lemm: BCRLB}
    Let $\hat{\bgamma}$ be a Bayesian estimate of the \gls{rcs} $\bgamma$ at the \gls{cpu} based on the received signal vector $\mathbf{r}_{\mtu}[\tau], \tau=1,2,\ldots, T$. Then, under suitable regularity conditions, the \gls{mse} matrix, denoted by $\tt{MSE}\bbr{\hat{\bgamma}}$, is lower bounded as $\tt{MSE}\bbr{\hat{\bgamma}}\succeq \mB^{-1},$ where the \gls{bim} $\mB$ is computed as
    \begin{align}
        \mB= \bpr{\sum\limits_{\tau=1}^{T}\ddot{{\mR}}^{H}[\tau]\bSigma_{\mts}^{-1}[\tau]\ddot{{\mR}}[\tau]+\bSigma_{\bgamma}^{-1}}.
    \end{align}
\end{lem}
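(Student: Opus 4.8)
The plan is to invoke the Van Trees (Bayesian Cram\'er-Rao) inequality: for any estimator $\hat{\bgamma}$ of the random parameter $\bgamma\sim\cn{\vZ, \bSigma_{\bgamma}}$ observed through $\{\mathbf{r}_{\mtu}[\tau]\}_{\tau=1}^{T}$, one has $\tt{MSE}\bbr{\hat{\bgamma}}\succeq\mB^{-1}$, where the \gls{bim} is the expectation, over the \emph{joint} density of data and parameter, of the outer product of the complex score $\partial\ln p/\partial\bgamma^{*}$. First I would factor the joint density as $p(\{\mathbf{r}_{\mtu}[\tau]\}, \bgamma)=p(\{\mathbf{r}_{\mtu}[\tau]\}\vert\bgamma)\,p(\bgamma)$, so that the log-density splits additively and the \gls{bim} decomposes as $\mB=\Elrc{\mathbf{J}_{D}(\bgamma)}{\bgamma}+\mathbf{J}_{P}$, where $\mathbf{J}_{D}(\bgamma)$ is the data \gls{fim} conditioned on $\bgamma$ and $\mathbf{J}_{P}$ is the prior information matrix. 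The cross terms vanish because, by a standard regularity argument, the data score has zero conditional mean given $\bgamma$.

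Next I would evaluate the two pieces separately. Since the snapshots in~\eqref{eq: central_CPU_UL_snapshot} are conditionally independent across $\tau$ given $\bgamma$---the equivalent sensing noise being \gls{cn} with the known conditional covariance $\bSigma_{\mts}[\tau]$ of~\eqref{eq: sensing_noise_cov}---the data \gls{fim} is the sum of the per-snapshot contributions. As the model is linear in $\bgamma$ with $\bgamma$ entering only the mean, the Slepian--Bangs formula for the \gls{cn} case gives $\mathbf{J}_{D}=\sum_{\tau=1}^{T}\ddot{{\mR}}^{H}[\tau]\bSigma_{\mts}^{-1}[\tau]\ddot{{\mR}}[\tau]$, which is crucially \emph{independent} of $\bgamma$, so the outer expectation over the prior acts trivially. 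For the prior term, differentiating $\ln p(\bgamma)=-\bgamma^{H}\bSigma_{\bgamma}^{-1}\bgamma+\text{const}$ yields the score $-\bSigma_{\bgamma}^{-1}\bgamma$, and using $\E{\bgamma\bgamma^{H}}=\bSigma_{\bgamma}$ gives $\mathbf{J}_{P}=\bSigma_{\bgamma}^{-1}\E{\bgamma\bgamma^{H}}\bSigma_{\bgamma}^{-1}=\bSigma_{\bgamma}^{-1}$. Summing the two pieces reproduces the claimed $\mB$; as an immediate sanity check, $\mB^{-1}$ coincides with $\bUpsilon^{\ttui}$ of~\cref{thm: GLRT_perfect_CSI}, confirming that the \gls{mmse}/\gls{glrt} estimator attains the bound.

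The step requiring the most care---more bookkeeping than genuine obstacle---is working consistently in the complex domain: the score and \gls{fim} must be taken with respect to the Wirtinger derivative $\partial/\partial\bgamma^{*}$, and one must verify that the proper (circularly symmetric) structure of both the sensing noise and the \gls{rcs} prior forces the pseudo-covariance blocks to vanish, so that the real-augmented $2\Md\Mu$-dimensional Van Trees bound collapses to the compact complex form stated. Finally, I would note that because $\ddot{{\mR}}[\tau]$ and $\bSigma_{\mts}[\tau]$ are treated as known (conditioned on the precoded \gls{dl} signals, consistent with the conditioning used in~\eqref{eq: sensing_noise_cov}), the regularity conditions underlying the Van Trees inequality---namely the interchange of differentiation and integration, and the invertibility of $\bSigma_{\bgamma}$---hold automatically.
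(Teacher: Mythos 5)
Your proposal is correct and follows essentially the same route as the paper: decompose the Bayesian information matrix into a data term and a prior term, then evaluate each using the conditional Gaussian likelihood $\cn{\ddot{{\mR}}[\tau]\bgamma,\bSigma_{\mts}[\tau]}$ and the Gaussian prior $\cn{\vZ,\bSigma_{\bgamma}}$ (the paper uses the negative expected Hessian form where you use the score outer product, but these coincide under the stated regularity conditions). Your added remarks on the Wirtinger-derivative bookkeeping and the coincidence of $\mB^{-1}$ with $\bUpsilon^{\ttui}$ are consistent with the paper's subsequent discussion of the estimator being BCRLB-achieving.
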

\begin{proof}
    See~\cref{app: BCRLB}.
\end{proof}

Next, we prove that $\hat{\bgamma}^{ \ttui}$ in~\eqref{eq: gamma_opt} is \emph{\gls{bcrlb} achieving}. To do so, we observe that $\hat{\bgamma}^{ \ttui}$ can be written as $\mB^{-1}\sum\nolimits_{\tau=1}^{T}\ddot{{\mR}}^{H}[\tau]\bSigma_{\mts}^{-1}[\tau]\mathbf{r}_{\mtu}[\tau]$ and we can evaluate its covariance as $\Cov{\hat{\bgamma}^{ \ttui}}=\bSigma_{\bgamma}\overline{\bUpsilon}\mB^{-1}$, with $\overline{\bUpsilon}\triangleq\sum\nolimits_{\tau=1}^{T}\ddot{{\mR}}^{H}[\tau]\bSigma_{\mts}^{-1}[\tau]\ddot{{\mR}}[\tau]$. Then, noting that $\tt{MSE}\bbr{\hat{\bgamma}}=\Cov{\hat{\bgamma}^{ \ttui}-\bgamma}$, and some algebra reveals $\mB\Cov{\hat{\bgamma}^{ \ttui}-\bgamma}=\vI_{\Mu\Md}$; establishing the claim. 

On the other hand, for distributed \gls{rcs} estimation,  we obtain $\hat{\bgamma}_{m}^{ \ttui}$ locally at the $m$th \gls{ul} \gls{ap}; with which we can form an overall estimate of $\bgamma$ by concatenating $\bbr{\hat{\bgamma}_{m}^{ \ttui}}$, $\forall m$, i,e., from all the \gls{ul} \glspl{ap}. To differentiate this from $\hat{\bgamma}^{ \ttui}$ in~\eqref{eq: gamma_opt}, we denote the concatenated estimate as $\hat{\bgamma}^{ \ttui}_{\tt dist.}$, with $\bsr{\hat{\bgamma}^{ \ttui}_{\tt dist.}}_{:, (m-1)\Mu:1:1:m\Mu}=\hat{\bgamma}_{m}^{ \ttui}\mbbC{\Md\times 1}$.   Now, we can characterize the \gls{mse} \gls{wrt} $\hat{\bgamma}^{ \ttui}_{\tt dist.}$ and compare it with the derived \gls{bcrlb}.
\begin{lem}\label{lem: dist_est}
It can be shown that $\tt{MSE}\bbr{\hat{\bgamma}^{ \ttui}_{\tt dist.}}=\Cov{\hat{\bgamma}^{ \ttui}_{\tt dist.}-\bgamma}=\bUpsilon^{ \ttui}_{\tt dist.}$, with $\bUpsilon^{ \ttui}_{\tt dist.}\triangleq\blkdm{\bUpsilon_{m}^{ \ttui}}$ and $\bUpsilon_{m}^{ \ttui}$ being as per~\cref{prop: GLRT1}. Also, the \gls{mse} is strictly lower-bounded as $\tr{\tt{MSE}\bbr{\hat{\bgamma}^{ \ttui}_{\tt dist.}}}>\tr{\mB^{-1}}$; because
\begin{align*}
&{\tt{MSE}}\bbr{\hat{\bgamma}^{ \ttui}_{\tt dist.}}-{\tt{MSE}}\bbr{\hat{\bgamma}^{ \ttui}}=\Delta_{{\tt MSE}}\succ\vZ_{\Mu\Md},
\end{align*}
where $\Delta_{{\tt MSE}}=\mathbf{T}_{1}\bpr{\mathbf{T}_{1}^{-1}\bpr{\sum\limits_{\tau=1}^{T}\ddot{{\mR}}^{H}[\tau]\Delta_{\mts}[\tau]\ddot{{\mR}}[\tau]}^{-1}\hspace*{-2.7mm}+\vI_{N\Mu}}$, with respective terms previously defined in~\cref{thm: centralize_vs_distributed}.
\end{lem}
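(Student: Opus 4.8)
The plan is to prove the three assertions of the lemma in sequence: the block-diagonal form of the distributed error covariance, the Loewner-order gap $\Delta_{\tt MSE}\succeq\vZ$, and the strict trace inequality. The unifying observation is that the stacked estimator $\hat{\bgamma}^{\ttui}_{\tt dist.}$ is exactly the linear Bayesian estimator obtained by whitening the full received vector $\mathbf{r}_{\mtu}[\tau]$ with the \emph{block-diagonal} covariance $\bSigma_{\mts}^{\mtD}[\tau]$ instead of the true $\bSigma_{\mts}[\tau]$; since $\ddot{\mR}[\tau]$, $\bSigma_{\mts}^{\mtD}[\tau]$, and $\bSigma_{\bgamma}$ are all block-diagonal, this estimator decouples across the \gls{ul} \glspl{ap} and reproduces the per-\gls{ap} rule of \cref{prop: GLRT1}.

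First I would fix the diagonal blocks. For each $m\in\Au$, $\bgamma_m\sim\cn{\vZ,\bSigma_{\bgamma_m}}$ is independent of the equivalent sensing noise, and $\hat{\bgamma}_m^{\ttui}$ is the conditional-mean (Bayesian \gls{mmse}) estimator for the local model $\mathbf{r}_{\mtu,m}[\tau]=\ddot{\mR}_m[\tau]\bgamma_m+\vect{w}_{\mtu,m}^{\mts}[\tau]$ with noise covariance $\bSigma_{\mts,m}[\tau]$; hence by the orthogonality principle its error covariance equals the posterior covariance $\bUpsilon_m^{\ttui}$. Equivalently, mirroring the centralized computation in the text preceding \cref{lem: dist_est}, one writes $\hat{\bgamma}_m^{\ttui}-\bgamma_m=-\bUpsilon_m^{\ttui}\bSigma_{\bgamma_m}^{-1}\bgamma_m+\bUpsilon_m^{\ttui}\sum_{\tau}\ddot{\mR}_m^H[\tau]\bSigma_{\mts,m}^{-1}[\tau]\vect{w}_{\mtu,m}^{\mts}[\tau]$, evaluates the covariance using independence of $\bgamma_m$ and the noise together with whiteness of $\vect{w}_{\mtu,m}^{\mts}[\tau]$ across $\tau$, and collapses it to $\bUpsilon_m^{\ttui}(\bSigma_{\bgamma_m}^{-1}+\sum_\tau\ddot{\mR}_m^H[\tau]\bSigma_{\mts,m}^{-1}[\tau]\ddot{\mR}_m[\tau])\bUpsilon_m^{\ttui}=\bUpsilon_m^{\ttui}$. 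Stacking over $m$ identifies the diagonal blocks of $\Cov{\hat{\bgamma}^{\ttui}_{\tt dist.}-\bgamma}$ as $\blkdm{\bUpsilon_m^{\ttui}}=\mathbf{T}_1$, which also fixes the trace.

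Next I would obtain the gap. Because $\hat{\bgamma}^{\ttui}_{\tt dist.}$ is a particular linear estimator of the same $\bgamma$ from the same data $\bbr{\mathbf{r}_{\mtu}[\tau]}$, while $\hat{\bgamma}^{\ttui}$ is the Bayesian \gls{mmse} estimator attaining $\mB^{-1}$ (the \gls{bcrlb} of \cref{lemm: BCRLB}), optimality of \gls{mmse} in the Loewner order gives $\Cov{\hat{\bgamma}^{\ttui}_{\tt dist.}-\bgamma}\succeq\mB^{-1}$, i.e. $\Delta_{\tt MSE}\succeq\vZ$; the trace then yields $\tr{\tt{MSE}\bbr{\hat{\bgamma}^{\ttui}_{\tt dist.}}}\geq\tr{\mB^{-1}}$ via the previous paragraph. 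For the closed form I would reuse the algebra of \cref{thm: centralize_vs_distributed}: with $\Delta_{\vh}=\bSigma_{\mts}[\tau]-\bSigma_{\mts}^{\mtD}[\tau]$ the off-block-diagonal (cross-\gls{ap}) part, the push-through identity $(\bSigma_{\mts}^{\mtD}[\tau]\Delta_{\vh}^{-1}+\vI_{N\Mu})^{-1}=\Delta_{\vh}\bSigma_{\mts}^{-1}[\tau]$ shows $\Delta_{\mts}[\tau]=\bSigma_{\mts}^{-1}[\tau]-(\bSigma_{\mts}^{\mtD}[\tau])^{-1}$, so that $\mB=\mathbf{T}_1^{-1}+\sum_\tau\ddot{\mR}^H[\tau]\Delta_{\mts}[\tau]\ddot{\mR}[\tau]$; a single application of the matrix inversion lemma to $\mB^{-1}$ then produces the displayed expression for $\Delta_{\tt MSE}$. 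Strictness, hence $\tr{\tt{MSE}\bbr{\hat{\bgamma}^{\ttui}_{\tt dist.}}}>\tr{\mB^{-1}}$, follows because $\Delta_{\mts}[\tau]$, and therefore the gap, vanishes precisely when the cross-\gls{ap} blocks of $\bSigma_{\mts}[\tau]$ do, i.e. exactly in Cases~I and~II of \eqref{eq: special_case}, and is otherwise nonzero.

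The step I expect to be the main obstacle is pinning down the \emph{sign} of the gap. The tempting route, proving $\bSigma_{\mts}^{-1}[\tau]\succeq(\bSigma_{\mts}^{\mtD}[\tau])^{-1}$ and sandwiching by $\ddot{\mR}[\tau]$, does not work: the difference between a precision matrix and the inverse of its block-diagonal part is in general indefinite (already a $2\times2$ block check gives a positive diagonal but a negative determinant). The Loewner ordering therefore cannot be read off the summand $\sum_\tau\ddot{\mR}^H[\tau]\Delta_{\mts}[\tau]\ddot{\mR}[\tau]$ directly and must instead be argued through the global optimality of the \gls{mmse} estimator, i.e. that the distributed scheme forgoes the cross-\gls{ap} noise correlations exploited by centralized whitening. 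Care is also needed to keep the comparison consistent: the Loewner inequality holds between the \emph{full} distributed error covariance and $\mB^{-1}$, whereas the per-\gls{ap} identity and the trace statement involve only its block-diagonal part $\mathbf{T}_1$.
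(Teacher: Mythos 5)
Your proposal is correct on every point that the paper's one-line proof (``similar to that of~\cref{thm: centralize_vs_distributed}'') actually covers, and it is more informative. The paper's route is purely algebraic: identify the distributed \gls{mse} with $\mathbf{T}_{1}=\blkdm{\bUpsilon_{m}^{ \ttui}}$, recall $\mB^{-1}=\mathbf{T}_{1}+\Delta_{1}$ from~\eqref{eq: mtT1}, and read off $\Delta_{\tt MSE}=-\Delta_{1}$; the positive-definiteness is asserted, not derived. You add two things the paper omits: an explicit derivation of the per-\gls{ap} error covariance $\bUpsilon_{m}^{ \ttui}$ from the error decomposition $\hat{\bgamma}_m^{\ttui}-\bgamma_m$, and a sign argument for the gap via \gls{mmse}/\gls{bcrlb} optimality rather than via the algebra. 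Your identity $\Delta_{\mts}[\tau]=\bSigma_{\mts}^{-1}[\tau]-\bpr{\bSigma_{\mts}^{\mtD}[\tau]}^{-1}$ and the resulting $\mB=\mathbf{T}_{1}^{-1}+\sum_{\tau}\ddot{{\mR}}^{H}[\tau]\Delta_{\mts}[\tau]\ddot{{\mR}}[\tau]$ are exactly what the appendix of~\cref{thm: centralize_vs_distributed} uses.

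The caveat in your last paragraph is not merely a point of care; it is a genuine gap, and it sits in the lemma rather than in your argument. \Gls{mmse} optimality lower-bounds the \emph{true} error covariance $\Cov{\hat{\bgamma}^{ \ttui}_{\tt dist.}-\bgamma}$ by $\mB^{-1}$, and that matrix is \emph{not} block-diagonal: the \gls{ul} users' symbols enter the equivalent sensing noise of every \gls{ul} \gls{ap}, so the cross-\gls{ap} error blocks are nonzero whenever $\bSigma_{\mts}[\tau]\neq\bSigma_{\mts}^{\mtD}[\tau]$. Its diagonal blocks are $\bUpsilon_{m}^{ \ttui}$ (your computation), which is why the trace inequality $\tr{\mathbf{T}_{1}}\geq\tr{\mB^{-1}}$ survives, with strictness following from $\tr{\Cov{\hat{\bgamma}^{ \ttui}_{\tt dist.}-\bgamma}}-\tr{\mB^{-1}}=\Elr{\snorm{\hat{\bgamma}^{ \ttui}_{\tt dist.}-\hat{\bgamma}^{ \ttui}}}$. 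But the Loewner claim $\mathbf{T}_{1}-\mB^{-1}\succ\vZ$ is equivalent to $\sum_{\tau}\ddot{{\mR}}^{H}[\tau]\Delta_{\mts}[\tau]\ddot{{\mR}}[\tau]\succ\vZ$, and your own $2\times 2$ observation (positive diagonal, negative determinant of $\bSigma^{-1}-(\bSigma^{\mtD})^{-1}$) already defeats it: with $N=1$, $\Mu=2$, $\Md=T=1$, $\ddot{{\mR}}=\vI_{2}$, $\bSigma_{\bgamma}=\vI_{2}$, and unit-variance noise with cross-correlation $1/2$, the matrix $\mathbf{T}_{1}-\mB^{-1}$ has eigenvalues $1/6$ and $-1/10$. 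So the matrix inequality cannot be closed by any argument; only the trace statement (which is what the lemma actually needs for the \gls{nmse} comparison) is provable, and your proposal proves it.
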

\begin{proof}
The proof is similar to that of~\cref{thm: centralize_vs_distributed}.
\end{proof}

In summary,~\cref{thm: centralize_vs_distributed} and~\cref{lem: dist_est} quantify how superior the centralized scheme is over the distributed one in terms of \gls{glrt} and \gls{mse} of the estimated \gls{rcs}. However, note that the complexity of the centralized \gls{glrt} scales approximately as $\bigO{\Mu^3\Md^3}$ while for distributed it is $\bigO{\Md^3}$. 
\begin{rem}
Due to the non-diagonal sensing noise covariance matrix $\bSigma_{{\mts}}[\tau]$~\eqref{eq: sensing_noise_cov} at the \gls{cpu}, local matched filtering at the \gls{ul} \glspl{ap} does not yield sufficient statistics that can be relayed to the \gls{cpu} to estimate \gls{rcs} and compute global \gls{llr}. Hence, the centralized scheme requires $TN\Mu$ dimensional complex vectors to be sent to the \gls{cpu} via the front-haul. However, for the cases in~\cref{sec: special cases}, it can be shown that local matched filtering on $\mathbf{r}_{\mtu, m}[\tau]$~\eqref{eq: uplink_signal_mthAP}, obtained at the \glspl{ap}, is indeed the sufficient statistics based on which the \gls{cpu} can obtain a global estimate of \gls{rcs} and evaluates the \gls{glrt}; yielding the same result of a fully centralized \gls{glrt} based on the $TN\Mu$ dimensional concatenated \gls{ul} received signals. This implies, $\delta_{\mtT}=0$ in~\cref{thm: centralize_vs_distributed} and
$\Delta_{{\tt MSE}}=\vZ_{\Mu\Md}$ in~\cref{lem: dist_est}.
\end{rem}

\subsubsection{\edit{Effects of Environmental Scatterers}}\label{sec: scatter_model}
\edit{In practice, \gls{ul} \glspl{ap} also receive undesired signals due to environmental scatterers/clutters.
In principle, these signals could be measured and partially canceled beforehand if sufficient
prior information about them is available. 
To capture residual reflections, we can modify $\mathbf{r}_{\mtu, m}$ as $\mathbf{r}_{\mtu, m}^{\mathtt c}={\mathbf{r}_{\mtu, m}+\sum\nolimits_{j\in\Ad}\tilde{\mathbf{C}}_{mj}\vx_{\mtd, j}[\tau]},$
where $\tilde{\mathbf{C}}_{mj}$ is the residual clutter channel between the $m$th \gls{ul} AP and the $j$th \gls{dl} AP. Now, this only changes the equivalent noise~(which now includes also clutter and residual \gls{inai}) covariance. Thus, the results in the previous and succeeding sections follow with a minor change in the \gls{inai} covariance.
}

\section{Joint UL Signal and RCS Estimation}\label{sec: Joint}
In this section, we develop a framework for joint \gls{ul} data and target \gls{rcs} estimation. We first rewrite~\eqref{eq: uplink_signal_mthAP} as:
\begin{align}
		\mathbf{r}_{\mtu, m}[\tau]&=\overline{\vH}_{m}\bsymbol[\tau]+\ddot{{\mR}}_{m}[\tau]\bgamma_{m}+\overline{\vect{w}}_{\mtu, m}^{\mts}[\tau]\notag\\&=\underbrace{\begin{bmatrix}\overline{\vH}_{m} \quad \ddot{{\mR}}_{m}[\tau]\end{bmatrix}}_{\triangleq \bXi_{m}[\tau]}\underbrace{\begin{bmatrix}
        \bsymbol[\tau]\\
        \bgamma_{m}
\end{bmatrix}}_{\triangleq \btheta_{m}[\tau]}+\overline{\vect{w}}_{\mtu, m}^{\mts}[\tau],\label{eq: uplinkjoint_signal_mthAP}
	\end{align}
where the modified \gls{ul} channel matrix $\overline{\vH}_{m}\triangleq\vH_{m}\bEu^{1/2}$ with $\vH_{m}\triangleq\bsr{\vh_{m1}, \vh_{m2}, \ldots, \vh_{m\Ku}}\mbbC{N\times \Ku}$ and  $\bEu\triangleq\diag{\Eu{1}, \Eu{2}, \ldots, \Eu{\Ku}}$; and the modified sensing noise $\overline{\vect{w}}_{\mtu, m}^{\mts}[\tau]$ consists of only~\gls{inai} and~\gls{awgn} at the $m$th \gls{ul} \gls{ap}. Using~\cref{app: noise covariance}, we can show that $\overline{\vect{w}}_{\mtu, m}^{\mts}[\tau]\sim\cn{\vZ_{N}, \bA_{m}[\tau]}$. 
Now estimation of $\btheta_{m}[\tau], \forall \tau=1,2, \ldots, T$, can be done locally at the \gls{ul} \glspl{ap} or centrally at the \gls{cpu}, for all $m\in\Au$. We present the respective schemes next.

\subsection{Distributed Joint Estimation}
The received signal at the $m$th \gls{ap} at the end of the observation window can be expressed as
\begin{align}    \widetilde{\mathbf{r}}_{\mtu, m}=\underbrace{\begin{bmatrix}
  \overline{\vH}_{m} & \vZ & \vZ & \vZ & \ddot{{\mR}}_{m}[1]\\
  \vZ & \overline{\vH}_{m} & \vZ & \vZ & \ddot{{\mR}}_{m}[2]\\
  \vZ & \ddots & \vdots & \vZ &\ldots\\
  \vZ & \vZ &  \vZ & \overline{\vH}_{m} & \ddot{{\mR}}_{m}[T]
\end{bmatrix}}_{\triangleq \bOmegam}\underbrace{\begin{bmatrix}
\bsymbol[1] \\
\bsymbol[2]\\
\ldots \\
\bsymbol[T]\\
\bgamma_{m}
\end{bmatrix}}_{\triangleq \widetilde{\btheta}_{m}}+\widetilde{\overline{\vect{w}}}_{\mtu, m}^{\mts}.\label{eq: joint_dist_2}
\end{align}
Here, $\widetilde{\overline{\vect{w}}}_{\mtu, m}^{\mts}$ is the concatenated \gls{awgn} at the $m$th \gls{ul} \gls{ap} for $\tau=1,2, \ldots, T$.
Based on~\eqref{eq: joint_dist_2}, we have the following lemma.
\begin{lem}
    The estimate of $\widetilde{\btheta}_{m}$, denoted by $\hat{\widetilde{\btheta}}_{m}$, can be evaluated as $\hat{\widetilde{\btheta}}_{m}=\maximize{\widetilde{\btheta}_{m}}\quad\p{\widetilde{\mathbf{r}}_{\mtu, m}\vert \widetilde{\btheta}_{m}}\p{\widetilde{\btheta}_{m}}$; resulting 
\begin{multline}    \widehat{\widetilde{\btheta}}_{m}=\bpr{\bOmegam^{H}\invCov{\widetilde{\overline{\vect{w}}}_{\mtu, m}^{\mts}}\bOmegam+\invCov{\widetilde{\btheta}_{m}}}^{-1}\\\times\bOmegam^{H}\invCov{\widetilde{\overline{\vect{w}}}_{\mtu, m}^{\mts}}\widetilde{\mathbf{r}}_{\mtu, m},
\end{multline}
where  $\widetilde{\mathbf{r}}_{\mtu, m}$ is given in~\eqref{eq: joint_dist_2}, $\Cov{\widetilde{\overline{\vect{w}}}_{\mtu, m}^{\mts}}=\blkdT{\bA_{m}[\tau]}$, and  $\Cov{\widetilde{\btheta}_{m}}=\blkd{\bsr{\vI_{T}\otimes\vI_{\Ku}}, \bSigma_{\bgamma_m}}$. Finally, the estimate of the $\gls{rcs}$ $\bgamma_m$, denoted by $\hat{\bgamma}_m^{\tt J}\mbbC{\Md\times 1}$, equals the last $T\Ku+1:1:T\Ku+\Md$ entries in $\widehat{\widetilde{\btheta}}_{m}$.  
\end{lem}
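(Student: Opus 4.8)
The plan is to recognize \eqref{eq: joint_dist_2} as a \emph{conditionally linear Gaussian} observation model and then invoke the standard Gaussian MAP machinery. Conditioned on the precoded \gls{dl} signals $\vx_{\mtd, j}[\tau]$ and the \gls{ap}-user channels (which render $\overline{\vH}_m$, $\ddot{\mR}_m[\tau]$, and hence $\bOmegam$, deterministic), the observation $\widetilde{\mathbf{r}}_{\mtu, m}$ is linear in the unknown $\widetilde{\btheta}_m$ corrupted by additive Gaussian noise, while $\widetilde{\btheta}_m$ carries a Gaussian prior. Consequently the posterior $\p{\widetilde{\btheta}_m \vert \widetilde{\mathbf{r}}_{\mtu, m}} \propto \p{\widetilde{\mathbf{r}}_{\mtu, m}\vert\widetilde{\btheta}_m}\p{\widetilde{\btheta}_m}$ is itself Gaussian, so the MAP estimate coincides with the posterior mean and the maximization in the lemma admits the claimed closed form. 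I would first pin down the two covariance matrices and then carry out the maximization, mirroring the argument for \cref{thm: GLRT_perfect_CSI} in \cref{app: GLRT_perfect_CSI}.

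For the prior, the \gls{ul} symbols $s_{\mtu, k}[\tau]\sim\cn{0,1}$ are mutually independent across users $k\in\Uu$ and slots $\tau$, so $\bsymbol[\tau]\sim\cn{\vZ_{\Ku}, \vI_{\Ku}}$ and stacking over the window gives covariance $\vI_T\otimes\vI_{\Ku}$; the \gls{rcs} prior is $\bSigma_{\bgamma_m}$; and since the symbols and $\bgamma_m$ are independent, the joint prior covariance block-diagonalizes into $\Cov{\widetilde{\btheta}_m}=\blkd{\bsr{\vI_T\otimes\vI_{\Ku}}, \bSigma_{\bgamma_m}}$. For the noise, I would invoke the result of \cref{app: noise covariance} that $\overline{\vect{w}}_{\mtu, m}^{\mts}[\tau]\sim\cn{\vZ_N, \bA_m[\tau]}$ and note that the realizations in distinct slots are independent (independent \gls{dl} data and \gls{awgn}), yielding $\Cov{\widetilde{\overline{\vect{w}}}_{\mtu, m}^{\mts}}=\blkdT{\bA_m[\tau]}$. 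The key conceptual point to stress is that, by promoting the \gls{ul} symbols from the sensing noise into the parameter vector $\widetilde{\btheta}_m$, the effective noise covariance collapses from the full $\bSigma_{\mts, m}[\tau]$ of \eqref{eq: Sigma_m_APs} to the \gls{inai}-plus-\gls{awgn} term $\bA_m[\tau]$ alone.

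With both densities Gaussian, I would write the negative log-posterior, up to an additive constant, as the quadratic cost $\bpr{\widetilde{\mathbf{r}}_{\mtu, m}-\bOmegam\widetilde{\btheta}_m}^{H}\invCov{\widetilde{\overline{\vect{w}}}_{\mtu, m}^{\mts}}\bpr{\widetilde{\mathbf{r}}_{\mtu, m}-\bOmegam\widetilde{\btheta}_m}+\widetilde{\btheta}_m^{H}\invCov{\widetilde{\btheta}_m}\widetilde{\btheta}_m$, take the Wirtinger gradient with respect to $\widetilde{\btheta}_m^{*}$, and set it to zero. This produces the normal equation $\bpr{\bOmegam^{H}\invCov{\widetilde{\overline{\vect{w}}}_{\mtu, m}^{\mts}}\bOmegam+\invCov{\widetilde{\btheta}_m}}\widetilde{\btheta}_m=\bOmegam^{H}\invCov{\widetilde{\overline{\vect{w}}}_{\mtu, m}^{\mts}}\widetilde{\mathbf{r}}_{\mtu, m}$, whose unique solution is the stated $\widehat{\widetilde{\btheta}}_m$. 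Since $\invCov{\widetilde{\btheta}_m}\succ\vZ$ (the prior covariance is positive definite), the Hessian is positive definite, guaranteeing both that the stationary point is the global maximizer and that the required inverse exists --- in contrast to the rank-deficiency concern for the unregularized \gls{mle} noted earlier. The estimate $\hat{\bgamma}_m^{\tt J}$ then follows by reading off the last $\Md$ entries of $\widehat{\widetilde{\btheta}}_m$, consistent with the block partition of $\widetilde{\btheta}_m$ in \eqref{eq: joint_dist_2}.

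The derivation is essentially textbook linear-Gaussian Bayesian estimation, so I do not anticipate a genuine obstacle; the only steps requiring care are the bookkeeping that establishes the two block-diagonal covariance structures and the explicit conditioning (on precoded data and \gls{csi}) that makes $\bOmegam$ deterministic and the model conditionally Gaussian. If one wished to relax perfect \gls{ul} \gls{csi}, the same marginalization-over-conditional-covariances argument flagged in the Remark following \cref{prop: GLRT1} would apply here as well.
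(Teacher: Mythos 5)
Your proposal is correct and follows the same route the paper takes for its analogous derivation in the appendix proof of the centralized GLRT: condition on the precoded signals and CSI so the model is linear-Gaussian, form the quadratic negative log-posterior, and solve the normal equations, with the prior covariance guaranteeing invertibility. The paper omits the proof of this lemma entirely, and your identification of the two block-diagonal covariances and the reading-off of the last $\Md$ entries matches the statement exactly, so there is nothing to flag.
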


\subsection{Centralized Joint Estimation}
In this case, the \gls{cpu} concatenates $\widetilde{\mathbf{r}}_{\mtu, m}$ from all \gls{ul} \glspl{ap}. Following~\eqref{eq: joint_dist_2}, the received signal in the \gls{cpu} becomes
\begin{align}   \widetilde{\mathbf{r}}_{\mtu}&\triangleq \bsr{\widetilde{\mathbf{r}}_{\mtu, 1}^{T}, \widetilde{\mathbf{r}}_{\mtu, 2}^{T},\ldots, \widetilde{\mathbf{r}}_{\mtu, \Mu}^{T} }^{T}=\bOmega\btheta_{\tt CPU}+\widetilde{\overline{\vect{w}}}_{\mtu}^{\mts},\label{eq: joint_CPU1}
\end{align}
where $\bOmega\mbbC{TN\Mu\times \bpr{T\Ku+\Md\Mu}}$ and $\btheta_{\tt CPU}\mbbC{T\Ku+\Md\Mu}$ are respectively given by
    \begin{align}        
    &\bOmega=\begin{bmatrix}
          \overline{\vH}_{1} & \vZ & \cdots & \vZ
          &\ddot{{\mR}}_{1}[1] & \cdots & \vZ \\
          \vdots & \vdots & \ddots & \vdots & \vdots & \ddots & \vdots\\ \overline{\vH}_{\Mu} & \vZ & \ldots  & \vZ & \vZ & \ldots & \ddot{{\mR}}_{\Mu}[1]\\
          \vZ & \overline{\vH}_{1} & \cdots & \vZ &\ddot{{\mR}}_{1}[2] & \cdots & \vZ\\
          \vdots & \vdots & \ddots & \vdots & \vdots & \ddots & \vdots\\
           \vZ & \overline{\vH}_{\Mu} & \cdots & \vZ & \vZ & \ldots & \ddot{{\mR}}_{\Mu}[2]
           \\
          \vdots & \vdots & \ddots & \vdots & \vdots & \ddots &  \vdots\\
          \vZ & \vZ & \cdots & \overline{\vH}_{1} & \ddot{{\mR}}_{1}[T] & \cdots & \vZ
          \\
          \vdots & \vdots & \ddots & \vdots & \vdots & \ddots & \vdots\\
          \vZ & \vZ & \cdots & \overline{\vH}_{\Mu}& \vZ & \ldots & \ddot{{\mR}}_{\Mu}[T]
   \end{bmatrix},\label{eq: omega_CPU}
   \end{align}
   and
\begin{align}
   \btheta_{\tt CPU}=\bsr{\bsymbol^{T}[1], \bsymbol^{T}[2], \ldots, \bsymbol^{T}[T], \bgamma^{T}}^{T}.\label{eq: theta_cpu}
   \end{align}
Recall that $\bgamma=\bsr{\bgamma_{1}^{T},\bgamma_{2}^{T},\ldots,\bgamma_{\Mu}^{T}}^{T}$, the concatenated \glspl{rcs} across all the \gls{ul} \glspl{ap}. Finally, $\widetilde{\overline{\vect{w}}}_{\mtu}^{\mts}$ is the concatenated~\gls{awgn} vector at the \gls{cpu}. Based on~\eqref{eq: joint_CPU1}, we have the following result.

\begin{cor}
    The joint estimate target~\gls{rcs} and \gls{ul} data symbols at the \gls{cpu}, denoted by $\hat{\btheta}_{\tt CPU}$, can be evaluated as 
    \begin{align}
    \hat{\btheta}_{\tt CPU}&=\bpr{\bOmega^{H}\invCov{\widetilde{\overline{\vect{w}}}_{\mtu}^{\mts}}\bOmega+\invCov{{\btheta}_{\tt CPU}}}^{-1}\notag\\&\qquad\qquad\quad\times\bOmega^{H}\invCov{\widetilde{\overline{\vect{w}}}_{\mtu}^{\mts}}\widetilde{\mathbf{r}}_{\mtu},
    \end{align}
    where $\Cov{\widetilde{\overline{\vect{w}}}_{\mtu}^{\mts}}=\blkd{\bA[\tau]}_{\tau=1}^{T}$ and $\Cov{{\btheta}_{\tt CPU}}=\blkd{\vI_{T}\otimes\vI_{\Ku}, \Sigma_{\bgamma}}$. Finally, the estimate of the \gls{rcs}, denoted by $\hat{\bgamma}^{\tt J}$, is the last $T\Ku+1:1:T\Ku+\Mu\Md$ entries in $\hat{\btheta}^{\tt CPU}$.
\end{cor}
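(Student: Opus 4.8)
The plan is to recognize that \eqref{eq: joint_CPU1} is a standard linear Gaussian observation model and to apply the maximum a posteriori (MAP) estimator in exactly the same manner as the preceding distributed lemma. First I would establish that the concatenated sensing noise $\widetilde{\overline{\vect{w}}}_{\mtu}^{\mts}$ is zero-mean complex Gaussian. By construction, once the UL users' contribution has been absorbed into the signal term $\bOmega\btheta_{\tt CPU}$, the modified sensing noise at each AP and slot, $\overline{\vect{w}}_{\mtu, m}^{\mts}[\tau]$, contains only residual \gls{inai} and \gls{awgn}; \cref{app: noise covariance} gives $\overline{\vect{w}}_{\mtu, m}^{\mts}[\tau]\sim\cn{\vZ_N, \bA_m[\tau]}$. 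Stacking across the $\Mu$ UL APs yields the per-slot covariance $\bA[\tau]$, and since the \gls{inai} and \gls{awgn} are independent across distinct channel uses (slots), concatenating over $\tau=1,\ldots,T$ produces the block-diagonal covariance $\Cov{\widetilde{\overline{\vect{w}}}_{\mtu}^{\mts}}=\blkd{\bA[\tau]}_{\tau=1}^{T}$.

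Next I would characterize the prior $\p{\btheta_{\tt CPU}}$. From \eqref{eq: theta_cpu}, $\btheta_{\tt CPU}$ stacks the UL data symbols $\bsymbol[\tau]$ across all slots together with the concatenated \gls{rcs} vector $\bgamma$. The symbols satisfy $s_{\mtu,k}[\tau]\sim\cn{0,1}$ and are mutually independent across users and slots, giving covariance $\vI_T\otimes\vI_{\Ku}$; the \gls{rcs} entries are $\cn{0,\sigma_{mj}}$ and independent of the symbols, giving covariance $\Sigma_{\bgamma}$. Because the data block and the \gls{rcs} block are independent, the overall prior covariance is block-diagonal, $\Cov{\btheta_{\tt CPU}}=\blkd{\vI_T\otimes\vI_{\Ku}, \Sigma_{\bgamma}}$.

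With the linear model $\widetilde{\mathbf{r}}_{\mtu}=\bOmega\btheta_{\tt CPU}+\widetilde{\overline{\vect{w}}}_{\mtu}^{\mts}$, a Gaussian prior on $\btheta_{\tt CPU}$, and Gaussian noise independent of $\btheta_{\tt CPU}$, the posterior is Gaussian and the MAP estimate coincides with the conditional mean. The stated closed form then follows from the standard linear-Gaussian MAP result, identical to the derivation of the distributed estimator $\widehat{\widetilde{\btheta}}_{m}$ in the preceding lemma, with the per-AP quantities $(\bOmegam, \widetilde{\btheta}_m, \bA_m[\tau])$ replaced by the CPU-level quantities $(\bOmega, \btheta_{\tt CPU}, \bA[\tau])$. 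Extracting the trailing $T\Ku+1:1:T\Ku+\Mu\Md$ entries of $\hat{\btheta}_{\tt CPU}$ then yields $\hat{\bgamma}^{\tt J}$.

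The only genuine subtlety, rather than an obstacle, is the bookkeeping of the ordering of the stacked observations so that $\bOmega$ in \eqref{eq: omega_CPU} and the block-diagonal covariance $\blkd{\bA[\tau]}_{\tau=1}^{T}$ are indexed consistently (slot-major versus AP-major). Once a single ordering convention is fixed, the block-diagonal noise covariance together with the independence of the two prior blocks makes the matrix-inversion formula immediate, so no analysis beyond the preceding lemma is required.
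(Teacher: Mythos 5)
Your proposal is correct and follows exactly the route the paper intends: the corollary is an immediate instance of the linear-Gaussian MAP (posterior-mean) formula applied to the model in \eqref{eq: joint_CPU1}, with the block-diagonal noise covariance $\blkd{\bA[\tau]}_{\tau=1}^{T}$ justified by \cref{app: noise covariance} and slot-independence, and the prior covariance $\blkd{\vI_{T}\otimes\vI_{\Ku}, \Sigma_{\bgamma}}$ following from the independence of the unit-variance UL symbols and the \gls{rcs} entries. The paper itself omits the proof precisely because it is the same computation as the distributed lemma with $(\bOmegam, \widetilde{\btheta}_m)$ replaced by $(\bOmega, \btheta_{\tt CPU})$, which is what you do.
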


Finally, we note that complexity of the joint scheme scales approximately as $\bigO{\bpr{T\Ku+\Mu}^3}$ for distributed and $\bigO{\Mu\bpr{T\Ku+\Md}^3}$ for centralized; which is $\bigO{\bpr{T\Ku}^3}$ higher compared to the framework presented in~\cref{sec: TUI}.  However, in the current formulation, we can estimate both \gls{ul} users' symbols as well as \gls{rcs} of the target.

Next, we present the \glspl{se}; capturing the effects of \gls{cli} and interference from the target reflections, i.e., \gls{tri}.

\section{Communication \& Sensing SE Analysis}\label{sec: SE and SINR}
Considering \gls{ul} communication, the equivalent noise covariance at the \gls{cpu}, denoted by $\bSigma_{\mtc}$,  can be expressed as:
\begin{align*}
\bSigma_{\mtc}&={\Ed}\blkdm{\sum\nolimits_{j\in\Ad}\sigma_{mj}\dot{\mR}_{mj}\vP_{j}{\bPi}_{j}\vP_{j}^{H}\dot{\mR}_{mj}^{H}}\\&+\Nvar\vI_{M_{\mtu}N}+\sum\nolimits_{j\in\Ad}b_{j}\diag{\zeta_{1j},\ldots,\zeta_{M_{\mtu} j}}\otimes\vI_{N},
\end{align*}
where $b_{j}\triangleq\Ed\bbr{\sum\limits_{n\in\Ud}\pi_{\mtd, n}{\snorm{\vp_{\mtd, jn}}}+\pi_{\mts}{\snorm{\vp_{\mts, j}}}}$. Here, the first term in $\bSigma_{\mtc}$ encases the effect of \gls{tri}. Next, we consider a \gls{mmse} optimal combiner at the \gls{cpu}, yielding the \gls{sinr} of the $k$th \gls{ul} user as 
\begin{align*}
		\SINRuk={\Eu{k}}\vh_{k}^{H}\bpr{\sum\nolimits_{k'\in\Uu\backslash\{k\}}\Eu{k'}\vh_{k'}\vh_{k'}^{H}+\bSigma_{\mtc}}^{-1}\vh_{k}.
\end{align*}
For \gls{dl} users, channel reciprocity~\cite{Erik_reciprocity_PIMRC} based \gls{rzf} precoder can be written as:
	\begin{align}
\vp_{\mtd,n}=\frac{\bpr{\sum\nolimits_{n'=1}^{K_{\mtd}}\vh_{n'}\vh_{n'}^{H}+\epsilon\vI_{NM_{\mtd}}}^{-1}\vh_{n}}{\norm{\bpr{\sum\nolimits_{n'=1}^{K_{\mtd}}\vh_{n'}\vh_{n'}^{H}+\epsilon\vI_{NM_{\mtd}}}^{-1}\vh_{n}}},
	\end{align}
	where $\vh_{n'}=\bsr{\vh_{1n'}^{T}, \ldots, \vh_{M_{\mtd}n'}^{T}}^{T}$ and $\epsilon>0$ is the regularizer, whose value can be optimized numerically. Next, we design two methods for sensing precoding: ``Target-centric" and ``User-centric". In the first method, we choose the sensing precoding vector $\vp_{\mts}^{\tt trg.}\mbbC{NM_\mtd}$ as the dominant right singular vector of the augmented channel matrix $\dot{\mR}$, where $$\dot{\mR}\triangleq\begin{bmatrix}
	    \dot{\mR}_{11}& \dot{\mR}_{12} & \ldots &\dot{\mR}_{1M_\mtd}\\
        \dot{\mR}_{21}& \dot{\mR}_{22} & \ldots &\dot{\mR}_{2M_\mtd}\\
        \vdots&\ddots&\vdots &\vdots\\
        \dot{\mR}_{(M_\mtu-1) 1}& \ldots & \dot{\mR}_{(M_\mtu-1) (M_\mtd-1)} &\dot{\mR}_{(M_\mtu-1) M_\mtd}\\
        \dot{\mR}_{M_\mtu 1}& {\mR}_{M_\mtu 2} & \ldots &\dot{\mR}_{M_\mtu M_\mtd}
	\end{bmatrix}.
	$$ Then, in the second method, we select the precoding vector by projecting $\vp_{\mts}^{\tt trg.}$ onto the nullspace of the concatenated channel $\bsr{\vh_{1}, \vh_{2}, \ldots, \vh_{K_{\mtd}}}$, i.e.,
	\begin{align}
\vp_{\mts}^{\tt com.}=\frac{\bpr{\vI_{NM_{\mtd}}-\vV_{\mtu}\vV_{\mtu}^{H}}\vp_{\mts}^{\tt trg.}}{\norm{\bpr{\vI_{NM_{\mtd}}-\vV_{\mtu}\vV_{\mtu}^{H}}\vp_{\mts}^{\tt trg.}}},\label{eq: p_s}
	\end{align}
	where $\vV_{\mtu}$ is a unitary matrix whose columns span the range space of the matrix $\bsr{\vh_{1}, \vh_{2}, \ldots, \vh_{K_{\mtd}}}$. Now, based on~\eqref{eq: receive_signal_ue}, the \gls{dl} \gls{sinr} can be written as:\footnote{Here, $\vx_{\mtd, j}$ is used as in~\eqref{eq: dl_tx} with the modification that $\pi_{\mtd, jn}=\pi_{\mtd, n}$ and $\pi_{\mts, j}=\pi_{\mts}$, $\forall j\in\Ad$.
    This is because, in the centralized case, the precoding vector at the $j$th \gls{dl} \gls{ap} is extracted from a centrally designed precoder at the \gls{cpu}; hence, common power control coefficients across \gls{dl} \glspl{ap} preserve the direction of the precoding vectors, which in turn retains the favorable propagation of the overall $M_{\mtd}N$ dimensional channel~\cite{making_cell_free_TWC}.}
	\begin{align*}		\SINRdn&=\frac{\pi_{\mtd, n}\Ed\abslr{\vh_{n}^{T}\vp_{\mtd, n}^{*}}^2}{\begin{pmatrix}&&\Ed\sum\nolimits_{n'\in\Ud\backslash\{n\}}\pi_{\mtd, n'}\abslr{\vh_{n}^{T}\vp_{\mtd,n'}^{*}}^2\\&&+\Ed\pi_{\mts}\abslr{\vh_{n}^{T}\vp_{\mts}^{*}}^2+\sum\nolimits_{k\in\Uu}\Eu{k}\kappa_{nk}+\Nvar\end{pmatrix}}.	
	\end{align*}  
 Here, we model the \gls{inui} channel as $\cn{0,\kappa_{nk}}$, and it is independent across all users~\cite{Bai_Sabharwal_TWC_2017, Anubhab_Chandra_TCOM_22}. Here, $\kappa_{nk}$ encapsulates the effects of path loss between the $k$th \gls{ul} and $n$th \gls{dl} user.
 
 Now, given the \gls{ul} and \gls{dl} \glspl{sinr}, the sum \gls{ul}-\gls{dl} \gls{se} becomes $\Rcomm=\sum\nolimits_{k\in\Uu}\log_{2}\bsr{1+\SINRuk}+\sum\nolimits_{n\in\Ud}\log_{2}\bsr{1+\SINRdn}.$

\subsection{Sensing \edit{\gls{scnr}} and \gls{se}}\label{sec: sensing_SCNR}

In addition to \gls{pod} associated with the \gls{glrt} or the \gls{nmse} of the estimated target \gls{rcs}, sensing \edit{\gls{scnr} and} \gls{se}~(a lower bound on the mutual information between the target's response and reflected signal) is also widely used to quantify sensing performance~\cite{MI_ISAC_TCOM};  which we present next. We later use it to underpin \gls{ap} scheduling and communication-sensing trade-off.
\begin{figure*}
\begin{align}
\edit{\SCNRs}=\frac{{\Ed}\sum\limits_{m\in\Au}\sum\limits_{j\in\Ad}\sigma_{mj}\Bigg\{\sum\limits_{n\in\Ud}\pi_{\mtd, jn}\snorm{\dot{\mR}_{mj}\vp_{\mtd, jn}}+\pi_{\mts, j}\snorm{\dot{\mR}_{mj}\vp_{\mts, j}}\Bigg\}}{\sum\limits_{k\in\Uu}\sum\limits_{m\in\Au}{\Eu{k}}\snorm{\vh_{mk}}+\Nvar NM_{\mtu}+N\sum\limits_{j\in\Ad}\sum\limits_{m\in\Au}\bbr{\zeta_{mj}+\edit{\nu_{mj}}}b_j}.\label{eq: SINR_s}
\end{align}
\hrule
\end{figure*}
\begin{lem}\label{lemma: sensing_SINR}
The sensing \gls{se} is defined as $\frac{T}{T_{\tt Ch.}}\log_{2}\bsr{1+\edit{\SCNRs}}$, where $\edit{\SCNRs}\triangleq \frac{\Elr{\snorm{\ddot{{\mR}}[\tau]\bgamma}}}{\Elr{\snorm{\vect{w}_{\mtu}^{\mts}[\tau]}}}$, with expectation being taken over the statistics of equivalent sensing noise and \gls{dl} transmitted symbols; and it evaluates to~\eqref{eq: SINR_s}. Here, $T_{\tt Ch.}$ is the total coherence duration\edit{, and $\nu_{mj}$ captures the effects of the residual clutter variance~(the power of the clutter cross-section, that also includes effects of pathloss) between the $m$th \gls{ul} AP and the $j$th \gls{dl} AP.}
\end{lem}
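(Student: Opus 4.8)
The plan is to evaluate the numerator $\Elr{\snorm{\ddot{{\mR}}[\tau]\bgamma}}$ and the denominator $\Elr{\snorm{\vect{w}_{\mtu}^{\mts}[\tau]}}$ of $\SCNRs$ separately, exploiting the block-diagonal structure of $\ddot{{\mR}}[\tau]$ together with the mutual independence of the \gls{rcs} coefficients, the \gls{dl} data/sensing symbols, and the residual \gls{inai}/clutter channels, so that every cross term vanishes.

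For the numerator, I would first use $\ddot{{\mR}}[\tau]=\blkdm{\ddot{{\mR}}_{m}[\tau]}$ and $\bgamma=\bsr{\bgamma_{1}^{T},\ldots,\bgamma_{\Mu}^{T}}^{T}$ to split $\snorm{\ddot{{\mR}}[\tau]\bgamma}=\sum\nolimits_{m\in\Au}\snorm{\ddot{{\mR}}_{m}[\tau]\bgamma_{m}}$. Expanding $\ddot{{\mR}}_{m}[\tau]\bgamma_{m}=\sum\nolimits_{j\in\Ad}\gamma_{mj}\dot{\mR}_{mj}\vx_{\mtd, j}[\tau]$ via~\eqref{eq: R_ddt_m} and invoking that the $\gamma_{mj}\sim\cn{0,\sigma_{mj}}$ are zero-mean and mutually independent across $j$ (consistent with the diagonal $\bSigma_{\bgamma}$), the cross terms drop since $\E{\gamma_{mj}\gamma_{mj'}^{*}}=\sigma_{mj}$ only for $j=j'$, leaving $\sum\nolimits_{j\in\Ad}\sigma_{mj}\snorm{\dot{\mR}_{mj}\vx_{\mtd, j}[\tau]}$ after averaging over $\bgamma_{m}$. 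I would then average over the \gls{dl} symbols: substituting~\eqref{eq: dl_tx} and using $\E{\abs{s_{\mtd, n}}^2}=\E{\abs{s_{\mts}}^2}=1$ with all cross-correlations zero gives $\Elr{\snorm{\dot{\mR}_{mj}\vx_{\mtd, j}[\tau]}}=\Ed\bbr{\sum\nolimits_{n\in\Ud}\pi_{\mtd, jn}\snorm{\dot{\mR}_{mj}\vp_{\mtd, jn}}+\pi_{\mts, j}\snorm{\dot{\mR}_{mj}\vp_{\mts, j}}}$. Summing over $m\in\Au$ and $j\in\Ad$ reproduces the numerator of~\eqref{eq: SINR_s}.

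For the denominator, I would expand $\vect{w}_{\mtu}^{\mts}[\tau]$, augmented by the residual clutter term $\sum\nolimits_{j\in\Ad}\tilde{\mathbf{C}}_{j}\vx_{\mtd, j}[\tau]$ from~\cref{sec: scatter_model} (with $\tilde{\mathbf{C}}_{j}$ concatenated analogously to $\tilde{\mG}_{j}$), into its \gls{ul}-signal, \gls{inai}-plus-clutter, and \gls{awgn} components. These three groups are mutually independent and the channel realizations are zero-mean, so $\Elr{\snorm{\vect{w}_{\mtu}^{\mts}[\tau]}}$ separates into three expected squared norms. The \gls{ul}-signal term, using $\E{\abs{s_{\mtu, k}}^2}=1$ and $\snorm{\vh_{k}}=\sum\nolimits_{m\in\Au}\snorm{\vh_{mk}}$, yields $\sum\nolimits_{k\in\Uu}\sum\nolimits_{m\in\Au}\Eu{k}\snorm{\vh_{mk}}$, and the \gls{awgn} term yields $\Nvar N\Mu$. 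For the \gls{inai}-plus-clutter term I would condition on the \gls{dl} symbols and invoke the i.i.d.\ residual-channel model of~\cref{app: noise covariance}, under which $\E{\tilde{\mG}_{mj}\vx_{\mtd, j}[\tau]\vx_{\mtd, j}^{H}[\tau]\tilde{\mG}_{mj}^{H}\vert\vx_{\mtd, j}[\tau]}=\zeta_{mj}\snorm{\vx_{\mtd, j}[\tau]}\vI_{N}$ and analogously $\nu_{mj}\snorm{\vx_{\mtd, j}[\tau]}\vI_{N}$ for $\tilde{\mathbf{C}}_{mj}$; since the residual channels are independent and zero-mean across \gls{dl} \glspl{ap}, the cross terms in $j$ vanish, and taking the trace gives $N(\zeta_{mj}+\nu_{mj})\snorm{\vx_{\mtd, j}[\tau]}$ per \gls{ap} pair. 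Averaging over the \gls{dl} symbols replaces $\snorm{\vx_{\mtd, j}[\tau]}$ by $\E{\snorm{\vx_{\mtd, j}[\tau]}}=b_{j}$ under the per-\gls{ap} power-control convention $\pi_{\mtd, jn}=\pi_{\mtd, n}$, $\pi_{\mts, j}=\pi_{\mts}$, so this contribution equals $N\sum\nolimits_{j\in\Ad}\sum\nolimits_{m\in\Au}(\zeta_{mj}+\nu_{mj})b_{j}$. Collecting the three terms reproduces the denominator of~\eqref{eq: SINR_s}.

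The main obstacle is the bookkeeping of the independence structure so that all cross terms legitimately vanish — in particular, justifying the i.i.d.\ residual \gls{inai}/clutter model that collapses $\E{\tilde{\mG}_{mj}^{H}\tilde{\mG}_{mj}}$ to $N\zeta_{mj}\vI_{N}$, and correctly composing the two layers of conditioning (first over the channels, then over the \gls{dl} symbols) so that the nested expectations combine into $b_{j}$. Once these independence arguments are in place, the remaining manipulations are routine.
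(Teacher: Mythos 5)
Your proposal is correct and follows essentially the same route as the paper's proof in the appendix: the paper likewise evaluates numerator and denominator separately, using $\Elr{\bgamma\bgamma^{H}}=\bSigma_{\bgamma}$ (diagonal) and $\Elr{\vx_{\mtd,j}[\tau]\vx_{\mtd,j}^{H}[\tau]}=\Ed\vP_{j}\bPi_{j}\vP_{j}^{H}$ inside trace identities, which is just a more compact packaging of your term-by-term cross-term-cancellation argument. The independence bookkeeping you flag as the main obstacle is handled in the paper exactly as you propose, via the zero-mean i.i.d.\ residual \gls{inai}/clutter model of~\cref{app: noise covariance}.
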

\begin{proof}
See~\cref{app: sensing_SINR}.
\end{proof}
   We note that in \gls{dtdd}, $\frac{T}{T_{\tt Ch.}}=1$ because the entire resource block can be used for sensing. While for \gls{tdd} only a fraction of $T_{\tt Ch.}$ can be used for sensing, based on the underlying \gls{ul}-\gls{dl} frame allocation. Thus, although in \gls{tdd} multi-static sensing, the overall sensing interference is less compared to \gls{dtdd}, the latter allows the use of an entire coherence block for sensing. This, in turn, results in superior detection, \edit{\gls{scnr}}, and, needless to say, more efficient spectral reuse. It is clear from~\eqref{eq: SINR_s}, that $\edit{\SCNRs}$ depends on the number of \gls{ul} and \gls{dl} \glspl{ap}, and also the location of the scheduled \glspl{ap}. Thus, it is pertinent to develop a scheduling algorithm that balances the communication and the sensing performance, which is presented in the next section.
\begin{rem}\label{rem: dist_SE}
\edit{\textbf{Distributed \gls{se}:} Our framework can also be extended to a distributed data processing architecture similar to~\cite{making_cell_free_TWC}; keeping in mind that, in \gls{ul}, the covariance of the interference at the $m$th \gls{ap} needs to be computed as:
\begin{align*}
\bSigma_{m, \mtc}^{\mathtt {dist.}}&={\Ed}{\sum\nolimits_{j\in\Ad}\sigma_{mj}\dot{\mR}_{mj}\vP_{j}{\bPi}_{j}\vP_{j}^{H}\dot{\mR}_{mj}^{H}}\\&+\bpr{\Nvar+\sum\nolimits_{j\in\Ad} b_{j}\bpr{\zeta_{mj}+\nu_{mj}}}\vI_{N}.
\end{align*}
With $\bSigma_{m, \mtc}^{\mathtt {dist.}}$, local \gls{mmse} combiner can be applied at each \gls{ul} \glspl{ap}, which for the $k$th stream can be expressed as:
\begin{align*}
 {\Eu{k}}\bpr{\sum\nolimits_{k'\in\Uu\backslash\{k\}}\Eu{k'}\vh_{mk'}\vh_{mk'}^{H}+\bSigma_{m, \mtc}^{\mathtt {dist.}}}^{-1}\vh_{k}.
 \end{align*}
Then, the \gls{ul} \gls{se} can be derived by a joint soft symbol decoding at the \gls{cpu} as outlined in~\cite[see equation (23)]{making_cell_free_TWC} in a straightforward manner. The \gls{dl} \gls{se} with distributed precoding follows similar to~\cite{Anubhab_Chandra_TCOM_22}, keeping in mind that the interference due to the target signal needs to be incorporated. 
}
\end{rem}
\section{AP-Scheduling}\label{sec: Scheduling}
Our analyses in the preceding sections reveal that the performance of the overall system is contingent on the underlying scheduled set of \gls{ul} and \gls{dl} \glspl{ap}. Intuitively, for communication, if the number of \gls{ul} users is more than the number of \gls{dl} users, it is pertinent to schedule more \glspl{ap} in \gls{ul} mode. However, as target detection is integrated, we need to ensure \emph{at least} one transmit and one receive \gls{ap} near the sensing area to procure measurement vectors with sufficient strength. An exhaustive search is of complexity $\bigO{2^{M}}$, and hence prohibitive. We thus propose a low-complexity approach.  

\subsubsection*{Proposed Scheme}
Let $N_{m,\mtu}$ and $N_{m,\mtd}$ be the number of \gls{ul} and \gls{dl} users within $r_{\mto}$ radius of the $m$th unscheduled \gls{ap}, respectively. Effectively, $N_{m,\mtu}$ and $N_{m,\mtd}$ represent the \gls{ul} and \gls{dl} traffic load in the vicinity~(quantified by  $r_{\mto}$) of the $m$th \gls{ap}. Now, we note that if $r_{\mto}$ is too small compared to the total area, then it can result in $N_{m,\mtu}=N_{m,\mtd}=0$. Contrarily, a large value of $r_{\mto}$ can lead to $N_{m,\mtu}$ and $N_{m,\mtd}$ being the same for several \glspl{ap}, making the scheduling process unresolvable. Thus, choosing $r_{\mto}$ is critical, and we define it as $r_{\mto}\triangleq\max\left\{\max\limits_{k\in\mathcal{U}} d_{m_{k}k}, d_{\mathsf{SNR}_{\mathsf{o}}}\right\}$,
where $m_{k}$ is the \gls{ap} index closest to the $k$th user, $d_{m_{k}k}$ is the distance between the $m_{k}$th \gls{ap} and $k$th user; and  $d_{\mathsf{SNR}_{\mathsf{o}}}$ is the distance from any user where the received \gls{snr} is at least $\mathsf{SNR}_{\mathsf{o}}$. This choice of $r_{\mathsf{o}}$ ensures that if \gls{ap}-user densities are sparse, every user has been considered to be within $r_{\mto}$ radius of at least one \gls{ap}.  However, in a dense deployment, where $\max_{k\in\mathcal{U}} d_{m_{k}k}<d_{\mathsf{SNR}_{\mathsf{o}}}$, $r_{\mto}$ ensures that local traffic loads are uniformly distributed among \glspl{ap}.
Finally, to incorporate the effect of a target on the scheduling, let $N_{\mts, \mtu}$, $N_{\mts, \mtd}$, and  $d_{m\mts}$ denote the number of \gls{ul}, \gls{dl} \glspl{ap} within $r_{\mto}$ radius of the target, and the distance between the $m$th \gls{ap} and the target, respectively.
Based on these, the pseudo-code for AP-scheduling is presented in Algorithm~\ref{algo: Scheduling}. 

\begin{algorithm}[!t]
    \caption{\gls{ul}/\gls{dl} Traffic Based AP-Scheduling}\label{algo: Scheduling}
    \SetAlgoLined
    \DontPrintSemicolon
    \KwInput{\gls{ul}/\gls{dl} user load: $N_{m,\mtu}$ and $N_{m,\mtd}, \forall m\in\A$}
    \KwOutput{Scheduled APs: $\Au, \Ad$ }   
    \KwInt{$\Au\gets\emptyset, \Ad\gets\emptyset$}
    \While  {$m=1:M$} {
             \eIf { $ N_{m,\mtu}>N_{m,\mtd}$}{
            
            $\Au=\Au\cup\{m\}$~{\% \tt Schedule in UL}
            
        }{
            $\Ad=\Ad\cup\{m\}$~{\% \tt Schedule in UL}
        }
    }
   {\% {\tt Target-centric scheduling steps} \%}\linebreak 
    \If{$N_{\mts,\mtu}=0$}
    {{Find: $i^{\mts}=  \mathtt{argmin}_{m} d_{m\mts}$\label{step: algo1}}\linebreak
  {Update: $\Au=\Au\cup\{i^{\mts}\}$ and $\Ad=\Ad \backslash \{i^{\mts}\}$}}
  \If{$N_{\mts,\mtd}=0$}{Repeat steps in~\ref{step: algo1} for \gls{dl}}\label{step: algo2}
\end{algorithm} 

Finally, note that Algorithm~\ref{algo: Scheduling} needs to be computed in the time scale proportional to slow-fading coefficients, which remain constant for several coherence intervals.

\section{Numerical Results}\label{sec: Numerical}
In this section, we provide simulation results to validate our theoretical analyses and quantify the trade-off between sensing and communication within the purview of the proposed \gls{dtdd} \gls{cf} \gls{isac} framework. We consider a $500$ square meter area, with $50\%$ of the users scheduled in the \gls{ul} mode in every slot unless specified otherwise. The bandwidth, carrier frequency, and noise figure are $20$ MHz, $1.9$ GHz, and $9$ dB, respectively~\cite{SST_EGL_SPAWC_24}. \Gls{ul} and \gls{dl} transmit powers are taken as $200$ mWatt and $1$ Watt, respectively. 
\edit{Channels for \gls{ap}-user links and the associated path-loss models for \gls{ap}-user, inter-\gls{ap}, and inter-user follow as per~\cite{cell_free_small_cells, Martin_ICASSP, Anubhab_Chandra_TCOM_22}, which pertain to $3$GPP Urban Micro-Cell models~\cite[Table B.1.2.1-1]{3gpp2010further}. The radar-related parameters follow the Swerling-$1$ model~\cite{Demir_Globecom, richards2010principles}.}  
We take $\sigma_{mj}=\sigma$, measured in dBsm. Other relevant parameters are mentioned along with the figures. We use $5000$ Monte Carlo realizations and an observation window of $100$ slots for all the experiments. Finally, the acronyms ${\tt Cent.}/{\tt Dist.}$ to indicate centralized/distributed \glspl{glrt}. 

\begin{figure*}
\centering
\begin{subfigure}{0.48\linewidth}
\centering
\includegraphics[width=0.9\textwidth]{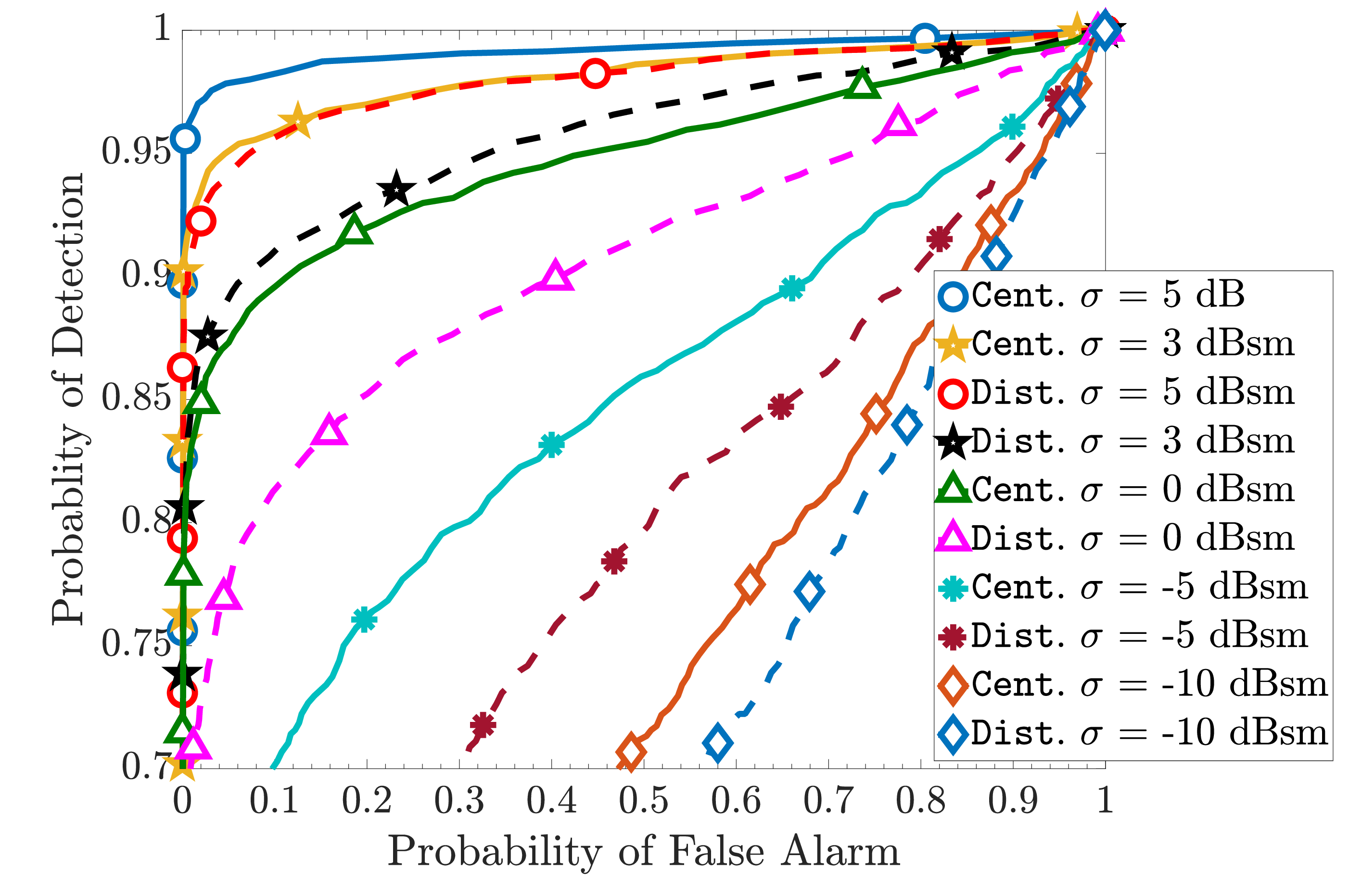}
\caption{Comparison of centralized versus distributed \glspl{glrt}. }\label{fig: DTDD_RoC}
\end{subfigure}\hfill\hfill
\begin{subfigure}{0.48\linewidth}
\centering
\includegraphics[width=0.9\textwidth]{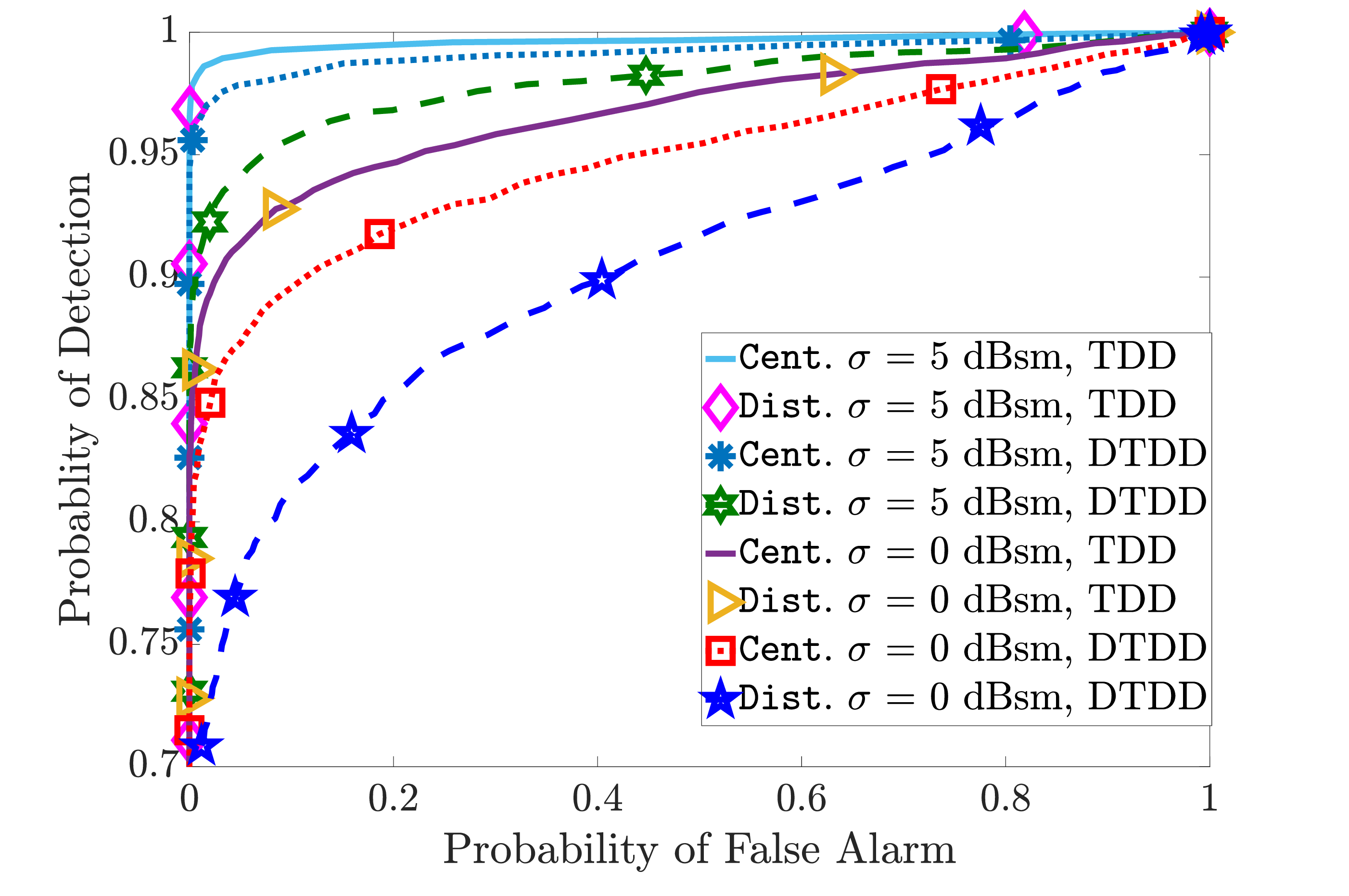}
\caption{Comparison of TDD versus DTDD.}\label{fig: TDD_DTDD_RoC}
\end{subfigure}\hfill\hfill
\caption{\Gls{roc} plots for different ranges of \gls{rcs} variance.($M=8, N=8, K=10$).}
\end{figure*}
\begin{figure}
    \centering    \includegraphics[width=.43\textwidth]{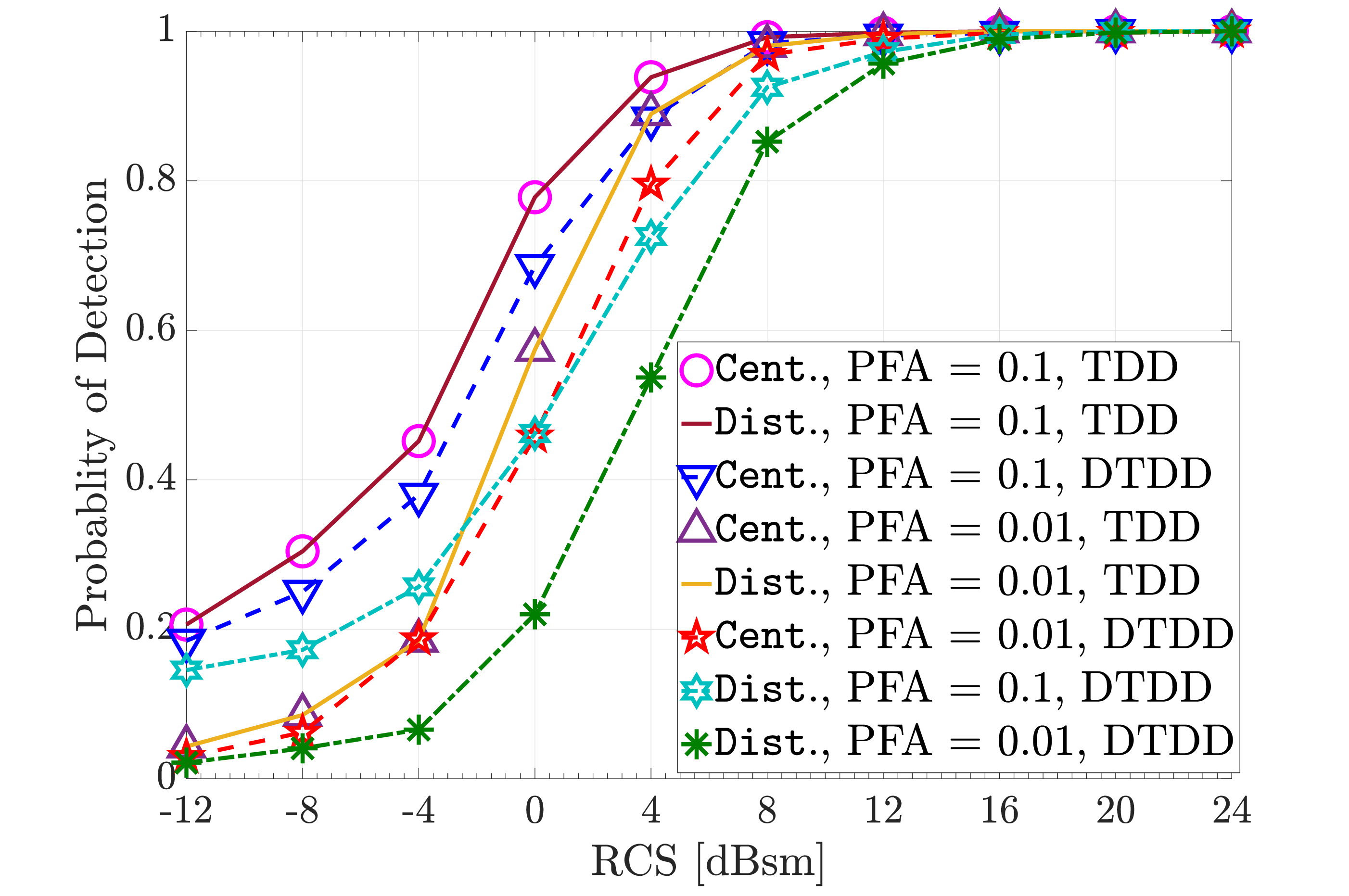}
    \caption{\Gls{pod} versus target \gls{rcs} variance considering, \edit{where we observe \gls{dtdd} procuring \gls{pod} as per \gls{tdd} as \gls{rcs} variance increases}.}
    \label{fig:P_d_P_fa}
\end{figure}

In~\cref{fig: DTDD_RoC}, we compare the performance of fully centralized versus distributed \gls{glrt} as developed in~\cref{prop: GLRT1} and~\cref{thm: GLRT_perfect_CSI}, using the \gls{roc} curves~(i.e., \gls{pod} versus \gls{pfa}). We observe that centralized \gls{glrt} uniformly outperforms distributed \gls{glrt}; which corroborates with~\cref{thm: centralize_vs_distributed}. However, when the variance of the \gls{rcs}increases, the gap in the \gls{roc} curves between centralized and distributed becomes marginal. Next, in~\cref{fig: TDD_DTDD_RoC} and~\cref{fig:P_d_P_fa}, we benchmark the performance of \gls{dtdd}-enabled multi-static sensing with traditional \gls{tdd} based system. In~\cref{fig: TDD_DTDD_RoC}, we observe that for \gls{tdd}, both centralized and distributed \gls{glrt} yield the same \gls{roc} curve, which verifies our claim in~\eqref{eq: special_case}. Further, we note that the improvement in \gls{pod} for \gls{tdd}-based system over \gls{dtdd} is due to the absence of the interference due to \gls{ul} signals in the former. However, when the variance of the \gls{rcs} increases; for instance, $5$ dBsm in~\cref{fig: TDD_DTDD_RoC}, centralized \gls{glrt} renders comparable \gls{roc} as \gls{tdd}. This validates the robustness of the \gls{glrt} to \gls{inai} and interference from the \gls{ul} users. We will later observe the marginal loss in \gls{pod} in \gls{dtdd} is a reasonable trade-off with the substantial (almost double) gain in the sum \gls{ul}-\gls{dl} \gls{se}. A similar observation can also be observed in~\cref{fig:P_d_P_fa}; where we plot \gls{pod} for a detection threshold that yields \gls{pfa} $0.1 / 0.01$. 

\begin{figure}
    \centering    
    \includegraphics[width=\linewidth]{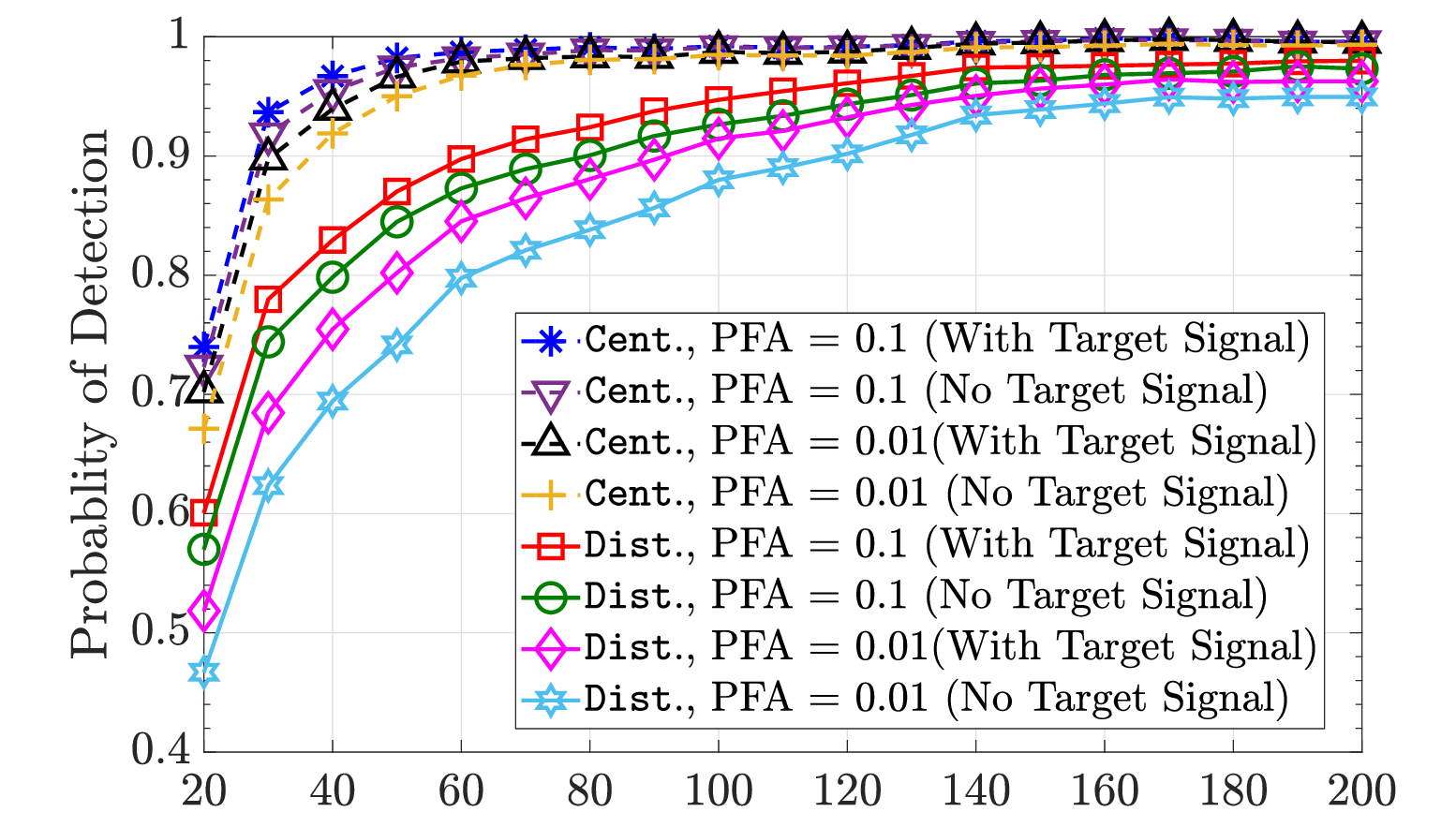}
    \caption{\Gls{pod} versus the observation window measured by the number of symbols transmitted for target detection. ($M = 10, K = 8, \sigma = 10$ dBsm.).}
    \label{fig: PoD_T_edit}
\end{figure}

\cref{fig: PoD_T_edit} illustrates the effect of sensing block duration/observation window~(i.e., $\tau$ in~\eqref{eq: uplink_signal_mthAP} and sequently) on \gls{pod}. We observe that \gls{pod} grows considerably with the increase in the observation window; and becomes stable after $\tau\approx 60$ symbols for the centralized scheme and $\tau\approx 120$ symbols for the distributed scheme. We also observe that improvement in \gls{pod} with a dedicated target signal is more pronounced for the distributed scheme than the centralized one. \edit{Now, a dense urban network yields a coherence time of $2$ ms~(theoretically permitting a mobility
of $142$ km/h) and a coherence bandwidth of $210$ kHz; the coherence interval is of length $420$ samples~\cite{Marzetta_Larsson_Yang_Ngo_2016}. Thus, $7$ hot-spot areas can be covered within a coherence block. Further, in rural areas, the coherence interval is on the order of  $15000$ samples, permitting numerous hot-spot areas.}

\begin{figure}
    \centering    
    \includegraphics[width=0.9\linewidth]{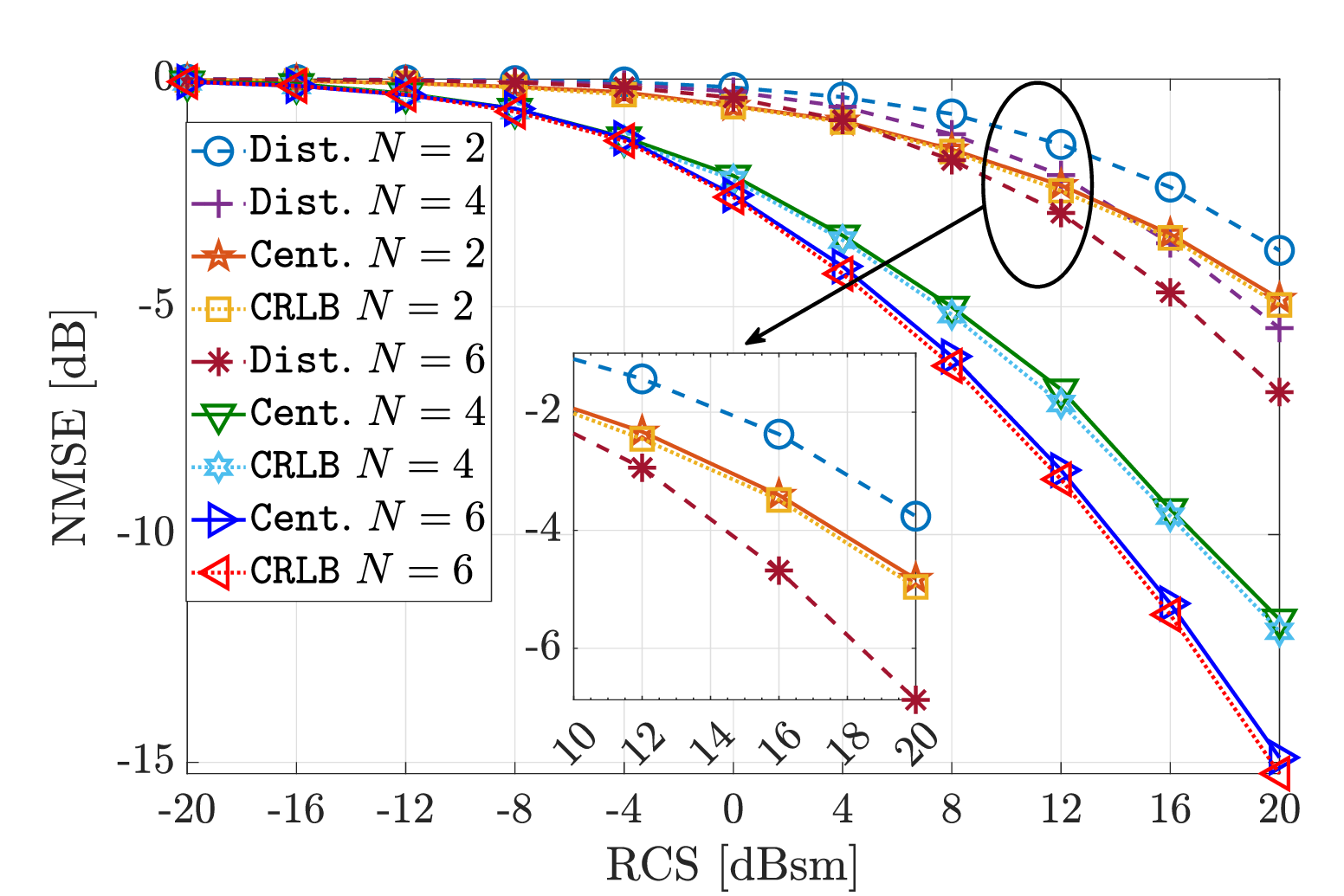}
    \caption{NMSE~(evaluated according to~\cref{prop: GLRT1} for distributed and~\cref{thm: GLRT_perfect_CSI} for the centralized) versus \gls{rcs} variance~($M=8, K=20$).}
    \label{fig: fig_NMSE_versus_RCS}
\end{figure}

\cref{fig: fig_NMSE_versus_RCS}  depicts \gls{nmse} in estimating $\bgamma$ using~\cref{prop: GLRT1} and~\cref{thm: GLRT_perfect_CSI} and compare it with the \gls{bcrlb} derived in~\cref{lemm: BCRLB}. Firstly, we observe that centralized estimation of $\bgamma$ yields an \gls{nmse} that matches with the \gls{bcrlb}. Then, we observe that for a comparatively smaller number of antennas per \gls{ap}, see the curves corresponding to $N=2$, the improvement in \gls{nmse} via the centralized scheme is insignificant over the distributed scheme. On the other hand, as per \gls{ap} antenna increases the performance improves for both centralized and distributed schemes; although the former offers substantially low \gls{nmse}. This is because, for a fixed number of users and $T$, more antennas at the \glspl{ap} procure more measurements; leading to the decrease in \gls{nmse}.

\begin{figure}
    \centering   
    \includegraphics[width=0.9\linewidth]{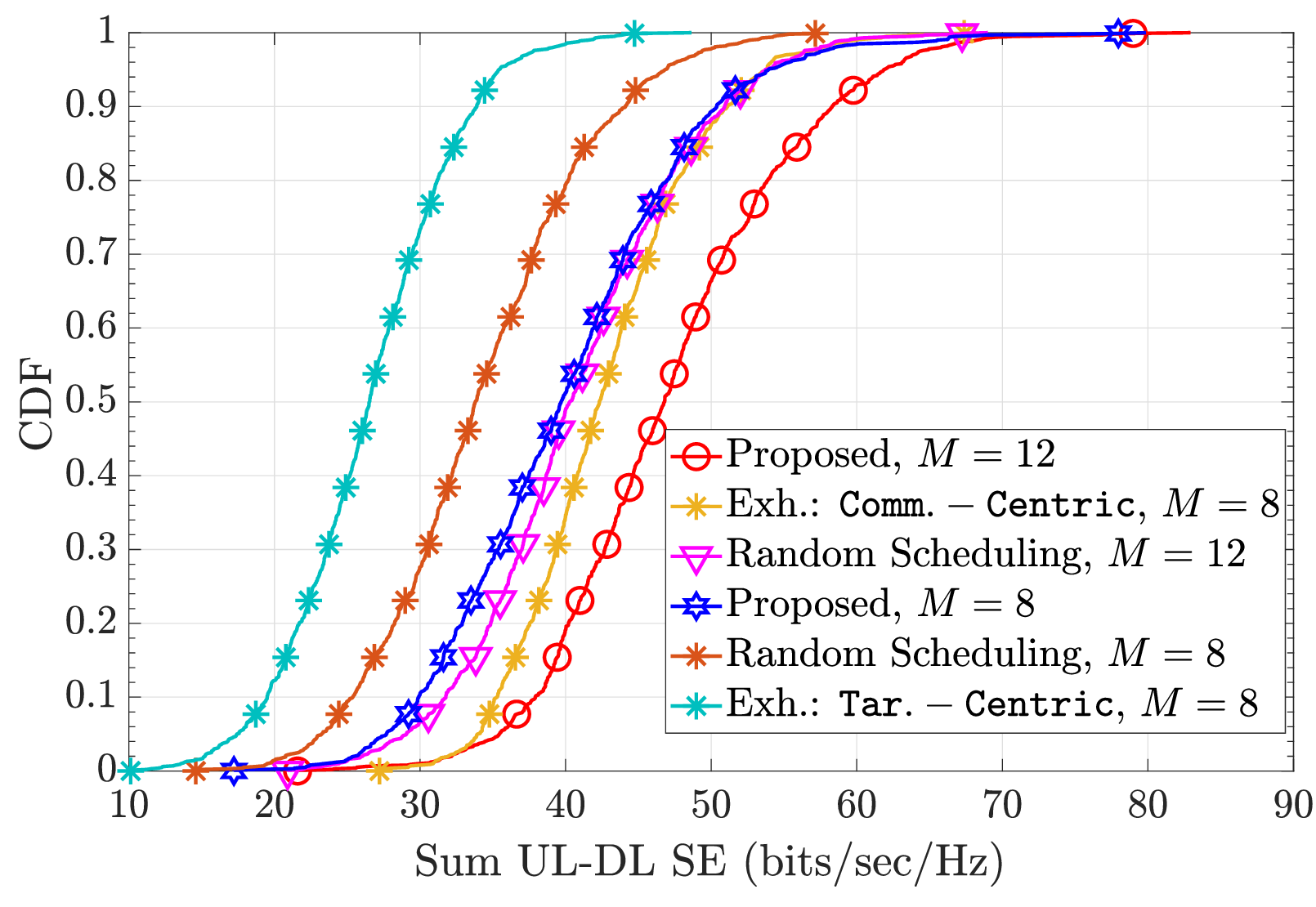}
    \caption{Comparison of the proposed scheduling scheme with exhaustive search. We consider $K = 6$ with $50\%$ of the users having \gls{ul} data demand. }
    \label{fig: scheduling2}
\end{figure}

\begin{figure}
\centering
\includegraphics[width=0.9\linewidth]{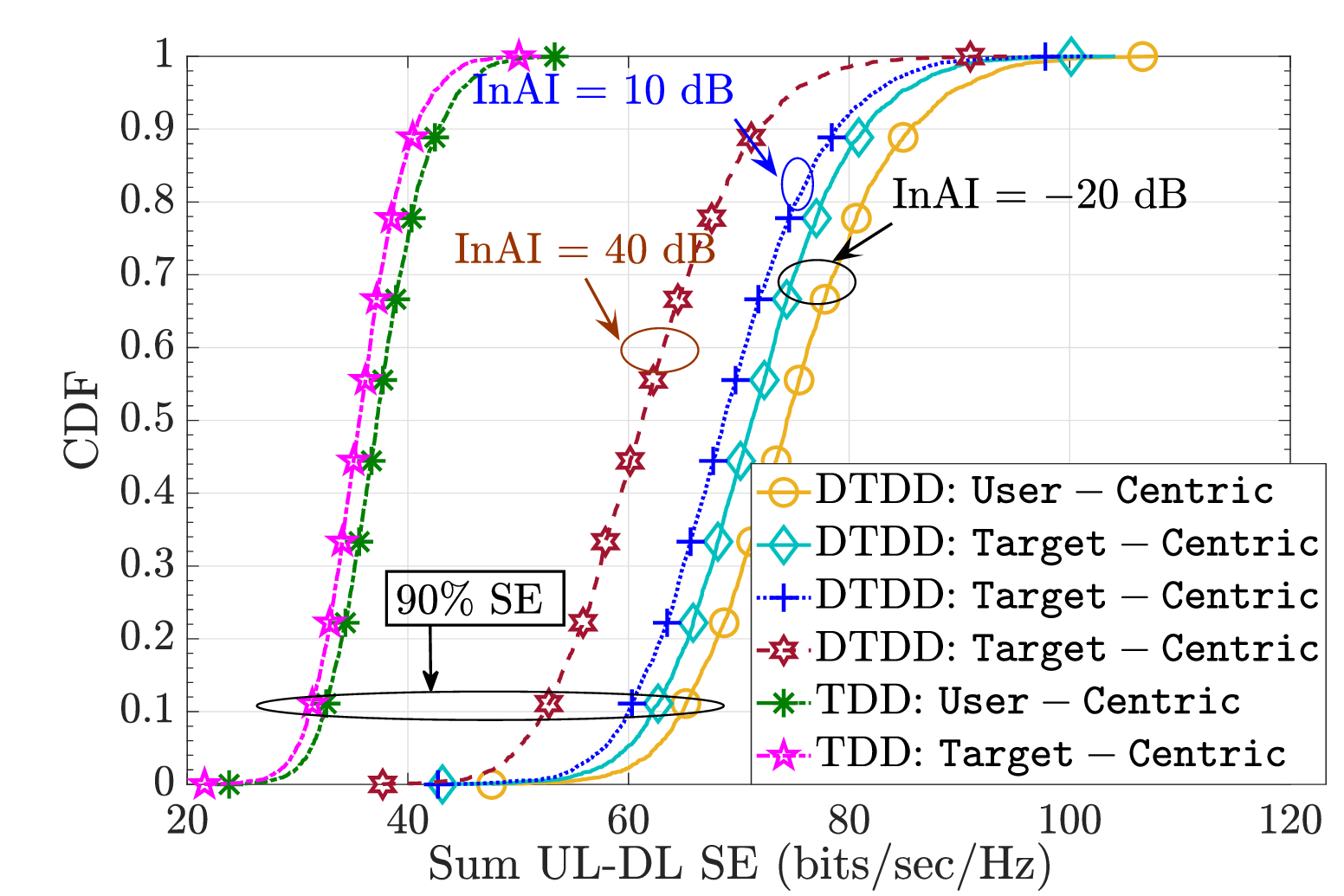}
\caption{\Gls{cdf} of sum \gls{ul}-\gls{dl} \gls{se} evaluated with \gls{dtdd} and \gls{tdd}.}\label{fig: SE1}
\end{figure}

In~\cref{fig: scheduling2}, we demonstrate the efficacy of the performance of the proposed scheduling scheme in contrast with random \gls{ap} scheduling, communication-centric exhaustive search~(Exh.: ${\tt Comm.-Centric.}$), and target-centric exhaustive search~(Exh.: ${\tt Tar.-Centric.}$) via plotting the \gls{cdf} of the sum \gls{ul}-\gls{dl} \gls{se}~(i.e., $\Rcomm$). We observe that the proposed scheme uniformly outperforms random scheduling, and the improvement becomes more pronounced for a large number of \glspl{ap}. For the communication-centric case, we search over all $2^{M}$ \gls{ul}/\gls{dl} configurations and schedule the one that procures maximum communication \gls{se}. On the other hand, for the target-centric case, we schedule the configuration that yields maximum sensing \gls{se}. We can readily observe that the target-centric scheme severely affects $\Rcomm$; while the proposed scheme keeps the balance between these two extremes; thus, it is more suitable for \gls{isac} in conjunction with \gls{dtdd}.

\cref{fig: SE1} demonstrates the performance improvement in terms of sum \gls{ul}-\gls{dl} \gls{se} that can be attained by \gls{dtdd} over \gls{tdd}-based systems, although the former incurs several \glspl{cli}. We observe that \gls{dtdd} almost doubles the $90\%$-likely sum \gls{ul}-\gls{dl} \gls{se} compared to \gls{tdd} (indicated in the figure). Further, the robustness to \gls{cli} is also a consequence of the interference aware \gls{sinr} maximizing combiner in the \gls{ul} and \gls{dl}; as we observe that even with \gls{inai} being $40$ dB above the noise floor; \gls{dtdd} procures superior performance compared to \gls{tdd}. Another critical factor is that \gls{tdd} partitions the coherence duration into \gls{ul} and \gls{dl}; thereby proportionately reducing the pre-log factor; while \gls{dtdd} uses the same time-frequency resources. Further, we observe that target-centric precoding~(indicated as $\tt Target-Centric$) incurs a marginal loss in the overall \gls{se} compared to user-centric precoding~(denoted as $\tt User-Centric$). This is because target-centric precoding in~\eqref{eq: p_s} does not nullify the interference term $\abslr{\vh_{n}^{T}\vp_{\mts}^{*}}^2$  in $\SINRdn, \forall n\in\Ud$.

\begin{figure}
    \centering    \includegraphics[width=0.9\linewidth]{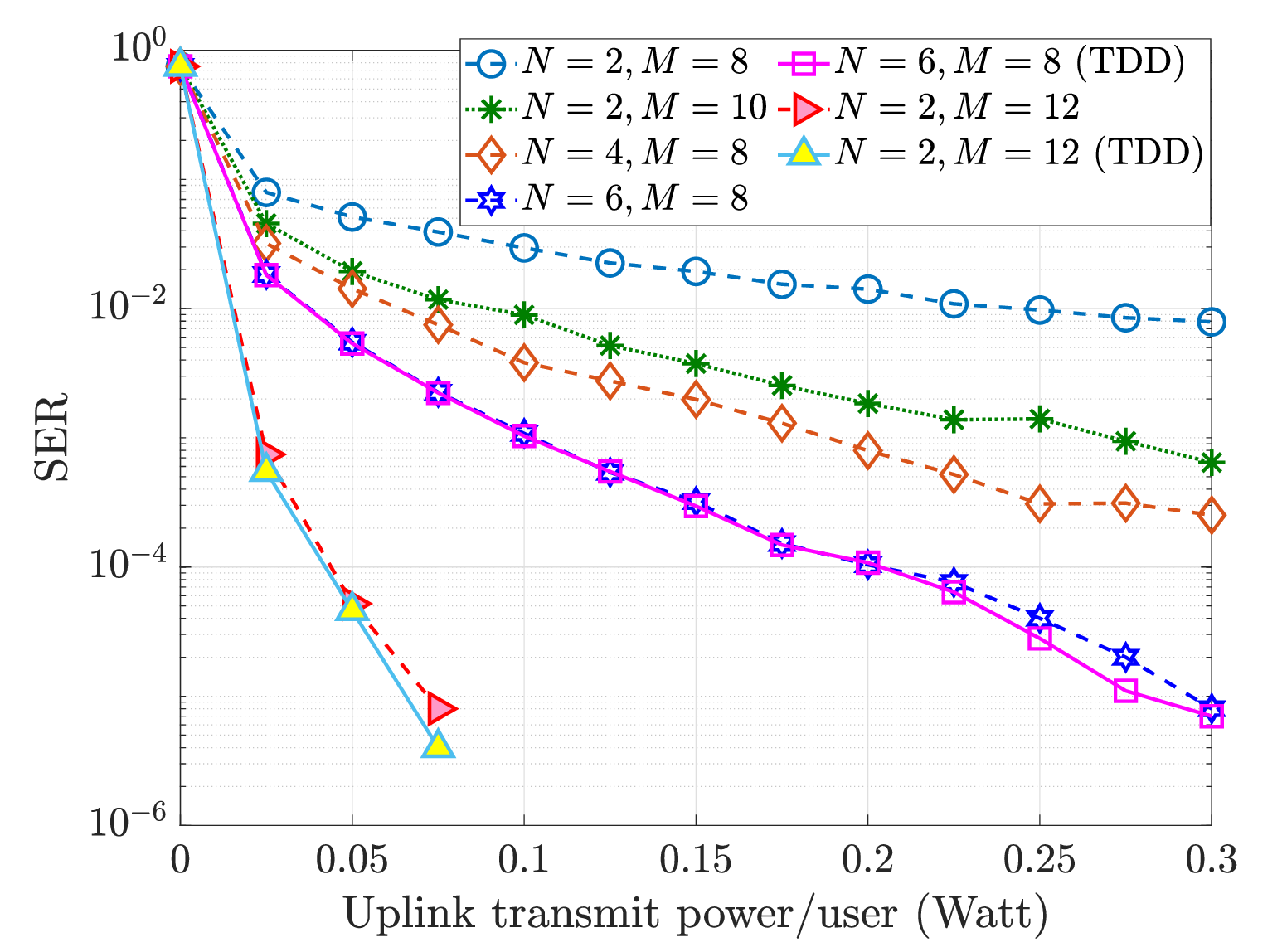}
    \caption{\gls{ul} \gls{ser} versus transmit power per \gls{ul} users. We consider total $10$ \gls{ul} users and radar \gls{rcs} variance is taken as $10$ dBsm.}
    \label{fig: ser_set1}
\end{figure}

In~\cref{fig: ser_set1}, we illustrate the behavior of \gls{ser} in \gls{ul} signal decoding obtained as a part of joint \gls{rcs} estimation and data detection \gls{wrt} per user \gls{ul} transmit power. We observe that for a given number of \glspl{ap}, increasing the number of antennas per \gls{ap} considerably reduces \gls{ser}. On the other hand, we obtain superior performance, \gls{ser} of almost $10^{-5}$ at a $0.05$ Watt transmit power, if we increase the number of \glspl{ap}, instead of per \gls{ap} antennas. This is because of the inherent link reliability obtained by distributing more \glspl{ap} in the region. Further, we observe that \gls{ser} procured for \gls{dtdd} is as good as that of a \gls{tdd} system~(i.e., no \gls{inai} and target echo during \gls{ul} data decoding phase). This shows the robustness of the developed joint decoding scheme to \gls{tri} and \gls{inai}.

\begin{figure}
\centering    \includegraphics[width=0.8\linewidth]{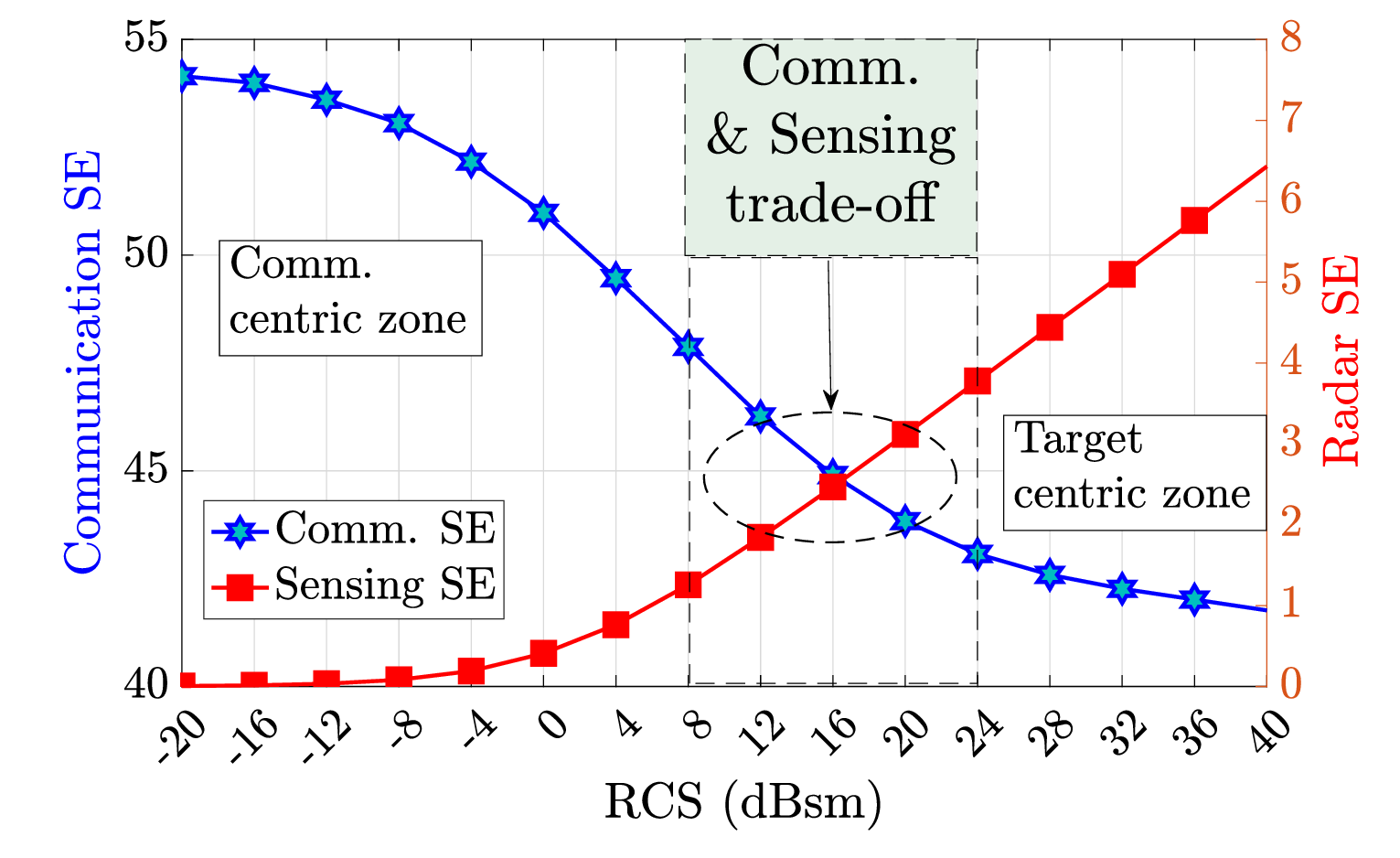}
    \caption{\edit{\Gls{isac}: Trade-off between bi-directional communication and sensing.}}
    \label{fig: fig_comm_versus_target_SE}
\end{figure}

Finally, in~\cref{fig: fig_comm_versus_target_SE}, we illustrate the trade-off between communication and sensing via plotting the respective \glspl{se} \gls{wrt} the radar \gls{rcs}. Here, we note that as the \gls{rcs} variance increases, it negatively affects the communication performance and interference strength due to the increase in received power of the target echo (\gls{tri}). On the other hand, a higher value of \gls{rcs} improves the sensing \gls{se}, as the gain in~\eqref{eq: SINR_s} enhances.

\section{Conclusion}
We proposed a novel \gls{dtdd}-enabled \gls{cf} \gls{isac} framework for target detection and bi-directional~(\gls{ul}-\gls{dl}) communication. 
We developed both centralized and distributed estimators for \gls{rcs} and \gls{glrt} for target detection, encasing the effects of \gls{inai} and interference due to the \gls{ul} users' signals. Derived \gls{bcrlb} underpinned the performances of the \gls{rcs} estimators and highlighted the conditions under which fusing the locally obtained \glspl{llr} at the \gls{cpu} is optimal. We next solved a joint target \gls{rcs} estimation and data \gls{ul} data decoding problem and showed its robustness to the \glspl{cli}. 

Next, we presented the sum \gls{ul}-\gls{dl} \gls{se} for the communication users and showed \gls{dtdd} with the proposed combiner, precoder, and scheduling; which by design are capable of mitigating the \glspl{cli}, offers two-fold improvement in $90\%$-likely \gls{ul}-\gls{dl} \gls{se} compared to \gls{tdd}; balancing the trade-off between efficient resource utilization~(i.e., concurrent \gls{ul}-\gls{dl} communication and sensing) with the additional interference~(viz. \glspl{cli} and \gls{tri}) that arise due to \gls{dtdd} and \gls{isac}. 

\edit{Detection of multiple targets with an extended target model, unknown radar channels~(parameter estimation)}, and \gls{ul}-\gls{dl} power optimization \edit{considering imperfect front-haul links} are directions for future research.

\appendix
\subsection{Derivation of equivalent sensing noise covariance.}\label{app: noise covariance}
We first note that to derive sensing noise covariance, we need to derive the power of residual \gls{inai}, which for the distributed case can be written as shown in~\eqref{eq: conditional_cov}.
\begin{figure*}
\edit{
\begin{align}
&\Elrc{\tilde{\mG}_{mj}\vx_{\mtd, j}[\tau]\vx_{\mtd, j}^{H}[\tau]\tilde{\mG}_{mj}^{H}\Bigg\vert \vx_{\mtd, j}[\tau]}{\bbr{\tilde{\mG}_{mj}}}=\Elrc{\bpr{\sum\limits_{n=1}^{N}\bsr{\tilde{\mG}_{mj}}_{:,n}\bsr{\vx_{\mtd, j}[\tau]}_{n}}\bpr{\sum\limits_{n=1}^{N}\bsr{\tilde{\mG}_{mj}}_{:,n}\bsr{\vx_{\mtd, j}[\tau]}_{n}}^{H}\Bigg\vert \vx_{\mtd, j}[\tau]}{\bbr{\tilde{\mG}_{mj}}}\notag\\&=\sum\limits_{n=1}^{N}\Elrc{\bsr{\tilde{\mG}_{mj}}_{:,n}{\bsr{\tilde{\mG}_{mj}}_{:,n}^{H}}}{\bbr{\tilde{\mG}_{mj}}}\abs{\bsr{\vx_{\mtd, j}[\tau]}_{n}}^{2}=\sum\limits_{n=1}^{N}\zeta_{mj}\abs{\bsr{\vx_{\mtd, j}[\tau]}_{n}}^{2}\vI_{N}=\zeta_{mj}\snorm{\vx_{\mtd, j}[\tau]}\vI_{N}.\label{eq: conditional_cov}
	\end{align}}
	\hrule
	\end{figure*}
In $(a)$, we model the residual \gls{inai} channel as $\bsr{\tilde{\mG}_{mj}}_{m,j}\sim\cn{0,\zeta_{mj}}$~\cite{Bai_Sabharwal_TWC_2017}. Substituting the above in \edit{ the definition $\bSigma_{{\mts, m}}[\tau]\triangleq\Elrc{\vect{w}_{\mtu, m}^{\mts}[\tau]\vect{w}_{\mtu, m}^{\mts H}[\tau] \Big\vert \bbr{\vh_{mk}, \vx_{\mtd, j}[\tau]}}{\vect{w}_{\mtu, m}^{\mts}[\tau]}$, we get the final form of $\bA[\tau]$.} 
\edit{Following similar arguments, we can show $\Elrc{\tilde{\mG}_{j}\vx_{\mtd, j}[\tau]\vx_{\mtd, j}^{H}[\tau]\tilde{\mG}_{j}^{H}\Bigg\vert\vx_{\mtd, j}[\tau]}{\bbr{\tilde{\mG}_{j}}}$ equals $\left(\diag{\zeta_{1j},\ldots,\zeta_{M_{\mtu} j}}\otimes\vI_{N}\right)\sum\nolimits_{n=1}^{N}\abs{\bsr{\vx_{\mtd, j}[\tau]}_{n}}^{2}$ completing the derivation for $\bA_{m}[\tau]$.}\hfill\qed

\subsection{Proof of~\cref{thm: GLRT_perfect_CSI}.}\label{app: GLRT_perfect_CSI}
We first note that $\p{ {\mathbf{r}}_{\mtu}[\tau]\big\vert \bgamma, \Hone}$ is $\cn{\ddot{{\mR}}[\tau]\bgamma, \bSigma_{\mts}[\tau]}$ and $\p{ {\mathbf{r}}_{\mtu}[\tau]\big\vert \Hzero}$ is $\cn{\vZ, \bSigma_{\mts}[\tau]}$. Substituting for the \glspl{pdf}, we derive the \gls{llr}, denoted by $\L$, as shown in~\eqref{eq: L_centralized_expanded}.
    \begin{figure*}
         \begin{align}
\L&=\ln\dfrac{\maximize{\bgamma}~\prod\nolimits_{\tau=1}^{T}\exp{\bbr{-\bpr{\mathbf{r}_{\mtu}[\tau]-\ddot{{\mR}}[\tau]\bgamma}^{H}\bSigma_{\mts}^{-1}[\tau]\bpr{\mathbf{r}_{\mtu}[\tau]-\ddot{{\mR}}[\tau]\bgamma}}}}{\prod\nolimits_{\tau=1}^{T}\exp{\bbr{-\mathbf{r}_{\mtu}^{H}[\tau]\bSigma_{\mts}^{-1}[\tau]\mathbf{r}_{\mtu}[\tau]}}}\notag\\&=\ln\exp\bbr{-\minimize{\bgamma}~\sum\nolimits_{\tau=1}^{T}\bbr{\bgamma^{H}\ddot{{\mR}}^{H}[\tau]\bSigma_{\mts}^{-1}[\tau]\ddot{{\mR}}[\tau]\bgamma-\bgamma^{H}\ddot{{\mR}}^{H}[\tau]\bSigma_{\mts}^{-1}[\tau]\mathbf{r}_{\mtu}[\tau]-\mathbf{r}_{\mtu}^{H}[\tau]\bSigma_{\mts}^{-1}[\tau]\ddot{{\mR}}[\tau]\bgamma}}.\label{eq: L_centralized_expanded}
         \end{align}
         \hrule
         \end{figure*}
     The optimal value of $\gamma$, denoted by $\hat{\bgamma}^{ \ttui}$, is obtained by solving $\frac{\partial }{\partial \bgamma^{H}}f(\bgamma)\vert_{\bgamma=\hat{\bgamma}^{ \ttui}}=0$, with 
     \begin{align}
f(\bgamma)&=\sum\nolimits_{\tau=1}^{T}\Bigg\{\bgamma^{H}\ddot{{\mR}}^{H}[\tau]\bSigma_{\mts}^{-1}[\tau]\ddot{{\mR}}[\tau]\bgamma-\notag\\&\bgamma^{H}\ddot{{\mR}}^{H}[\tau]\bSigma_{\mts}^{-1}[\tau]\mathbf{r}_{\mtu}[\tau]-\mathbf{r}_{\mtu}^{H}[\tau]\bSigma_{\mts}^{-1}[\tau]\ddot{{\mR}}[\tau]\bgamma\Bigg\}.
     \end{align}
     This yields~\eqref{eq: gamma_opt}.
     Then, substituting $\hat{\bgamma}^{ \ttui}$ in \gls{llr} to complete the logarithm, we obtain $\mtT^{ \ttui}$ as given in~\cref{thm: GLRT_perfect_CSI}.\hfill$\qed$

\subsection{Proof of~\cref{thm: centralize_vs_distributed}.}\label{app: centralize_vs_distributed}
We start by expanding  $\vh_{k}\vh_{k}^{H}$ as
\begin{align}
\vh_{k}\vh_{k}^{H}=\blkdm{\vh_{mk}\vh_{mk}^{H}}+\Delta_{\vh_{k}},
\end{align}
where $\Delta_{\vh_{k}}$ has the following structure:
\begin{align*}
\Delta_{\vh_{k}}=\begin{bmatrix}
        \vZ_{N\times N} && \vh_{1k}\vh_{2k}^{H} && \cdots && \vh_{1k}\vh_{\Mu k}^{H}
        \\
        \vh_{2k}\vh_{1k}^{H} && \vZ_{N\times N} && \cdots && \vh_{2k}\vh_{\Mu k}^{H}\\
        \vdots && \ddots && \ddots && \vdots
        \\
          \vh_{\Mu k}\vh_{1k}^{H} && \vh_{\Mu k}\vh_{2k}^{H} && \cdots && \vZ_{N\times N}
    \end{bmatrix}.
\end{align*}
Next, using the matrix inversion lemma,
$\bSigma_{\mts}^{-1}[\tau]$~\eqref{eq: gamma_opt} becomes
\begin{align*}
\bSigma_{\mts}^{-1}[\tau]=\bpr{\bSigma_{\mts}^{\mtD}[\tau]}^{-1}+\Delta_{\mts}[\tau],
\end{align*}
where $\bSigma_{\mts}^{\mtD}[\tau]$ is a block-diagonal matrix having the expression 
\begin{align}
\bSigma_{\mts}^{\mtD}[\tau]=\sum\nolimits_{k\in\Uu}\Eu{k}\blkdm{\vh_{mk}\vh_{mk}^{H}}+\bA[\tau],
\end{align}
and $\Delta_{\mts}[\tau]$ is a non-diagonal matrix, which is
\begin{align*}
\Delta_{\mts}[\tau]=-\bpr{\bSigma_{\mts}^{\mtD}[\tau]}^{-1}\bpr{{\bSigma_{\mts}^{\mtD}[\tau]}\Delta_{\vh}^{-1}+\vI_{N\Mu}}^{-1},
\end{align*}
with $\Delta_{\vh}=\sum\nolimits_{k\in\Uu}{\Eu{k}}\Delta_{\vh_{k}}$. Thus, we have:
\begin{align}
&\left(\sum\limits_{\tau=1}^{T}\ddot{{\mR}}^{H}[\tau]\bSigma_{\mts}^{-1}[\tau]\ddot{{\mR}}[\tau]+\bSigma_{\bgamma}^{-1}\right)^{-1}\notag\\&=\underbrace{\left(\sum\limits_{\tau=1}^{T}\ddot{{\mR}}^{H}[\tau]\bpr{\bSigma_{\mts}^{\mtD}[\tau]}^{-1}\ddot{{\mR}}[\tau]+\bSigma_{\bgamma}^{-1}\right)^{-1}}_{\triangleq \mathbf{T}_{1}}+\Delta_{1}[\tau],\label{eq: mtT1}
\end{align}
where the residual term $\Delta_{1}$ can be expressed as:
    \begin{align}
\Delta_{1}=-\mathbf{T}_{1}\bpr{\mathbf{T}_{1}^{-1}\bpr{\sum\nolimits_{\tau=1}^{T}\ddot{{\mR}}^{H}[\tau]\Delta_{\mts}[\tau]\ddot{{\mR}}[\tau]}^{-1}+\vI_{N\Mu}}.\label{eq: residual1}
    \end{align}
    Similarly, the first/third term in $\mtT^{ \ttui}$ can be written as:
    \begin{align}
{\sum\nolimits_{\tau=1}^{T}\ddot{{\mR}}^{H}[\tau]\bSigma_{\mts}^{-1}[\tau]\mathbf{r}_{\mtu}[\tau]}&=\underbrace{\sum\nolimits_{\tau=1}^{T}\ddot{{\mR}}^{H}[\tau]\bpr{\bSigma_{\mts}^{\mtD}[\tau]}^{-1}\mathbf{r}_{\mtu}[\tau]}_{\triangleq \mathbf{t}_{2}}\notag\\&\hspace{-.7cm}+\underbrace{\sum\nolimits_{\tau=1}^{T}\ddot{{\mR}}^{H}[\tau]\Delta_{\mts}[\tau]\mathbf{r}_{\mtu}[\tau]}_{\triangleq \Delta_{2}}.\label{eq: T_first_term}
    \end{align}
    Combining~\eqref{eq: T_first_term} with~\eqref{eq: residual1} we get the final result.\hfill\qed
\subsection{Proof of~\cref{lemm: BCRLB}.}\label{app: BCRLB}
For \gls{bcrlb}, the \gls{bim} $\mB$ can be expressed as $\mB=\mB_{\tt D}+\mB_{\tt P}$, where $\mB_{\tt D}$ denotes the information matrix due to the data and $\mB_{\tt P}$ indicates the information matrix due to prior~\cite[Chapter $8$]{Trees_Array_Processing}. Next, we can evaluate the $\bpr{i,j}$th element of $\mB_{\tt D}$ and $\mB_{\tt P}$ as $\bsr{\mB_{\tt D}}_{i,j}=-\Elr{\frac{\partial^2}{\partial \bsr{\bgamma}_{i}\partial \bsr{\bgamma}_{j}}\ln\prod\nolimits_{\tau=1}^{T}\p{\mathbf{r}_{\mtu}[\tau]\lvert \bgamma}}$ and $\bsr{\mB_{\tt P}}_{i,j}=-\Elr{\frac{\partial^2}{\partial \bsr{\bgamma}_{i}\partial \bsr{\bgamma}_{j}}\ln\p{\bgamma}}$, with $i,j=1, 2,\ldots, \Mu\Md$ and $\bsr{\bgamma}_{i}$ being the $i$th element of $\bgamma$.
    Now, recall that our observation vector $\mathbf{r}_{\mtu}$ has the conditional \gls{pdf} $\p{\mathbf{r}_{\mtu}[\tau]\lvert \bgamma}=\cn{\ddot{{\mR}}[\tau]\bgamma,\bSigma_{\mts}[\tau]}$, for $\tau=1,2,\ldots, T$; and the prior \gls{pdf} $\p{\bgamma}=\cn{\vZ_{\Mu\Md},\bSigma_{\bgamma}}$. Algebraic evaluation of the above information metrics yields the final result.\hfill \qed

\subsection{Proof of~\cref{lemma: sensing_SINR}.}\label{app: sensing_SINR}
We note that the expectations in $\edit{\SCNRs}$ are taken over the statistics of~\gls{rcs} and also the transmitted \gls{dl} and sensing data $\vx_{\mtd,j}[\tau]$. Thus, the average \edit{sensing \gls{scnr}} is independent of the slot index $\tau$. Next, we can show that
\begin{align}
    &\Elr{\snorm{\ddot{{\mR}}[\tau]\bgamma}}=\tr{\Elr{{\ddot{{\mR}}[\tau]\Elr{\bgamma\bgamma^{H}}\ddot{{\mR}}^{H}[\tau]}}}\notag\\
    &\stackrel{(a)}{=}\tr{\blkdm{\Elr{\ddot{{\mR}}_{m}[\tau]\bsr{\bSigma_{\bgamma}}_{m}\ddot{{\mR}}_{m}^{H}[\tau]}}}
    \notag\\
    &\stackrel{(b)}{=}{\Ed}\tr{\blkdm{\sum\nolimits_{j\in\Ad}\sigma_{mj}\dot{\mR}_{mj}\vP_{j}{\bPi}_{j}\vP_{j}^{H}\dot{\mR}_{mj}^{H}}}\notag\\&={\Ed}\sum\nolimits_{m\in\Mu}\sum\nolimits_{j\in\Md}\sigma_{mj}\tr{\dot{\mR}_{mj}\vP_{j}{\bPi}_{j}\vP_{j}^{H}\dot{\mR}_{mj}^{H}}
    \notag\\&={\Ed}\sum\nolimits_{m\in\Au}\sum\nolimits_{j\in\Ad}\sigma_{mj}\Bigg\{\sum\limits_{n\in\Ud}\pi_{\mtd, jn}\snorm{\dot{\mR}_{mj}\vp_{\mtd, jn}}\notag\\&\qquad\qquad\qquad\qquad+\pi_{\mts, j}\snorm{\dot{\mR}_{mj}\vp_{\mts, j}}\Bigg\},\label{eq: sensing_SINR_1}
\end{align}
where in $(a)$, $\bsr{\bSigma_{\bgamma}}_{m}$ denotes the $\bbr{(m-1)\Md+1:1:m\Md}$th block of entries of the matrix ${\bSigma_{\bgamma}}$. Then, $(b)$ follows using
$\Elr{\vx_{\mtd, j}[\tau]\vx_{\mtd,  j}^{H}[\tau]}={\Ed}\vP_{j}{\bPi}_{j}^{\frac{1}{2}}\Elr{\vs_{\mtd}[\tau]\vs_{\mtd}^{H}[\tau]}\bPi_{j}^{\frac{1}{2}}\vP_{j}^{H},$
with $\Elr{\vs_{\mtd}[\tau]\vs_{\mtd}^{H}[\tau]}=\vI_{\Kd}$. 
Next, the denominator of $\edit{\SCNRs}$ can be evaluated as
$\Elr{\snorm{\vect{w}_{\mtu}^{\mts}[\tau]}}
=\sum\nolimits_{k\in\Uu}\sum\nolimits_{m\in\Au}{\Eu{k}}\snorm{\vh_{mk}}
+\Nvar NM_{\mtu}+N\sum\nolimits_{j\in\Ad}\sum\nolimits_{m\in\Au}\bbr{\zeta_{mj}+\edit{\nu_{mj}}}b_j$\edit{, with $\nu_{mj}$ capturing the effects of clutter~\cite{Jeong_TVT}.} This completes the proof.
\hfill \qed

\ifCLASSOPTIONcaptionsoff
\newpage
\fi
\bibliographystyle{IEEEtran.bst}
\typeout{}
\bibliography{bibfile_ISAC_CF}

\end{document}